\def\anon{0} 
\def\conf{0} 
\def\papertitle{Direct Product Theorems for Randomized Query Complexity} 
\crefname{equation}{}{}
    \let\Cref\crtCref
    \let\cref\crtcref
\newcommand{\lName}{1}
\newcommand{\donothing}[1]{#1}
\newcommand{\JACM}{\if\lName1\donothing{Journal of the {ACM}}\else{JACM}\fi}
\newcommand{\SICOMP}{\if\lName1\donothing{{SIAM} Journal on Computing}\else{SICOMP}\fi}
\newcommand{\ToC}{\if\lName1\donothing{Theory of Computing}\else{ToC}\fi}
\newcommand{\ToCGS}{\if\lName1\donothing{Theory of Computing Graduate Surveys}\else{ToC}\fi}
\newcommand{\TOCT}{\if\lName1\donothing{{ACM} Transactions on Computation Theory}\else{TOCT}\fi}
\newcommand{\ToIT}{\if\lName1\donothing{{IEEE} Transactions on Information Theory}\else{TOCT}\fi}
\newcommand{\CCjournal}{\if\lName1\donothing{Computational Complexity}\else{CC}\fi}
\newcommand{\CJTCS}{\if\lName1\donothing{Chicago Journal of Theoretical Computer Science}\else{CJTCS}\fi}
\newcommand{\TCS}{\if\lName1\donothing{Theoretical Computer Science}\else{TCS}\fi}
\newcommand{\IPL}{\if\lName1\donothing{Information Processing Letters}\else{IPL}\fi}
\newcommand{\JCSS}{\if\lName1\donothing{Journal of Computer and System Sciences}\else{JCSS}\fi}
\newcommand{\RSA}{\if\lName1\donothing{Random Structures and Algorithms}\else{RSA}\fi}
\newcommand{\JCTA}{\if\lName1\donothing{Journal of Combinatorial Theory, Series A}\else{JCTA}\fi}
\newcommand{\JCTB}{\if\lName1\donothing{Journal of Combinatorial Theory, Series B}\else{JCTB}\fi}
\newcommand{\PJM}{\if\lName1\donothing{Pacific Journal of Mathematics}\else{PJM}\fi}
\newcommand{\QICjournal}{\if\lName1\donothing{Quantum Information and Computation}\else{QIC}\fi}
\newcommand{\IJQI}{\if\lName1\donothing{International Journal of Quantum Information}\else{IJQI}\fi}
\newcommand{\PRA}{\if\lName1\donothing{Physical Review A}\else{PRA}\fi}
\newcommand{\PRL}{\if\lName1\donothing{Physical Review Letters}\else{PRL}\fi}
\newcommand{\VLDB}{\if\lName1\donothing{International Journal on Very Large Data Bases}\else{VLDB}\fi}
\newcommand{\SIDMA}{\if\lName1\donothing{{SIAM} Journal on Discrete Mathematics}\else{SIDMA}\fi}
\def\bibnonempty{0}
\let\oldcite\cite
\renewcommand{\cite}[1]{\global\def\bibnonempty{1}\oldcite{#1}}
\newcommand{\eq}[1]{\hyperref[eq:#1]{(\ref*{eq:#1})}}
\renewcommand{\sec}[1]{\hyperref[sec:#1]{Section~\ref*{sec:#1}}}
\newcommand{\thm}[1]{\hyperref[thm:#1]{Theorem~\ref*{thm:#1}}}
\newcommand{\lem}[1]{\hyperref[lem:#1]{Lemma~\ref*{lem:#1}}}
\newcommand{\defn}[1]{\hyperref[def:#1]{Definition~\ref*{def:#1}}}
\newcommand{\prop}[1]{\hyperref[prop:#1]{Proposition~\ref*{prop:#1}}}
\newcommand{\cor}[1]{\hyperref[cor:#1]{Corollary~\ref*{cor:#1}}}
\newcommand{\fig}[1]{\hyperref[fig:#1]{Figure~\ref*{fig:#1}}}
\newcommand{\tab}[1]{\hyperref[tab:#1]{Table~\ref*{tab:#1}}}
\newcommand{\alg}[1]{\hyperref[alg:#1]{Algorithm~\ref*{alg:#1}}}
\newcommand{\app}[1]{\hyperref[app:#1]{Appendix~\ref*{app:#1}}}
\newcommand{\conj}[1]{\hyperref[conj:#1]{Conjecture~\ref*{conj:#1}}}
\newcommand{\chap}[1]{\hyperref[chap:#1]{Chapter~\ref*{chap:#1}}}
\DeclareMathAlphabet{\mathbbold}{U}{bbold}{m}{n}
\DeclareMathOperator{\DS}{DS}
\newcommand\doverline[1]{%
\tikz[baseline=(nodeAnchor.base)]{
    \node[inner sep=0] (nodeAnchor) {$#1$}; 
    \draw[line width=0.1ex,line cap=round] 
        ($(nodeAnchor.north west)+(0.0em,0.2ex)$) -- ($(nodeAnchor.north east)+(0.0em,0.2ex)$) 
        ($(nodeAnchor.north west)+(0.0em,0.5ex)$) -- ($(nodeAnchor.north east)+(0.0em,0.5ex)$);
}}
\DeclareMathOperator*{\E}{\mathrm{E}}
\DeclareMathOperator*{\dom}{\mathrm{dom}}
\DeclareMathOperator*{\supp}{\mathrm{supp}}
\newcommand{\calR}{\mathcal{R}}
\newcommand{\cost}{\mathrm{cost}}
\newcommand{\barcost}{\overline{\cost}}
\newcommand{\scost}{\overline{\mathrm{scost}}}
\newcommand{\sss}{\mathrm{score}}
\newcommand{\bars}{\overline{\sss}}
\newcommand{\ssf}{\sss_f}
\newcommand{\sfmu}{\ssf^\mu}
\newcommand{\barsfmu}{\bars_f^\mu}
\newcommand{\success}{\mathrm{success}}
\newcommand{\barsuccess}{\overline{\success}}
\newcommand{\hel}{\mathrm{hel}}
\newcommand{\Hel}{\mathrm{Hel}}
\newcommand{\barHel}{\overline{\Hel}}
\newcommand{\HelLoss}{{\Hel_\downarrow}}
\newcommand{\barHelLoss}{\overline{\HelLoss}}
\newcommand{\ent}{\mathrm{ent}}
\newcommand{\Ent}{\mathrm{Ent}}
\newcommand{\barEnt}{\overline{\Ent}}
\newcommand{\ds}{\mathrm{ds}}
\newcommand{\maxDS}{\mathrm{DS}}
\newcommand{\D}{\mathrm{D}}
\newcommand{\avgD}{\doverline{\D}}
\newcommand{\R}{\mathrm{R}}
\newcommand{\barR}{\overline{\R}}
\newcommand{\avgR}{\doverline{\R}}
\newcommand{\sR}{\doverline{\mathrm{sR}}}
\newcommand{\LD}{\mathrm{LD}}
\newcommand{\Thr}{\mathrm{Thr}}
\newcommand{\Label}{\mathrm{Label}}
\newcommand{\calX}{\mathcal{X}}
\newtheorem{theorem}{Theorem}
\newtheorem{lemma}[theorem]{Lemma}
\newtheorem{claim}[theorem]{Claim}
\newtheorem{corollary}[theorem]{Corollary}
\newtheorem{proposition}[theorem]{Proposition}
\theoremstyle{definition}
\theoremstyle{plain}
\title{\papertitle}
    \author{Anonymous submission}
        \author{
            \IEEEauthorblockN{Shalev Ben{-}David}
            \IEEEauthorblockA{Institute for Quantum Computing \\
                              University of Waterloo \\
                              \texttt{shalev.b@uwaterloo.ca}}
            \and
            \IEEEauthorblockN{Eric Blais}
            \IEEEauthorblockA{University of Waterloo \\
                              \texttt{eric.blais@uwaterloo.ca}}
        }
        \author{
            Shalev Ben{-}David\\
            \small Institute for Quantum Computing\\
            \small University of Waterloo\\
            \small \texttt{shalev.b@uwaterloo.ca}
        \and
            Eric Blais\\
            \small University of Waterloo\\
            \small \texttt{eric.blais@uwaterloo.ca}
        }
\date{}
\begin{document}

\maketitle

\begin{abstract}
We establish two new direct product theorems for the randomized query complexity of Boolean functions.

The first shows that computing $n$ copies of a function $f$, even with a small
success probability of $\gamma^n$, requires $\Theta(n)$ times the
\emph{maximum distributional} query complexity of $f$ with success parameter $\gamma$. This result holds for all success parameters $\gamma$, even when $\gamma$ is
very close to $1/2$ or to $1$.
As a result, it unifies and generalizes
Drucker's direct product theorem (2012) for $\gamma$ bounded away from $\frac12$ and $1$ as well as
the strong direct sum theorem of Blais and Brody (2019) for $\gamma\approx 1-1/n$.

The second establishes a general \emph{list decoding} direct product theorem that captures many different variants of ``partial computation'' tasks related to the function $f^n$ consisting of $n$ copies of $f$. Notably, our list decoding direct product theorem yields a new threshold direct product theorem and other new variants such as the labelled-threshold direct product theorem.

Both of these direct product theorems are obtained by taking a new approach. Instead of directly analyzing the query complexity of algorithms, we introduce a new measure of complexity of functions that we call \emph{discounted score}. We show that this measure satisfies a number of useful structural properties, including tensorization, that make it particularly suitable for the study of direct product questions.
\end{abstract}

\if\conf0
    \clearpage
    {\small\tableofcontents}
    \clearpage
\fi


\section{Introduction}

The \emph{$n$-fold direct product} of a function $f \colon \{0,1\}^m \to \{0,1\}$ is the function $f^n \colon \{0,1\}^{m \times n} \to \{0,1\}^n$ defined by
\if\conf1
    \[
	   f^n(x^{(1)},\ldots,x^{(n)}) = \big( f(x^{(1)}), \ldots, f(x^{(n)}) \big)
    \]
\else
    \[
	   f^n(x^{(1)},x^{(2)},\ldots,x^{(n)}) = \big( f(x^{(1)}), f(x^{(2)}), \ldots, f(x^{(n)}) \big)
    \]
\fi
for all $x^{(1)},\ldots,x^{(n)} \in \{0,1\}^m$.
How does the complexity of computing $f^n$ relate to the complexity of $f$?
This question has received a lot of attention over the years in multiple different computational models, including in the setting of randomized query complexity~\cite{IRW94,NRS99,Sha03,KSdW07,JKS10,Dru12,BB19}; let us briefly make this question more precise and present the state of the art on it.

Let $\R_\gamma(f)$ denote the \emph{worst-case randomized query complexity} of $f$ for the success probability parameter $\gamma \in (0,1]$.
This complexity measure represents the minimum number of bits of the input that a randomized algorithm must query (in the worst case over both the input and the internal randomness of the algorithm) in order to compute the value of the function on this input correctly with probability at least $\gamma$.
The randomized query complexity of $f^n$ is bounded above by 
\begin{equation} \label{eq:DPT-wcUB}
	\R_{\gamma^n}(f^n) \le n \, \R_\gamma(f)
\end{equation}
since we can always compute $f^n$ with success probability at least $\gamma^n$ by running $n$ independent instances of an algorithm that computes $f$ with success probability at least $\gamma$.
How tight is this na\"ive upper bound?

In a groundbreaking result, Drucker~\cite{Dru12} showed that every function $f$ satisfies
\begin{equation}
\label{eq:drucker}
\R_{\gamma^n}(f^n) \ge \Omega\left( \min\{(\gamma - \tfrac12)^3, 1-\gamma\} \cdot n \, \R_\gamma(f) \right).
\end{equation}
This result shows that the bound~\cref{eq:DPT-wcUB} is asymptotically tight in the \emph{bounded-error regime} where $\gamma$ is bounded away from $\frac12$ and $1$.
But it leaves open the possibility that the same bound can be improved in the \emph{small-error regime} where $\gamma = 1 - o(1)$ and the \emph{small-advantage regime} where $\gamma = \frac12 + o(1)$.
And as it turns out, the bound~\cref{eq:DPT-wcUB} can indeed be strengthened in those regimes.

Let $\barR_\gamma(f)$ denote the \emph{average-case query complexity} of $f$, where the cost of a randomized algorithm on some input $x$ is measured in expectation over the internal randomness of the algorithm.
With standard truncation and success amplification arguments, we can show that when $n$ is large enough,
\begin{equation}
\label{eq:DPT-acUB}
\R_{\gamma^n}(f^n) \le O \left( n \, \barR_\gamma(f) \right).
\end{equation}
Clearly, we always have $\barR_\gamma(f) \le \R_\gamma(f)$ so this bound is asymptotically always at least as strong as~\cref{eq:DPT-wcUB}.
In the bounded-error regime, $\barR_\gamma(f) = \Theta(\R_\gamma(f))$ so the two bounds are asymptotically equivalent.
But in the small-error regime, there are partial and total functions $f$ for which $\barR_\gamma(f) = o(\R_\gamma(f))$ so~\cref{eq:DPT-acUB} is asymptotically stronger in this regime.
And, as a strong direct sum theorem of Blais and Brody~\cite{BB19} proved, that bound is asymptotically optimal in the special case where $\gamma > 1 - 1/n$:
in this setting, every function $f$ satisfies
\begin{equation}
\label{eq:bb}
\R_{\gamma^n}(f^n) \ge \Omega\left( n \, \barR_{\gamma}(f) \right).
\end{equation}
Since this result only applies to the small-error regime, 
it raises the question of whether there can be a single argument that yields a direct product theorem that is asymptotically optimal in both the bounded-error and small-error regimes.
And, more importantly, it leaves open the possibility that~\cref{eq:DPT-acUB} can also be strengthened in the small-advantage regime.

\subsection{An Optimal Direct Product Theorem}

Our first main result is an optimal direct product theorem which shows that~\cref{eq:DPT-acUB} can indeed be further strengthened and gives a tight characterization of $\R_{\gamma^n}(f^n)$ in terms of the complexity of $f$.
Let us first introduce the complexity measure that gives this characterization.

For a distribution $\mu$ on the domain of $f$, define $\avgR^\mu_\gamma(f)$ to be the minimum expected query cost of a randomized algorithm that succeeds with probability at least $\gamma$ when \emph{both} the expected cost and success probability are measured over the internal randomness of the algorithm \emph{and} the choice of the input $x \sim \mu$.
The \emph{maximum distributional randomized query complexity} measure of $f$ is defined to be $\avgR_\gamma(f) = \max_\mu \avgR^\mu_\gamma(f)$.

The maximum distributional complexity measure $\avgR(\cdot)$ was studied in previous work of Vereshchagin~\cite{Ver98}, where he implicitly shows that this measure is asymptotically equivalent to $\barR(\cdot)$ in the small-error regime and to $\R(\cdot)$ in the bounded-error regime.
As we show in \cref{sec:max-dist-relations}, however, the maximum distributional complexity measure can be asymptotically smaller than these other two measures in the small-bias regime.
Also, and most importantly, this is the complexity measure that characterizes $\R_{\gamma^n}(f^n)$:

\begin{theorem}[Optimal Direct Product Theorem]
\label{thm:DPT}
For every (possibly partial) function $f \colon \{0,1\}^m \to \{0,1\}$, success probability $\gamma$ in the range $\frac12 < \gamma < 1$, and parameter $n \ge 1$,
\begin{equation} \label{eqn:DPT-lb}
\R_{\gamma^n}(f^n) = \Omega\left( n \cdot \avgR_\gamma(f) \right).
\end{equation}
This lower bound is best possible when $n$ is large enough: there is a universal constant $c$ such that for every function $f$, success probability $\gamma$ in the range $\frac12 < \gamma < 1$, and parameter $n \ge \frac{c}{(\min\{2\gamma-1, 1-\gamma\})^{2}}$,
\begin{equation} \label{eqn:DPT-ub}
\R_{\gamma^n}(f^n) = O\left( n \cdot \avgR_\gamma(f) \right).
\end{equation}
\end{theorem}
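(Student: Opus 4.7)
The plan for the lower bound is to introduce the ``discounted score'' measure flagged in the abstract as the paper's key technical tool. For any input distribution $\mu$ and randomized algorithm $A$, I would define a score that simultaneously rewards correctness and penalizes query cost via a multiplicative discount factor $\alpha\in(0,1)$; schematically,
\[
    \sfmu(A) \;=\; \E_{x\sim\mu}\bigl[\,\Pr_r[A(x;r)=f(x)]\cdot \alpha^{T(A,x;r)}\,\bigr].
\]
The crucial structural property is that this score \emph{tensorizes} under products: for the $n$-fold function $f^n$ against any product distribution $\mu^n$, every algorithm $A$ should satisfy
\[
    \sfn^{\mu^n}(A) \;\le\; \bigl(\max_B \sfmu(B)\bigr)^{n},
\]
where the maximum on the right is over all single-copy algorithms $B$. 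I would prove this by induction on the query tree of $A$: each query is charged to the coordinate it reads, and at that node one writes the conditional expectation of the remaining score over the freshly-queried coordinate, upper-bounds it by the one-copy score for that coordinate, and leaves an $(n-1)$-fold residual to which the inductive hypothesis applies.

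Next, by Yao's minimax principle I would pick $\mu$ to be a hard distribution witnessing $\avgR_\gamma(f)$ and calibrate the discount to $\alpha\approx\exp(-c/\avgR_\gamma(f))$ for a small absolute constant $c$. The definition of $\avgR_\gamma$ says that no algorithm under $\mu$ simultaneously achieves success probability $\gamma$ and expected query cost substantially below $\avgR_\gamma(f)$, and a short convexity calculation upgrades this into a multiplicative gap of the form $\sfmu(B)\le \gamma\cdot(1-\Omega(1))$ for every algorithm $B$. Combining this with tensorization, any algorithm $A$ for $f^n$ with worst-case cost $T$ and success probability at least $\gamma^n$ must satisfy
\[
    \gamma^n\cdot \alpha^{T} \;\le\; \sfn^{\mu^n}(A) \;\le\; \bigl(\gamma\,(1-\Omega(1))\bigr)^{n},
\]
which rearranges to $T = \Omega(n\cdot \avgR_\gamma(f))$, establishing \cref{eqn:DPT-lb}.

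For the matching upper bound \cref{eqn:DPT-ub} I would start from an algorithm $A^{*}$ whose expected cost against the minimax-optimal single-copy distribution equals $\avgR_\gamma(f)$ and whose success probability is $\gamma$, run it independently on each of the $n$ blocks with a truncated per-block budget of $O(\avgR_\gamma(f))$ queries, and invoke Markov together with a Chernoff-type concentration to argue that both the total cost and the overall success probability concentrate around their means; the quantitative hypothesis $n\ge c/\min\{2\gamma-1,1-\gamma\}^{2}$ is precisely what makes the Chernoff deviation negligible against the multiplicative slack left in $\gamma$. The main obstacle will be the tensorization step, because $A$ may adaptively interleave queries across the $n$ blocks and so the score does not factor in any obvious way; I expect the cleanest route is a leaf-by-leaf decomposition of the transcript tree in which the product structure of $\mu^{n}$ is exploited to integrate out one coordinate at a time while the discount factor accumulates multiplicatively along the root-to-leaf path.
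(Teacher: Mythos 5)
Your overall architecture matches the paper's: a discounted score combining correctness with an exponential penalty in the query count, a tensorization inequality for product distributions, and a calibration step converting the one-copy bound into a lower bound on the $n$-copy worst-case cost. But the calibration step as you state it is wrong, and the error is exactly the one that would collapse the result back to a Drucker-type bound. You claim that for $\alpha\approx\exp(-c/\avgR_\gamma(f))$ with $c$ a universal constant, every single-copy algorithm $B$ satisfies $\sfmu(B)\le\gamma\,(1-\Omega(1))$. This is false in the small-advantage regime: the zero-query guessing algorithm has discounted score $\max\{\mu(f^{-1}(0)),\mu(f^{-1}(1))\}\ge\tfrac12$ for \emph{every} discount factor, so when $\gamma=\tfrac12+\delta$ the multiplicative gap can be at most $O(\delta)$, not $\Omega(1)$. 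More fundamentally, the definition of $\avgR_\gamma^\mu(f)$ constrains only algorithms whose success is at least $\gamma$; algorithms with success slightly below $\gamma$ may have negligible cost, and they dominate the maximum defining the discounted score unless $\alpha$ is tuned in a $\gamma$-dependent (and non-explicit) way. Propagating an $O(\delta)$ gap through your tensorized inequality gives only $T=\Omega(\delta\cdot n\,\avgR_\gamma(f))$. The paper avoids this by \emph{not} fixing $\alpha$ explicitly: a continuity argument (\cref{prop:success-discount}) produces an $\alpha^*$ at which the discounted-score-optimal algorithm $R^*$ has score exactly $\gamma$, and a truncation plus reverse-Jensen argument (using the optimality of $R^*$ to show that truncating at depth $O(1/\alpha^*)$ loses only a constant factor of the expected cost) yields $\maxDS_{\alpha^*}^\mu(f)\le\gamma e^{-\Omega(\alpha^*\sR_\gamma^\mu(f))}$, where $\sR$ is a success-conditioned cost that is then related back to $\avgR$ and $\R$. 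Your ``short convexity calculation'' is in fact the entire content of the paper's Equivalence Lemma and its supporting section.

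Two further gaps are worth flagging. First, your tensorization proof by ``induction on the query tree, integrating out one coordinate at a time'' is under-specified: because the algorithm adaptively interleaves queries to the $n$ blocks and may return to a block repeatedly, the score of a leaf in block $i$ depends on the entire root-to-leaf path and does not peel off one coordinate at a time. The paper instead proves the hard direction by an embedding argument in which the $n-1$ fake inputs are answered on-line from a distribution $\nu$ defined \emph{in terms of the $n$-copy algorithm itself} (its leaf probabilities reweighted by score and discount), followed by an AM-GM step over leaves and the identity $\prod_i\pi_i(\ell)=\nu(\ell)^{n-1}\mu^n(\ell)$; it is not clear your induction can be made to work without rediscovering something of this form. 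Second, for the upper bound, the hard distribution for $\R_{\gamma^n}(f^n)$ obtained from Yao need not be a product distribution, so you cannot run a single algorithm $A^*$ independently on each block against ``the minimax-optimal single-copy distribution''; the paper runs the $i$th algorithm against the conditional marginal given the queries already made, and must first amplify $\avgR_\gamma$ (via a boosting argument that is itself nontrivial, since amplification fails for fixed-distribution measures) to create the slack that the truncation and Azuma steps consume.
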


The bounds in~\cref{thm:DPT} have three different implications for the different regimes:

\begin{description}
	\item[Bounded-error regime:] 
    \if\conf1 \phantom{(Hack fix)} \\ \fi
    In this regime, $\avgR_\gamma(f) = \Theta( \R_\gamma(f) )$ so that \cref{thm:DPT} simplifies to $\R_{\gamma^n}(f^n) = \Theta( n \, \R_\gamma(f) )$, recovering Drucker's direct product theorem~\cite{Dru12}.

	\item[Small-error regime:] 
    \if\conf1 \phantom{(Hack fix)} \\ \fi
    When $\gamma = 1-\epsilon$ for some $\epsilon = o(1)$, $\avgR_{1-\epsilon}(f) = \Theta( \barR_{1-\epsilon}(f) )$ so \cref{thm:DPT} simplifies to $\R_{(1-\epsilon)^n}(f^n) = \Theta( n \, \barR_{1-\epsilon}(f) )$, recovering (and generalizing) the strong direct sum theorem of Blais and Brody~\cite{BB19}.

\item[Small-advantage regime:] 
\if\conf1 \phantom{(Hack fix)} \\ \fi
When $\gamma = \frac{1+\delta}2$ for some $\delta = o(1)$, there are functions $f$ for which $\avgR_\gamma(f) = o( \barR_\gamma(f) )$.
Therefore, the characterization in terms of maximum distributional complexity is necessary for an optimal direct product theorem; no such characterization in terms of average-case query complexity can hold.
\end{description}

We will discuss the maximum distributional complexity measure in more detail in~\cref{sec:max-dist-complexity}.

\subsection{List-Decoding Direct Product Theorem}

\Cref{thm:DPT} characterizes the complexity of computing the function $f^n$ exactly in terms of the complexity of computing $f$.
We can also instead consider many different other hardness amplification questions by requiring algorithms to only ``partially'' compute $f^n$.
Let us mention two such examples:

\begin{description}
\item[Threshold Direct Product:]
\if\conf1 \phantom{(Hack fix)} \\ \fi
Given a function $f^n$ and a parameter $k \le n$, 
what is the complexity of algorithms that compute the value $f(x^{(i)})$ correctly on at least $k$ of the $n$ inputs?
More precisely, define the \emph{$k$-of-$n$ threshold relation} $\Thr^n_k \colon \{0,1\}^n \to \mathcal{P}(\{0,1\}^n)$ to be 
\if\conf1
\begin{align*}
    	\Thr^n_k(y) = \big\{ z &\in \{0,1\}^n : \\
        &|\{i \in [n] : y_i = z_i\}| \ge k \big\}.
\end{align*}
\else
\[
	\Thr^n_k(y) = \left\{ z \in \{0,1\}^n : |\{i \in [n] : y_i = z_i\}| \ge k \right\}.
\]
\fi
What is the worst-case randomized query complexity of $\Thr^n_k \circ f$?

\item[Labelled Threshold Direct Product:]
\if\conf1 \phantom{(Hack fix)} \\ \fi
Given a function $f^n$ and a parameter $k \le n$,
what is the complexity of algorithms that compute the value of $f$ correctly on $k$ of the $n$ inputs \emph{when the algorithms must label the $k$ instances it has solved}?
This task is not equivalent to the threhold direct product question because in this task, no errors are allowed.
We can formalize it by introducing the \emph{$k$-of-$n$ labelled threshold} relation $\Label^n_k \colon \{0,1\}^n \to \mathcal{P}(\{0,1,*\}^n)$ where
\if\conf1
\begin{align*}
    \Label^n_k(y) = \big\{ &z \in \{0,1,*\}^n : \\
    &|\{i \in [n]: z_i \neq *\}| = k \ \wedge \\
    &\forall i \in [n] (z_i \neq *) \Rightarrow z_i = y_i \big\}.
\end{align*}
\else
\[
	\Label^n_k(y) = \left\{ z \in \{0,1,*\}^n : |\{i \in [n]: z_i \neq *\}| = k \wedge \forall i \in [n] (z_i \neq *) \Rightarrow z_i = y_i \right\}.
\]
\fi
What is the worst-case randomized query complexity of $\Label^n_k \circ f$?
\end{description}

The threshold direct product question has been studied in query complexity and other computational models~\cite{IK10,Ung09,Dru12}.
Meanwhile, the labelled threshold direct product has received much less attention so far, in part because analyzing the benefit that an algorithm receives from the ability to choose which instance to solve appears to require very different techniques than the ones used in the threshold direct product question.
In fact, even the special case of the labelled threshold direct product question where $k=1$ is quite interesting; it corresponds to the Choice Problem~\cite{BBKW14,MT18} and it is closely related to the Correlated Samples problem introduced by Bassilakis et al.~\cite{BDG+20}.

These two variants of the threshold direct product relation and other natural ``partial computation'' questions are captured by a general direct product question related to the list-decoding task first considered in the context of error-correcting codes.
For any parameters $n$ and $\ell \le n$, define the \emph{list-decoding} relation $\LD^n_\ell \colon \{0,1\}^n \to \mathcal{P}( \binom{\{0,1\}^n}{2^{n-\ell}} )$ to be 
\[
	\LD_\ell^n(y) = \{ S \subseteq \{0,1\}^n : |S| = 2^{n-\ell} \wedge y \in S\}.
\]
An algorithm correctly computes $\LD_\ell^n \circ f$ on some input $x \in \{0,1\}^{m \times n}$ when it outputs a list of $2^{n-\ell}$ vectors in $\{0,1\}^n$ that includes the vector $(f(x^{(1)}),\ldots,f(x^{(n)}))$.

The randomized query complexity of $\LD^n_\ell \circ f$ is bounded above by
\[
\R_{\gamma^\ell}( \LD^n_{\ell} \circ f ) \le \ell \, \R_\gamma(f)
\]
since we can always compute the value of the first $\ell$ instances of $f$ with success probability $\gamma$ each.
We show that this upper bound is asymptotically tight in the bounded-error regime. 
Writing $\R(f) = \R_{\frac23}(f)$ to denote the randomized query complexity of $f$ in the bounded-error regime, we obtain the following lower bound.

\begin{restatable}[List-Decoding Direct Product Theorem]{theorem}{LDDPT}
\label{thm:LD-DPT}
There exists a universal constant $c \in (\frac12, 1)$ such that
for every (possibly partial) function $f \colon \{0,1\}^m \to \{0,1\}$, $n \ge 1$, $\ell \in \{1,2,\ldots,n\}$, and $\gamma \in [c, 1)$,
\[
\R_{\gamma^\ell}\left(\LD_{\ell}^n \circ f\right) = \Omega\left(  \ell \, \R(f) \right).
\]
\end{restatable}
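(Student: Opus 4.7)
The plan is to reduce the list-decoding direct product theorem to \thm{DPT} by transforming an algorithm for $\LD_\ell^n \circ f$ into one for $f^\ell$ via random padding. Given a randomized algorithm $A$ for $\LD_\ell^n \circ f$ using $q$ queries and succeeding with probability $\gamma^\ell$, I aim to construct an algorithm $B$ for $f^\ell$ using at most $q$ queries and achieving success probability $\Omega(\gamma^\ell)$. \thm{DPT} applied to $f^\ell$ would then yield $q = \Omega(\ell \cdot \R(f))$, since in the bounded-error regime $\avgR$ and $\R$ agree up to constants.

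Concretely, on input $x' \in \{0,1\}^{m \times \ell}$, $B$ samples a uniformly random permutation $\pi$ of $[n]$ together with $n - \ell$ padding inputs $x''^{(i)} = x_{b_i}$, where $x_0, x_1$ are precomputed inputs to $f$ with known values and each $b_i \in \{0,1\}$ is uniformly random. These are assembled into $\tilde x$ so that the real inputs occupy positions $\pi(1), \ldots, \pi(\ell)$ and the padding fills the rest. Running $A$ on $\tilde x$ produces a list $S \subseteq \{0,1\}^n$ of size $2^{n-\ell}$, and $B$ computes $S' = \{z \in S : z_{\pi(\ell + i)} = b_i \text{ for all } i \le n - \ell\}$ before outputting $(z_{\pi(1)}, \ldots, z_{\pi(\ell)})$ for a uniformly random $z \in S'$. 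Because the padding inputs are fully known to $B$, all of $A$'s queries to padding-position bits can be answered without using $B$'s query budget, so $B$ issues at most $q$ queries.

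By Jensen's inequality, $B$'s success probability is at least $\gamma^\ell / \E[|S'| \mid y \in S]$, so the crux is to prove $\E[|S'| \mid y \in S] = O(1)$. Heuristically, if $S$ were independent of the random bits $b$, then $\E[|S \cap H_b|]$ would equal $|S| \cdot |H_b| / 2^n = 1$, where $H_b = \{z : z_{\pi(\ell + i)} = b_i\}$ is the padding coset of size $2^\ell$. The main obstacle, which requires the bulk of the technical work, is ruling out adversarial algorithms $A$ that use their queries to align $S$ with $H_b$; I would handle this via a query-charging argument showing that each bit of $b$ that $A$ learns (and thereby uses to bias $S$) must be paid for by a query, keeping $\E[|S'|]$ bounded when $q$ is below the target $\Omega(\ell \R(f))$. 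Once this is established, \thm{DPT} delivers the stated lower bound whenever $\gamma$ is close enough to $1$ and $\ell$ exceeds some absolute constant; the remaining case $\ell = O(1)$ follows from the trivial bound $\R(\LD_\ell^n \circ f) \ge \R(f)$, obtained by fixing all but one position to inputs of known value.
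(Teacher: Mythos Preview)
Your reduction has a genuine gap: the central claim $\E[|S'| \mid y \in S] = O(1)$ is false, and the query-charging idea cannot rescue it. Take the algorithm $A$ that computes $f$ exactly on positions $1,\ldots,\ell$ and outputs the $2^{n-\ell}$ strings consistent with those values. When your $B$ places its $\ell$ real inputs at uniformly random positions, the number $k'$ of padding positions landing in $\{1,\ldots,\ell\}$ is hypergeometric with mean $\ell(n-\ell)/n$, and a direct count gives $|S'| = 2^{k'}$. For $\ell = o(n)$ this is roughly $2^{\ell}$, so $B$'s success probability is about $(\gamma/2)^{\ell}$; since $\gamma/2 < 1/2$, applying \thm{DPT} to $f^{\ell}$ yields only $\avgR_{\gamma/2}(f)=0$. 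Your charging argument does not help: the queries through which $A$ learns bits of $b$ are queries to padding positions, which you have explicitly made \emph{free} for $B$, so they cannot be charged to $B$'s budget; charging them to $A$'s budget $q$ only gives $|S'|\le 2^{q}$, which is useless once $q\ge \ell$, far below the target $\Omega(\ell\,\R(f))$. The same obstruction breaks your fallback for $\ell=O(1)$: if you fix all but one position to known values, $A$ can simply place both possible completions into its list (there is room, since $|S|=2^{n-\ell}\ge 2$) and succeed with zero queries.

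The paper takes a completely different route that avoids any padding reduction. It stays with $f^n$ but replaces success probability by the exponential-entropy score $2^{-H(\cdot)}$: \lem{HX-list} shows that any leaf whose list of size $2^{n-\ell}$ contains the true output with probability $2^{-\Theta(\ell)}$ must have posterior entropy at most $n-\Omega(\ell)$, hence entropy score at least $2^{\Omega(\ell)-n}$. Because the tensorization and equivalence lemmas hold for arbitrary bounded score functions, the direct product machinery applies verbatim to this score, giving a lower bound of $\Omega(n)$ times the cost of achieving entropy score $2^{\Omega(\ell/n)-1}$ on a single copy of $f$. A linear amplification lemma for the entropy score (\lem{entropy-lin-amplification}) then converts that per-copy cost to $\Omega((\ell/n)\,\R(f))$, yielding $\Omega(\ell\,\R(f))$ overall. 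The switch to an information-theoretic score is essential here; a direct reduction to $f^{\ell}$ cannot see the difference between an algorithm that concentrates its list near the truth and one that does not.
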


\Cref{thm:LD-DPT} immediately gives a strong threshold direct product theorem.

\begin{corollary}[Threshold Direct Product Theorem]
\label{cor:ThrDPT}
There exists a universal constant $c \in (\frac12, 1)$ such that for every (possibly partial) function $f \colon \{0,1\}^m \to \{0,1\}$, $n \in \mathbb{N}$, $k \in \{\frac n2 + 1,\ldots,n\}$, and $\gamma \in [c,1)$, if we let 
\if\conf1
\begin{align*}
    \tau_{n,k} &= n - \log|\{x \in \{0,1\}^n : \|x\| \le n-k\}| \\
    &= \log \left( \frac{2^n}{\sum_{i=0}^{n-k} \binom{n}{i}} \right),
\end{align*}
\else
\[
\tau_{n,k} = n - \log|\{x \in \{0,1\}^n : \|x\| \le n-k\}| = \log \left( \frac{2^n}{\sum_{i=0}^{n-k} \binom{n}{i}} \right),
\]
\fi
then 
\[
	\R_{\gamma^{\tau_{n,k}}}\left(\mathrm{Thr}^n_{k} \circ f\right) = \Omega \left( \tau_{n,k}\, \R(f) \right).
\]
\end{corollary}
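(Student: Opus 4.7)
The plan is to derive this corollary from \thm{LD-DPT} by reducing threshold to list-decoding via Hamming balls. The key observation is that an output $z \in \{0,1\}^n$ agrees with $y := (f(x^{(1)}),\ldots,f(x^{(n)}))$ on at least $k$ coordinates exactly when $y$ lies in the Hamming ball $B(z, n-k) := \{w \in \{0,1\}^n : \|w \oplus z\| \le n-k\}$. This ball has size $\sum_{i=0}^{n-k}\binom{n}{i} = 2^{n-\tau_{n,k}}$ by the definition of $\tau_{n,k}$, so any solution to $\Thr^n_k \circ f$ can be converted, at no extra query cost, into a short list of candidates that provably contains $y$.

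Concretely, I would set $\ell := \lfloor \tau_{n,k} \rfloor$ and, given a randomized algorithm $A$ for $\Thr^n_k \circ f$ with success probability $\gamma^{\tau_{n,k}}$, build an algorithm $A'$ for $\LD^n_\ell \circ f$ that runs $A$ to obtain $z$ and then outputs $B(z, n-k)$ padded with arbitrary additional strings to size exactly $2^{n-\ell}$. Padding is possible because $|B(z, n-k)| = 2^{n-\tau_{n,k}} \le 2^{n-\ell}$. Since $A'$ uses the same queries as $A$ and succeeds whenever $A$ does, we obtain $\R_{\gamma^{\tau_{n,k}}}(\LD^n_\ell \circ f) \le \R_{\gamma^{\tau_{n,k}}}(\Thr^n_k \circ f)$. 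To match the hypothesis of \thm{LD-DPT}, I would rewrite $\gamma^{\tau_{n,k}} = (\gamma')^\ell$ with $\gamma' := \gamma^{\tau_{n,k}/\ell} \in [\gamma^2, \gamma]$; provided the universal constant $c$ in the corollary is chosen close enough to $1$ that $c^2$ exceeds the constant from \thm{LD-DPT}, $\gamma'$ lies in the admissible range. Then \thm{LD-DPT} yields $\R_{\gamma^{\tau_{n,k}}}(\LD^n_\ell \circ f) = \Omega(\ell\, \R(f))$, and the estimate $\ell \ge \tau_{n,k}/2$ upgrades this to the claimed $\Omega(\tau_{n,k}\, \R(f))$.

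The argument is essentially bookkeeping once \thm{LD-DPT} is in hand; the only subtlety is that $\tau_{n,k}$ need not be an integer. Rounding down is forced on us — it keeps $B(z, n-k)$ inside the $2^{n-\ell}$-sized list output — at the modest price of re-basing the success parameter from $\gamma$ to $\gamma' = \gamma^{\tau_{n,k}/\ell}$. A sanity check is needed to ensure $\ell \ge 1$, which follows from the hypothesis $k \ge n/2+1$: this forces $n-k < n/2$, so $\sum_{i=0}^{n-k}\binom{n}{i} < 2^{n-1}$, so $\tau_{n,k} > 1$. There is no serious obstacle in this reduction; the hardest part is merely lining up the constants so that $\gamma'$ still satisfies the hypothesis of \thm{LD-DPT}.
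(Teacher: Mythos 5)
Your proposal is correct and follows essentially the same route as the paper: both reduce $\Thr^n_k \circ f$ to $\LD^n_\ell \circ f$ by replacing the threshold algorithm's output $z$ with the Hamming ball of radius $n-k$ around it, whose size is exactly $2^{n-\tau_{n,k}}$, and then invoke \cref{thm:LD-DPT}. Your extra bookkeeping for the non-integrality of $\tau_{n,k}$ (rounding to $\ell = \lfloor\tau_{n,k}\rfloor$, padding the list, and re-basing $\gamma$ so it stays above the constant of \cref{thm:LD-DPT}) is a detail the paper's proof elides, and it is handled correctly.
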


In particular, \cref{cor:ThrDPT} implies that for every constant $\delta > 0$, there is a constant $c_2 = c_2(\delta)$ such that for all $\gamma > c_2$, the complexity of computing more than a $\frac12 + \delta$ fraction of the $n$ instances correctly with even exponentially small success probability is bounded below by
\[
	\R_{\gamma^n}\left(\mathrm{Thr}^n_{(\frac12 + \delta)n} \circ f\right) = \Omega \big( n \,\R(f) \big).
\]
This result also follows from Drucker's threshold direct product theorem~\cite{Dru12}.
But \Cref{cor:ThrDPT} also applies to settings where we require the algorithm to compute only ``slightly'' more than $\frac12$ of the instances correctly. 
Namely, it implies that for every $1 \le t \le \sqrt{n}$, there is constant $c_3 = c_3(t,n)$ such that for all $\gamma > c_3$,
\[
	\R_{\gamma^{t^2}}\left(\mathrm{Thr}^n_{\frac n2 + t \sqrt{n}} \circ f\right) = \Omega \big( t^2 \,\R(f) \big).
\]
For example, with $t(n) = \sqrt{\log(n)}$, we obtain the conclusion that even computing $\frac n2 + \sqrt{n \log n}$ of the $n$ instances of $f$ correctly with even just a polynomially-small success probability requires query complexity $\Omega( \log n \cdot \R(f))$.

\Cref{thm:LD-DPT} also immediately gives a direct product theorem for the Labelled Threshold problem.

\begin{corollary}[Labelled Threshold Direct Product Theorem]
\label{cor:LabelDPT}
There exists a universal constant $c \in (\frac12, 1)$ such that for every (possibly partial) function $f \colon \{0,1\}^m \to \{0,1\}$, $n \in \mathbb{N}$, $k \le n$, and $\gamma \in [c,1)$, 
\[
	\R_{\gamma^k}\left(\mathrm{Label}^n_{k} \circ f\right) = \Omega \left( k \, \R(f) \right).
\]
\end{corollary}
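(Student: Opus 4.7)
The plan is to derive this corollary as an essentially immediate reduction to the List-Decoding Direct Product Theorem, taking the list-decoding parameter $\ell = k$. The key observation is that any correct output of the labelled threshold problem naturally encodes a short list that contains the true vector $(f(x^{(1)}),\ldots,f(x^{(n)}))$.

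Concretely, suppose $\mathcal{A}$ is a randomized algorithm that makes $q$ queries and computes $\mathrm{Label}^n_k \circ f$ with success probability $\gamma^k$. On each input, $\mathcal{A}$ outputs some $z \in \{0,1,*\}^n$ with exactly $k$ non-star coordinates, and correctness means $z_i = f(x^{(i)})$ for every $i$ with $z_i \ne *$. I would then build an algorithm $\mathcal{A}'$ for $\LD^n_k \circ f$ that runs $\mathcal{A}$ and, on output $z$, returns the set
\[
	S(z) = \{ y \in \{0,1\}^n : y_i = z_i \text{ for all } i \text{ with } z_i \ne * \}.
\]
Since exactly $n-k$ coordinates of $z$ are $*$, we have $|S(z)| = 2^{n-k}$, which is the required list size for $\LD^n_k$. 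Whenever $\mathcal{A}$ succeeds on an input $x$, the true vector $(f(x^{(1)}),\ldots,f(x^{(n)}))$ agrees with $z$ on all non-star coordinates and therefore lies in $S(z)$; hence $\mathcal{A}'$ solves $\LD^n_k \circ f$ with at least the same success probability $\gamma^k$ and uses the same $q$ queries (producing $S(z)$ requires no additional queries). Applying \thm{LD-DPT} with $\ell = k$ (and adjusting the constant $c$ if necessary so it exceeds the constant supplied by that theorem) gives $q = \Omega(k \, \R(f))$, as required.

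There is no serious obstacle here: once one notices that the labelled output packages directly as a $2^{n-k}$-element candidate set, the rest is bookkeeping. The only mild subtlety is that \thm{LD-DPT} requires $\ell \ge 1$, which is automatic since a meaningful instance of Labelled Threshold has $k \ge 1$; the edge case $k=0$ is trivial because $\mathrm{Label}^n_0$ has the unique correct output $(*,\ldots,*)$ and the claimed bound $\Omega(0 \cdot \R(f)) = 0$ is vacuous.
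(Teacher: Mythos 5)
Your proposal is correct and matches the paper's proof essentially verbatim: both reduce $\mathrm{Label}^n_k \circ f$ to $\LD^n_k \circ f$ by outputting the $2^{n-k}$-element set of completions of the non-star coordinates and then invoking \cref{thm:LD-DPT} with $\ell = k$. The extra remarks about adjusting the constant $c$ and the trivial $k=0$ edge case are harmless bookkeeping not present in (and not needed by) the paper's version.
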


This result shows that computing any constant fraction $\alpha > 0$ of the $n$ instances of $f$ with exponentially small success probability $\gamma^{\alpha n}$ still has cost $\Omega( n \, \R(f) )$. In the other extreme setting, with $k=1$, this result shows that the ``one-of-$n$'' problem where the algorithm is free to pick which of the $n$ instances it wants to solve still requires query complexity $\Omega( \R(f) )$.

\subsection{Discounted Score}

Our direct product theorems \cref{thm:DPT,thm:LD-DPT} are obtained by taking a new approach.
Namely, instead of trying to prove such theorems directly for randomized query complexity, we start by considering a different notion of complexity of functions.

The \emph{discounted score} of the randomized algorithm $R$ with respect to the function $f$ and the distribution $\mu$ on the domain of $f$ is
\[
	\ds_{f,\alpha}^\mu(R) = \E_{\substack{D \sim R \\ x \sim \mu}} \left[ \ssf\big( D(x) \big) \cdot e^{- \alpha \cdot \cost( D(x) )}\right],
\]
where the parameter $\alpha$ is some (tuneable) \emph{discount factor}, $D(x)$ represents the leaf of $D$ reached by input $x$, and for now we can think of  $\ssf\big(D(x)\big)$ as being the indicator function of whether $D$ outputs the correct value $f(x)$ on input $x$.
The \emph{maximum discounted score} of $f$ with respect to $\mu$ is
\[
	\maxDS^\mu_\alpha(f) = \max_R \ds_{f,\alpha}^\mu(R).
\]

Note that the discounted score is fundamentally different from standard measures of complexity in that it integrates the cost and the score at a ``local'' per-input level and maximizes this value over all randomized algorithms.
This is in contrast to the usual measures of complexity where one of the cost or score is fixed as a constraint, and then the complexity measure seeks to optimize the other parameter among all the algorithms that satisfy the constraint.

\paragraph{Tangential remarks.}
The name \emph{discounted score} is chosen to highlight the conceptual similarity between this measure and the notion of discounted rate in economics.
In this setting, the discounted rate quantifies the idea that \$100 now does not have the same value as \$100 received some time in the future (e.g., because \$100 received now could be invested to accrue some interest during the next year).
As such, it is natural that the discount on the value should be exponential in the delay in receiving this value.
Similarly, if we think of the cost of an algorithm as the time we wait before an answer, the discounted score represents the notion that receiving a correct answer to the value of $f(x)$ now is better (or more valuable) than receiving this answer after waiting for a long time.

We do not use the analogy between discounted score and discounted rate in our arguments, but it is interesting to note that previous work on randomized
query complexity have used a ratio between cost and score~\cite{BB23,BBGM22}. 
This approach is equivalent to discounting by
$1/\cost$ instead of $\exp(-\cost)$, which is analogous
to the ``hyperbolic discounting fallacy'' in behavioral economics.
(See, e.g., \cite{Ain12} and the references therein.)
In other words, our results can be interpreted informally as support for the observation that if one uses the economist-endorsed exponential discounting instead of the behaviorally-common hyperbolic discounting, one can get better direct product theorems for randomized query complexity.

The notion of discounted score also appears to be closely related to other well-studied notions in different fields as well. In particular, is also closely related to the \emph{attenuation} of signals in physics. 
Here the notion of cost can be associated with the distance of the leaf to the root of a tree and the discounted score represents the strength of a signal from the leaf at the root of the tree when all edges have the same length. Once again, we do not use the analogy directly in our arguments but perhaps these analogies suggest that further study of discounted score measures may find other useful applications in computer science beyond the ones presented here.

\paragraph{Motivation.}
Our study of the notion of discounted score is not motivated by either of the analogies discussed above but rather by the fact that it arises naturally when we consider direct product questions for randomized algorithms.
Consider the na\"ive upper bound~\cref{eq:DPT-wcUB} for computing $f^n$ by running $n$ independent instances of a randomized algorithm for $f$.
The success probability of the resulting algorithm behaves multiplicatively---it is the $n$th power of the original success probability---whereas the cost of running both algorithms behaves additively---it is only $n$ times the cost of the original algorithm.

The discrepancy between the behavior of the score and cost parameters appears to be problematic in the study of direct product questions.
Intuitively, it seems like the analysis would be much easier in some cases if both parameters behaved identically (i.e., both were multiplicative or both were additive).
But we can make both parameters behave in the same way if we consider the exponent of the cost instead of the cost itself.
(We would also get this identical behavior if we instead considered the logarithm of the score instead of the score itself, though this approach doesn't seem quite as natural when the score is an indicator function for correctness).

\paragraph{Tensorization.}
The intuition that motivated the study of discounted score for direct product questions is justified: the measure of discounted score it is particularly well suited to the study of product functions.
In fact, it satisfies a ``perfect'' direct product theorem property or, to say this slightly differently, it \emph{tensorizes}.

\begin{restatable}[Tensorization Lemma]{lemma}{tensorization}
\label{lem:DPT-maxDS}
For any (possibly partial) function $f \colon \{0,1\}^m \to \{0,1\}$, any distribution $\mu$ on $\{0,1\}^m$, and any parameter $n \ge 1$,
\[
	\maxDS_\alpha^{\mu^n}(f^n) = \maxDS_\alpha^\mu(f)^n.
\]
\end{restatable}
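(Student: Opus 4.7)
The plan is to prove the equality by first generalizing the statement to arbitrary product distributions on the $n$ inputs. For distributions $\nu_1,\ldots,\nu_n$ on $\{0,1\}^m$, set $V(\nu_1,\ldots,\nu_n) = \maxDS_\alpha^{\nu_1\times\cdots\times\nu_n}(f^n)$ and $V_1(\nu) = \maxDS_\alpha^\nu(f)$. I will establish the identity $V(\nu_1,\ldots,\nu_n) = \prod_{i=1}^n V_1(\nu_i)$; the Tensorization Lemma is then obtained by specializing $\nu_i = \mu$ for all $i$. The lower bound $V(\vec\nu) \geq \prod_i V_1(\nu_i)$ is the easy direction: given an optimal tree for $f$ on each $\nu_i$, concatenate the $n$ trees sequentially into a single tree for $f^n$. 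On any input, the cost of the composite tree is the sum of the per-coordinate costs, so $e^{-\alpha\cdot\cost}$ is multiplicative; the score $\ssf^n$ factors as the product of the per-coordinate correctness indicators; and under the product distribution the $n$ coordinates are independent. Hence the expected discounted score factors as $\prod_i V_1(\nu_i)$.

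For the upper bound $V(\vec\nu) \leq \prod_i V_1(\nu_i)$ I will induct on tree depth. First, since $\ds$ is linear in the randomizing distribution, the maximum is attained by a deterministic tree; and trees of unbounded depth can be approximated arbitrarily well by finite-depth ones since the contribution of any leaf at depth $d$ is at most $e^{-\alpha d}$. The base case, a leaf with output $z\in\{0,1\}^n$, gives $\ds(T) = \prod_i \Pr_{\nu_i}[f(x)=z_i] \leq \prod_i V_1(\nu_i)$, since each factor is bounded by the discounted score of the constant-$z_i$ tree for $f$. For the inductive step, let $T$ be a deterministic tree whose root queries some bit $j_0$ of some coordinate $i_0$, WLOG $i_0 = 1$, with subtrees $T_0,T_1$. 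Peeling off the root and using the definition of $\ds$,
\[
\ds^{\vec\nu}(T) \;=\; e^{-\alpha}\sum_{b\in\{0,1\}}\Pr_{\nu_1}\!\left[x_{j_0}=b\right]\cdot\ds^{\nu_1|_{x_{j_0}=b}\,\times\,\nu_2\times\cdots\times\nu_n}(T_b).
\]
The inductive hypothesis applied to each $T_b$ (which has strictly smaller depth) yields $\ds(T_b)\leq V_1(\nu_1|_{x_{j_0}=b})\cdot\prod_{i\geq 2}V_1(\nu_i)$. Pulling the $\prod_{i\geq 2}V_1(\nu_i)$ factor out front leaves the quantity $e^{-\alpha}\sum_b\Pr_{\nu_1}[x_{j_0}=b]\,V_1(\nu_1|_{x_{j_0}=b})$, which is the discounted score under $\nu_1$ of a specific tree for $f$---namely, the one that queries $x_{j_0}$ and then runs the optimal sub-tree in each branch---and is therefore at most $V_1(\nu_1)$. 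Combining these yields $\ds(T)\leq\prod_i V_1(\nu_i)$, as required.

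The step that looks like the main obstacle---bounding $\ds(T)$ by the product of per-coordinate maxima---reduces cleanly to the fact that querying one extra bit cannot increase $V_1$, which is immediate from the definition of a maximum over trees. The real conceptual move is noticing that the induction should be performed one queried bit at a time, over arbitrary non-identical product distributions $\nu_1\times\cdots\times\nu_n$, rather than trying to peel off an entire copy of $f$ at once; an attempt to peel off a full copy stumbles on the coupling between the different coordinates in a general tree. Minor technical items remaining are the justification of the reduction to finite-depth deterministic trees and the verification that the score for $f^n$ decomposes as a product of per-coordinate scores, as it does for the indicator-of-correctness convention adopted in the excerpt.
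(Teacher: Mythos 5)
Your proof is correct, but it takes a genuinely different route from the paper's. For the hard direction $\maxDS_\alpha^{\mu^n}(f^n)\le \maxDS_\alpha^\mu(f)^n$, the paper does not induct on the tree: it runs an embedding/simulation argument, converting an optimal deterministic tree $B$ for $f^n$ into a randomized algorithm for $f$ that places the real input in a uniformly random coordinate $i$ and answers $B$'s queries to the other coordinates from a fake-input distribution $\nu$ generated on the fly; the bound then follows from the AM--GM inequality applied leafwise, the identity $\prod_i\pi_i(\ell)=\nu(\ell)^{n-1}\mu^n(\ell)$, and a specific choice of $\nu$ proportional to $\mu^n(y)\cdot \sss_{f^n}^{\mu^n}(B(y))\,e^{-\alpha\,\cost(B(y))}$. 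Your peeling induction over a single queried bit, with the inductive hypothesis strengthened to arbitrary (non-identical) product distributions $\nu_1\times\cdots\times\nu_n$, avoids both the AM--GM step and the clever choice of $\nu$, and the key inequality does reduce to the observation that prepending one query to optimal conditional subtrees yields a legitimate tree for $f$; the only point worth making explicit is that the score of a leaf is unchanged when you pass from $\nu_1$ to $\nu_1|_{x_{j_0}=b}$, because the leaf event is contained in $\{x_{j_0}=b\}$ and the score depends only on the conditional ratio $\nu_1(\ell_1)/\nu_1(\ell)$ (this holds for the paper's general score measures, not just the correctness indicator). Your argument is more elementary and immediately gives the stronger non-iid statement (and, with no change, the version with different functions in different coordinates), which the paper's symmetrized AM--GM over $i\in[n]$ does not directly yield; the paper's argument, in exchange, exhibits an explicit reduction from solving one copy to solving $n$ copies, which is the kind of construction that can survive in models where a query-by-query induction is unavailable. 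The finite-depth and attainment issues you flag are indeed only technicalities, since trees over $\{0,1\}^{mn}$ may be assumed to never repeat a query.
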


We prove \cref{lem:DPT-maxDS} establishing inequalities in both directions.
The lower bound $\maxDS_\alpha^{\mu^n}(f^n) \ge \maxDS_\alpha^\mu(f)^n$ 
is easily obtained by analyzing the algorithm for $f^n$ constructed by running independent instances of an algorithm $A$ for $f$ with optimal discounted score on each of the $n$ inputs.

The upper bound $\maxDS_\alpha^{\mu^n}(f^n) \le \maxDS_\alpha^\mu(f)^n$ is more intricate. 
It uses an embedding argument, where we use an algorithm $B$ for $f^n$ that has large discounted score to solve $f$ by embedding the input to $f$ as the $i$th instance to $f^n$ for a randomly chosen $i \in [n]$ and simulating $B$ on this embedded input and $n-1$ other ``fake'' inputs.
The embedding argument in the proof of \cref{lem:DPT-maxDS} departs from previous embedding arguments in that the fake inputs are not generated statically (according to $\mu$ or some other cleverly chosen distribution).
Instead, we simulate $B$ on a query-by-query basis.
Whenever $B$ queries a bit of the real input, we make the corresponding query to provide its answer.
And when $B$ queries one of the fake inputs, we generate the answer to this query according to some distribution $\nu$ conditioned on the bits that we have observed so far.
By choosing $\nu$ to be the $\mu^n$ distribution conditioned on $B$ being correct on all inputs, we obtain exactly the desired lower bound.

\subsection{Success-conditioned score}

In order to obtain a direct product theorem from \cref{lem:DPT-maxDS}, we need to relate the discounted score measure to other measures of randomized query complexity.
We do so by going through another variant of randomized query complexity,
which we call \emph{success-conditioned} randomized
query complexity and denote by $\sR_\gamma(f)$. 

Roughly speaking, $\sR_\gamma(f)$ is defined
similarly to the maximum distributional complexity $\avgR(f)$, with the one important difference that the cost of the algorithm is measured \emph{conditioned on the algorithm succeeding}.
We introduce the notion formally in \cref{sec:score-weighted-defns}.
But for now let us instead introduce some of its important properties.
First, it satisfies
$\sR_\gamma(f)=\Theta(\avgR_\gamma(f))$ over all
$\gamma\in(1/2,1)$ whenever $f$ is a Boolean-valued function.
For functions with non-Boolean output alphabets
(such as $f^n$), however, the success-conditioned query complexity is not equivalent to $\avgR(f)$ and has better behavior. 

Intuitively, conditioning on success helps eliminate the following
type of degenerate randomized algorithm: an algorithm
may attempt to solve $f^n$ with small success probability
$\gamma^n$ by querying everything with probability $\gamma^n$,
and querying nothing (guessing randomly instead) with probability
$1-\gamma^n$. Since the input size grows linearly with $n$
and the desired success decreases exponentially with $n$,
the expected cost of this algorithm is less than $1$ for large
$n$ (even though the worst-case cost is very high).
This means that expected cost measures such as $\barR_{\gamma^n}(f^n)$
and $\avgR_{\gamma^n}(f^n)$ are trivial
for all $f$ (but worst-case cost measures like $\R_{\gamma^n}(f^n)$ are not).
Our introduction of $\sR$ as a success-conditioned
cost allows us to give an expected-cost
version of randomized query complexity for which
the direct product problem is nontrivial.

The direct product theorem is obtained from \lem{DPT-maxDS}
by relating 
success-conditioned complexity to both maximum distributional complexity and maximum discounted score in the following ways.

\begin{restatable}[Equivalence Lemma]{lemma}{maxDSsR}
\label{lem:maxDS-sR}
For every (possibly partial) function $f \colon \{0,1\}^m \to \{0,1\}$, every distribution $\mu$ over the domain of $f$, every success parameter $\gamma$, and every discount factor $\alpha \ge 0$,
\[
	\sR_\gamma^\mu(f) \ge \frac1{\alpha} \log \frac{\maxDS_{\alpha}^\mu(f)}{\gamma}. 
\]
Moreover, this lower bound is best possible: for every $f$, $\mu$, and $\gamma$, there exists a discount factor $\alpha^*$ for which
\[
	\sR_\gamma^\mu(f) = O\left( \frac1{\alpha^*} \log \frac{\maxDS_{\alpha^*}^\mu(f)}{\gamma} \right).
\]
\end{restatable}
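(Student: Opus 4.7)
The plan is to prove the lower bound by a one-line convexity argument, and to prove the matching tightness by choosing a discount factor calibrated to the success-conditional cost itself. For the lower bound, fix any randomized algorithm $R$ achieving success probability at least $\gamma$ under $\mu$, and let $c$ denote its success-conditional expected cost. Split the discounted score as
\[
  \ds_{f,\alpha}^\mu(R) \;=\; \Pr[R \text{ succeeds}] \cdot \E\!\left[e^{-\alpha \cost(D(x))} \,\big|\, R \text{ succeeds}\right],
\]
where both the probability and the conditional expectation are taken over $D \sim R$ and $x \sim \mu$. Since $t \mapsto e^{-\alpha t}$ is convex for $\alpha \ge 0$, Jensen's inequality applied to the success-conditional distribution of $\cost$ gives $\E[e^{-\alpha \cost}\mid\text{succ}] \ge e^{-\alpha c}$, so that $\maxDS_\alpha^\mu(f) \ge \ds_{f,\alpha}^\mu(R) \ge \gamma\,e^{-\alpha c}$. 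Taking logarithms and dividing by $\alpha$ rearranges into the claimed lower bound on $\sR_\gamma^\mu(f)$.

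For the tightness direction, apply the Jensen lower bound to an optimal algorithm achieving $c^{*} = \sR_\gamma^\mu(f)$: this already shows that $\maxDS_\alpha^\mu(f) \ge \gamma e^{-\alpha c^{*}}$ for every $\alpha$, so the lower-bound expression is always at most $c^{*}$. The goal is then to exhibit an $\alpha^{*}$ for which this expression is also $\Omega(c^{*})$, which reduces to proving the \emph{upper} bound $\maxDS_{\alpha^{*}}^\mu(f) = O(\gamma)$ at the natural choice $\alpha^{*} = \Theta(1/c^{*})$ (so that $\alpha^{*} c^{*} = \Theta(1)$). I plan to establish this upper bound on $\maxDS_{\alpha^{*}}$ by contradiction via a truncate-then-amplify argument: any algorithm $R^{*}$ with discounted score much larger than $\gamma$ at $\alpha^{*}$ can be truncated at a cost threshold of order $\frac{1}{\alpha^{*}}\log(1/\maxDS_{\alpha^{*}})$, using the Markov-type estimate
\[
  \Pr\!\big[\ssf(D(x))=1 \text{ and } \cost(D(x))\le t\big] \;\ge\; \maxDS_{\alpha^{*}}^\mu(f) - e^{-\alpha^{*} t},
\]
to retain a constant fraction of its discounted score at deterministic cost $O(c^{*})$; then independently repeating this truncated algorithm and combining the outputs via a majority or confidence rule would yield an algorithm with success probability $\gamma$ and success-conditional cost $o(c^{*})$, contradicting the definition of $c^{*} = \sR_\gamma^\mu(f)$.

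The main obstacle is the tightness direction. The lower bound is a clean one-line convexity argument, but the upper bound requires quantitatively tight control over two delicate operations: a truncation step that preserves a constant fraction of the discounted score, and an amplification step that does not inflate the success-conditional expected cost by more than a constant factor. The argument has to work uniformly across the full range $\gamma \in (\tfrac{1}{2}, 1)$, which requires adjusting $\alpha^{*}$ and the amplification parameters differently in the small-advantage and small-error regimes: in the former, the trivial guess-at-random algorithm already contributes $\tfrac{1}{2}$ to $\maxDS_{\alpha^{*}}$ for every $\alpha^{*}$, so the calibration must account for this floor; in the latter, many rounds of amplification are needed to drive the success probability up to $\gamma$, and one must verify that the amplification cost scales only logarithmically in $1/(1-\gamma)$ relative to $c^{*}$.
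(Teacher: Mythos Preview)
Your lower bound is correct and matches the paper's argument: both apply Jensen's inequality to the score-weighted (success-conditional) distribution on leaves to obtain $\maxDS_\alpha^\mu(f) \ge \gamma\,e^{-\alpha\,\sR_\gamma^\mu(f)}$. One minor point: your splitting into $\Pr[\text{succ}]\cdot\E[\,\cdot\mid\text{succ}]$ is literally correct only for the $\{0,1\}$-valued success score; the paper phrases the same step via the reweighted leaf distribution $\pi_R(\ell)\propto \mu(\ell)\,\sfmu(\ell)$, which handles arbitrary score functions uniformly. This is cosmetic.

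The tightness direction is where your plan diverges from the paper and has a genuine gap. The paper does \emph{not} set $\alpha^*=\Theta(1/c^*)$. Instead it uses a continuity/intermediate-value argument (their Proposition~\ref{prop:success-discount}) to choose $\alpha^*$ so that the \emph{maximizer} $R^*$ of $\maxDS_{\alpha^*}^\mu(f)$ has score exactly $\gamma$. This calibration is the crucial move: it makes $R^*$ simultaneously feasible for $\sR_\gamma^\mu(f)$ (so $c^*\le\scost_f^\mu(R^*)$) \emph{and} optimal for the discounted score. The latter optimality, compared against truncations of $R^*$ at every depth $\tau$, yields a geometric tail bound $\Pr_{\pi_{R^*}}[\cost>\tau+d]\le\tfrac23\Pr_{\pi_{R^*}}[\cost>\tau]$ for $d=\Theta(1/\alpha^*)$, hence $\E_{\pi_{R^*}}[\min\{\cost,d\}]\ge\tfrac13\E_{\pi_{R^*}}[\cost]$. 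A reverse-Jensen on the bounded interval $[0,d]$ then closes the loop directly, with no amplification whatsoever.

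Your truncate-then-amplify route is not easily salvageable. First, with $\alpha^*=1/c^*$ there is no guarantee that $\maxDS_{\alpha^*}$ lands in a useful range: for small $\alpha^*$ it can sit near $1$ regardless of $\gamma$, so the quantity you are trying to bound can have the wrong sign. Second, and more seriously, the amplification step would have to be uniformly lossless in $\gamma$. In the small-error regime you need $\Theta(\log\tfrac{1}{1-\gamma})$ independent repetitions to reach success $\gamma$, and there is no a priori reason $c^*$ contains a matching factor to absorb this blowup; in the small-advantage regime the $\tfrac12$ floor on the score means a truncated algorithm may have only $o(1)$ advantage, and majority-amplification then costs a polynomial factor in $1/\delta$. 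You flag both issues yourself, but they are not technicalities to be patched later: they are precisely why the paper's approach avoids amplification altogether. The missing idea is the choice of $\alpha^*$ via score-matching, which lets the optimality of $R^*$ for discounted score do all the work.
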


The tensorization and equivalence lemmas immediately give an optimal direct product theorem for maximum score-weighted distributional complexity.

\begin{restatable}[Direct Product Theorem for sR]{theorem}{DPTsR}
\label{thm:DPT-sR}
For every (possibly partial) function $f \colon \{0,1\}^m \to \{0,1\}$, every success parameter $\gamma$ in the range $\frac12 \le \gamma \le 1$, and every $n \ge 1$,
\[
\sR_{\gamma^n}(f^n) = \Omega \left( n \cdot \sR_\gamma(f) \right).
\]
\end{restatable}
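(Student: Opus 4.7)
The plan is to combine the Tensorization Lemma and the Equivalence Lemma directly, using the optimal distribution and the optimal discount factor for $f$ and transferring them to $f^n$ via the product distribution.

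First, let $\mu$ be a distribution on $\{0,1\}^m$ achieving (or approaching) the supremum that defines $\sR_\gamma(f) = \sup_\mu \sR_\gamma^\mu(f)$, so that $\sR_\gamma^\mu(f) \ge \sR_\gamma(f) - o(1)$. Next, invoke the tightness direction of \lem{maxDS-sR} to obtain a discount factor $\alpha^\ast = \alpha^\ast(f,\mu,\gamma) \ge 0$ such that
\[
    \sR_\gamma^\mu(f) \;\le\; \frac{C}{\alpha^\ast} \log \frac{\maxDS_{\alpha^\ast}^\mu(f)}{\gamma}
\]
for some absolute constant $C$. This $\alpha^\ast$ will be the discount factor we use throughout the rest of the argument.

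Now transport everything to $f^n$ by choosing the product distribution $\mu^n$ on the domain of $f^n$. Since $\sR_{\gamma^n}(f^n) \ge \sR_{\gamma^n}^{\mu^n}(f^n)$, it suffices to lower bound the latter. By the lower bound direction of \lem{maxDS-sR} applied to $f^n$, $\mu^n$, success parameter $\gamma^n$, and discount $\alpha^\ast$, we get
\[
    \sR_{\gamma^n}^{\mu^n}(f^n) \;\ge\; \frac{1}{\alpha^\ast}\log\frac{\maxDS_{\alpha^\ast}^{\mu^n}(f^n)}{\gamma^n}.
\]
Plugging in the Tensorization Lemma $\maxDS_{\alpha^\ast}^{\mu^n}(f^n) = \maxDS_{\alpha^\ast}^\mu(f)^n$, the right-hand side simplifies to
\[
    \frac{1}{\alpha^\ast}\log\frac{\maxDS_{\alpha^\ast}^\mu(f)^n}{\gamma^n} \;=\; \frac{n}{\alpha^\ast}\log\frac{\maxDS_{\alpha^\ast}^\mu(f)}{\gamma}.
\]

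Finally, chain the inequalities: the displayed quantity is at least $\tfrac{n}{C} \sR_\gamma^\mu(f) \ge \tfrac{n}{C} \sR_\gamma(f) - o(n)$, which is $\Omega(n \cdot \sR_\gamma(f))$ as required. The entire argument is essentially mechanical once the two lemmas are in place; there is no real obstacle at this stage. The one thing to verify carefully is that the two lemmas apply in matching form, namely that the $\alpha^\ast$ chosen to make the upper bound tight for $(f,\mu,\gamma)$ is a legal parameter to feed into the lower-bound direction for $(f^n,\mu^n,\gamma^n)$. Since \lem{maxDS-sR} holds for every $\alpha \ge 0$ and every target function and distribution, this is automatic, so all the conceptual work is already absorbed into the Tensorization and Equivalence Lemmas.
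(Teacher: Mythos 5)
Your proposal is correct and follows essentially the same route as the paper's proof: pick the maximizing distribution $\mu$, take the $\alpha^*$ from the tightness direction of \lem{maxDS-sR}, apply the lower-bound direction of that lemma to $(f^n,\mu^n,\gamma^n)$, and tensorize. The only cosmetic difference is your $\sup$/$o(1)$ hedging, which is unnecessary since the paper defines $\sR_\gamma(f)$ as a max over $\mu$.
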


Note that \cref{thm:DPT-sR} is the only result of its type that we know of where we have the same randomized query complexity measure on both sides of the inequality, perfect product behavior on the success probability parameter, and no loss on the right-hand side that is a function of this parameter. 
As we have already discussed above, it is known that no such theorem can hold for the measures $\R(\cdot)$, $\barR(\cdot)$, or $\avgR(\cdot)$.

\medskip
The lower bound of \Cref{thm:DPT} is obtained as a direct corollary of \cref{thm:DPT-sR} when we observe that worst-case randomized complexity always upper bounds the success-conditioned maximum distributional complexity of any function and that for Boolean-valued functions $f$, the complexity measures $\avgR_\gamma(f)$ and $\sR_\gamma(f)$ can differ by at most a multiplicative factor of 2.

\medskip
To summarize, the measures of randomized query complexity that we have discussed
satisfy
\[\R_\gamma(f)\ge\barR_\gamma(f)\ge
\sR_\gamma(f)\ge \avgR_\gamma(f).\]
For Boolean-valued functions, the right-most two collapse.
If, additionally, $\gamma$ is bounded away from $1/2$,
the rightmost three collapse.
Finally, if $\gamma$
is also bounded away from $1$, then all of these measures coincide. 
A direct product
theorem can have $\sR_\gamma$ on both sides (that is, measuring
both $f^n$ and $f$ with the same notion of query complexity, $\sR$). An even better measure,
which has a perfect direct product theorem, is 
$\DS_\alpha^\mu(f)$,
but it is parametrized by a tricky-to-interpret discount
factor $\alpha$ instead of by the success $\gamma$.

\subsection{Overview of the Other Proofs}

\subsubsection{Direct Product Theorem Upper Bound}

For the upper bound in \cref{thm:DPT}, our goal is to give an algorithm
for computing $n$ copies of $f$ with success probability
$\gamma^n$ using a total worst-case number of queries
of only $O(n\,\avgR_\gamma(f))$. The basic idea is to run
an algorithm for $f$ on each of the $n$ copies, but several
issues arise when we try to implement this idea.

First, the measure $\avgR_\gamma(f)$ does not have a minimax
theorem. 
This means there is no algorithm which achieves such
low query complexity on all distributions.
Instead,
a single algorithm for $f$ can only guarantee that it achieves this expected cost and expected success probability $\gamma$ on a single distribution $\mu$.
To mitigate this limitation, we can use
a minimax theorem for $\R_{\gamma^n}(f^n)$ instead, since it is
simply worst-case query complexity. This gives us a hard
distribution $\mu$ for $f^n$
(which may not be a product distribution).
We then let $\mu_1$ be the marginal distribution of $\mu$
restricted to the first copy of $f$, and run an $\avgR_\gamma(f)$
type algorithm for $f$ with respect to $\mu_1$ on the first copy.
Next, we let $\mu_2$ be the marginal for the second copy,
but conditioned on the bits of the first instance that have been revealed by the algorithm that we ran on
the first copy of $f$. Similarly, $\mu_3$ will be the marginal
of $\mu$ on the third copy of $f$, conditioned on the input bits revealed by the queries of the previous two algorithms.
Since each of these distribution is well-defined, we can find
algorithms $R_1,R_2,\dots,R_n$ for $f$ with respect to these
distributions, such that all of them have success at least $\gamma$
and expected cost at most $\avgR_\gamma(f)$.

Having run these algorithms, our success probability for all copies is $\gamma^n$
(against $\mu$),
as desired. Our query bound is $n\avgR_\gamma(f)$, also as desired.
However, there is another issue: this upper bound on the number
of queries only holds in expectation against $\mu$, rather
than in the worst case. To address this issue, we can impose a cutoff:
terminate the algorithm if the number of queries surpasses
its expectation by a factor of $10$. Such a cutoff will reduce
the success probability, so we want the probability that it happens to
be very low, since we already only have success probability that is exponentially small in $n$. It turns out that when $n$ is large enough, the
runs of the algorithms on different copies of $f$ are sufficiently
independent that the total expected cost has thin tails
(and a standard deviation of $O(\sqrt{n})$). From this we conclude
that the success probability only decreases by the required exponentially small amount.

To finish the argument, we need one more ingredient: success amplification.
We cannot
amplify a generic algorithm for $f^n$: it places probability
less than $1/2$ on the right answer, so even taking a majority vote
might amplify a wrong answer instead. 
So we must amplify
$\avgR_{\gamma}(f)$ and run the whole argument above starting with a
larger value for $\gamma$. 
This approach leads to another complication:
since $\avgR$ is a distributional measure with no minimax theorem,
it is not obvious that its success can be amplified.
In fact, for \emph{fixed} distributional complexity measures, success amplification does not hold.
But, as we show, we do get efficient success amplification for the maximum distributional complexity measure by adapting Schapire's boosting algorithm~\cite{Sch90} to this setting.

\subsubsection{List-Decoding Direct Product Theorem}

For this result, we want to prove a lower bound on the query complexity of an algorithm for $f^n$
which outputs $2^{n-\ell}$ different $n$-bit strings such
that the probability that the correct output string for $f^n$
is in that list is at least $\gamma^\ell$.
Fix a distribution $\mu$ over inputs to $f$, and consider
the behavior of such an algorithm when run on inputs from $\mu^n$.
Since the distribution is fixed,
we can assume that such an algorithm is deterministic,
that is to say, a decision tree.

When the overall success probability of the algorithm is $\gamma^\ell$,
there must be at least a $\gamma^\ell/2$ fraction
of the leaves of the decision tree for which the success probability
(conditioned on reaching this leaf) is at least $\gamma^\ell/2$.
For such a successful leaf $L$, we consider the conditional distribution
of $\mu^n$ conditioned on reaching this leaf. A lemma of Drucker
\cite{Dru12} says that this conditional distribution is still a product
distribution: indeed, it is just the product, over $i\in[n]$,
of the conditional distribution $\mu|_{L_i}$ conditioned on the bits $L_i$ queried
to the $i$-th input on the path to this leaf.

Since $L$ is a successful leaf, we know that there is a list of
$2^{n-\ell}$ strings such that a sample from this product distribution
$\mu|_{L_1} \times \mu|_{L_2} \times \cdots \times \mu|_{L_n}$ is in the list with probability at least
$\gamma^\ell/2$. Without loss of generality, the list in question
can simply be the $2^{n-\ell}$ strings with highest probability
in this product distribution. So a successful leaf places a significant
probability mass on its top $2^{n-\ell}$ most probable strings
(from the posterior distribution $\mu^n|_L$ over $n$-bit strings).
Since the distribution in question is product, this condition can be
translated into an entropy condition: the only way for such high mass
to be placed on so few strings in a product distribution is for the
product distribution to have fairly low entropy: entropy at most
$n-\Omega(\ell)$.

Our situation now is that a decent fraction of the leaves of the decision tree
(at least $\gamma^\ell/2$ of them) are such that conditioned on that leaf,
the posterior distribution over the output string has entropy at most
$n-\Omega(\ell)$.
The next key insight is that we can use the scoring rule $2^{-H(\cdot)}$,
where $H$ is the binary entropy function, to measure the correctness of an algorithm (or a leaf) instead of using success probability. 
The score we give to a leaf $L$ of a decision tree
computing $f^n$ is the product of the scores of the $n$ partial assignments $L_i$
(each with respect to the distribution $\mu$). That is, the score of $L$
is $\prod_{i=1}^n 2^{-H(f(\mu|_{L_i}))}=2^{-H(f^n(\mu^n|_L))}$,
and the score of the decision tree is the expected score of the leaves.

It turns out that \cref{thm:DPT} holds for all scoring functions
which have the multiplicative property described above in which the score of a leaf in a decision
tree for $f^n$ is the product of the scores of the partial assignments $L_i$,
together with a few other simple conditions such as non-negativity.
We can therefore employ our direct product theorem with this new
scoring measure $2^{-H(\cdot)}$, which plays the role of ``success probability''
for our purposes.

The entropy upper bound of $n-\Omega(\ell)$ on a successful leaf $L$
translates into a score lower bound of $2^{\Omega(\ell)-n}$ at that
leaf (this works because the entropy $H(\cdot)$ is additive over the independent
copies of $f$, which translates to $2^{-H(\cdot)}$ being multiplicative
over those copies). Since there are $\Omega(\gamma^\ell)$ fraction of
leaves which have this score lower bound, the overall expected score
of the decision tree is at least $\gamma^\ell 2^{\Omega(\ell)-n}$.
The direct product theorem with this scoring rule now gives a lower
bound on the number of queries used by this decision tree: it is at least
$\Omega(n)$ times the cost of computing a single copy of $f$ with a score
of at least $(\gamma^\ell 2^{\Omega(\ell)-n})^{1/n}$.
This latter score works out to
$2^{-1+\Omega(\ell/n)}$ when $\gamma<1$ a sufficiently large constant.

Since the score function is $2^{-H(\cdot)}$, computing $f$ with a score of
$2^{-1+\Omega(\ell/n)}$ is like attaining binary entropy of at most
$1-\Omega(\ell/n)$ for the posterior distribution over the Boolean output of $f$.
The number of queries required to do this can be related to the
number of queries to compute $f$ with other scoring rules,
such as the ones from \cite{BB23}. One of the scoring rules in that
paper can be linearly amplified. Using this amplification property,
we can take an algorithm which computes $f$ with binary entropy
$1-\Omega(\ell/n)$, repeat it $O(n/\ell)$ times, and obtain
an algorithm which computes $f$ to bounded error. We conclude
that the query cost of obtaining binary entropy $1-\Omega(\ell/n)$
is $\Omega(\R(f)\cdot \ell/n)$, at least when $\mu$ is chosen to
be an appropriate hard distribution.

Putting this together with the direct product theorem,
we obtain a lower bound of $\Omega(n)$ times $\Omega(\R(f)\cdot \ell/n)$,
which gives the desired bound of $\Omega(\ell \R(f))$.



\section{Definitions and Foundations}

Throughout the paper, the symbol $f$ is reserved for representing a total or partial Boolean function $f \colon X \to \{0,1\}$ for some $X \subseteq \{0,1\}^m$.
The symbol $\mu$ is reserved for distributions over the domain $X$ of $f$.

Since we consider only query complexity,
we associate deterministic algorithms with deterministic decision trees and use both terms interchangeably. 
Every internal node of a decision tree is labelled with the index of a bit of the input, and the outgoing edges from these nodes are labelled with $0$ and $1$. 
These nodes correspond to the queries made by the algorithm.
Every leaf of a decision tree is typically associated with a label in $\{0,1\}$ indicating the output value of the decision tree at the leaf.
(As we will see in the next section, however, in distributional settings we will ignore the value of the label of the leaves when measuring the correctness score of decision trees.)

For each leaf $\ell$ of a decision tree $D$, we use $\ell$ to represent both the leaf itself and the subset of inputs in $X$ that induce a path to $\ell$ in $D$.
For any input $x \in X$, we write $D(x)$ to represent the leaf in $D$ reached by $x$.

A \emph{randomized algorithm} $R$ corresponds to a distribution over deterministic algorithms. 
We write $R$ to denote both the algorithm itself and the associated distribution over deterministic decision trees.
A \emph{leaf} of $R$ is a leaf in one of the deterministic decision trees in its support.

All the definitions carry over to the more general setting of direct product functions $f^n \colon X^n \to \{0,1\}$, $X \subseteq \{0,1\}^m$ as-is, with the single exception that the labels of leaves of decision trees for these functions are labelled with values in $\{0,1\}^n$ or $[0,1]^n$.

\subsection{Score Measures}

The score of a randomized algorithm measures how well it computes a given function $f$.
The most common score considered in the study of randomized algorithms is success probability. 
Our results, however, also apply to other score measures and we will use some of these other score measures for the proof of \cref{thm:LD-DPT} so we introduce a bit of notation to allow the statement of the more general results.

Scoring measures for Boolean functions are defined in terms of a \emph{score function} $\phi \colon [0,1] \to \mathbb{R}$ that satisfies a few properties:
\begin{description}
\item[Boundedness:] 
\if\conf1 \phantom{Hack} \\ \fi
The range of $\phi$ is $[\frac12, 1]$.
\item[Normalization:] 
\if\conf1 \phantom{Hack} \\ \fi
The function satisfies $\phi(0) = \phi(1) = 1$ and $\phi(\frac12) = \frac12$.
\item[Symmetry:] 
\if\conf1 \phantom{Hack} \\ \fi
For every $p \in [0,1]$, $\phi(p)= \phi(1-p)$.
\item[Monotonicity:] 
\if\conf1 \phantom{Hack} \\ \fi
The function $\phi$ is monotonically increasing on $[\frac12,1]$. 
\item[Continuity:] 
\if\conf1 \phantom{Hack} \\ \fi
The function $\phi$ is continuous on $[0,1]$.
\end{description}

The \emph{score} of a leaf $\ell$ of a decision tree $D$ with respect to a target function $f$ and a distribution $\mu$ is defined in terms of the underlying scoring function $\phi$.
Recall that each leaf $\ell$ of $D$ also represents a subset of the domain $X$ of $f$ corresponding to the set of inputs that lead to this leaf. 
We also write $\ell_0 = \ell \cap f^{-1}(0)$ and $\ell_1 = \ell \cap f^{-1}(1)$.
We define the score of $\ell$ to be
\[
\sfmu(\ell) = \phi\left( \frac{\mu(\ell_1)}{\mu(\ell)} \right).
\]

Letting $\ell \sim D(\mu)$ denote the distribution on the leaves of a decision tree $D$ induced by the distribution $\mu$ on the domain of $f$, the score of a randomized algorithm $R$ with respect to $f$ and $\mu$ is 
\[
\barsfmu(R) = \E_{D \sim R} \left[ \E_{\ell \sim D(\mu)}[ \sfmu(\ell) ] \right].
\]
Let us mention a few remarks about these score measures:
\begin{enumerate}
\item For completeness, the notation $\sfmu$ should also include reference to the underlying scoring function $\phi$. We omit it for notational clarity and instead use different notation introduced below whenever we consider a specific score measures.

\item The score measures do not use the values of labels of decision trees when measuring their score, they can instead be thought of measures of discrepancy of the leaves with respect to $f$ and $\mu$. 

\item The score measures can also be obtained from \emph{scoring rules} for forecasting algorithms as introduced in~\cite{BB20,BB23}. These scoring rules can provide additional useful structural properties to the score measures, but all our results only require the basic properties of score functions listed above so the extension to forecasting algorithms is not needed.
\end{enumerate}

We also extend score measures for product functions.
Consider a decision tree $D$ that computes $f^n$ for some function $f \colon \{0,1\}^m \to \{0,1\}$.
Each leaf $\ell$ of $D$ corresponds to a subset of $\{0,1\}^{m \times n}$.
We can always write such a leaf as a product $\ell = \ell^{(1)} \times \cdots \times \ell^{(n)}$ where each $\ell^{(i)} \subseteq \{0,1\}^m$.
We define the score of the leaf $\ell$ of $D$ with respect to a product measure $\mu^n$ on the domain of $f^n$ to be
\[
\sss_{f^n}^{\mu^n}(\ell) = \prod_{i=1}^n \sfmu(\ell^{(i)}).
\]
As before, the score of a randomized algorithm $R$ with respect to $f^n$ and $\mu^n$ is the expected score of its leaves:
\[
\bars_{f^n}^{\mu^n}(R) = \E_{D \sim R} \left[ \E_{\ell \sim D(\mu^n)}[ \sss_{f^n}^{\mu^n}(\ell) ] \right].
\]

\subsubsection{Examples of Score Measures}

We consider the following three score measures to establish \cref{thm:DPT} and \cref{thm:LD-DPT}:

\paragraph{Success probability score.}
The most common way to measure the quality of a randomized algorithm is via its \emph{success probability}. 
The corresponding score measure is obtained by considering the scoring function
\[
\phi_{\success}(p) = \max\{p,1-p\}.
\]

The score of a randomized algorithm $R$ with respect to $\mu$ and $f$ when we use $\phi_{\success}$ is
\[
\barsuccess_f^\mu(R) = \E_{D \sim R}\left[ \E_{\ell \sim D(\mu)} \max\left\{ \frac{\mu(\ell_0)}{\mu(\ell)}, \frac{\mu(\ell_1)}{\mu(\ell)}\right\} \right].
\]
This is the average success probability of $R$ over $\mu$ when each leaf of the trees in $R$'s support are labelled optimally.
In other words, there is a labelling of the leaves of the trees in the support of $R$ for which
\[
\barsuccess_f^\mu(R) = \Pr_{\substack{x \sim \mu \\ D \sim R}}\big[ D \mbox{ outputs } f(x) \mbox{ on input } x\big].
\]

By definition, the score of a randomized algorithm $R$ that computes $f^n$ with respect to the product distribution $\mu^n$ is
\[
\barsuccess_{f^n}^{\mu^n}(R) = \E_{\substack{D \sim R \\ x \sim \mu^n}}\left[ \prod_{i=1}^n \success_f^\mu( D(x)^{(i)} )\right].
\]
By Drucker's product lemma (Lemma 3.2 in \cite{Dru12}), this is exactly the probability that $R$ computes $f^n(x)$ correctly when $x \sim \mu^n$:
\[
\barsuccess_{f^n}^{\mu^n}(R) = \Pr_{\substack{x \sim \mu^n \\ D \sim R}}\big[ D \mbox{ outputs } f^n(x) \mbox{ on input } x\big].
\]

\paragraph{Hellinger score.}

We obtain a different score measure by considering the \emph{Hellinger (squared)} scoring function
\[
\phi_{\hel}(p) = 1 - \sqrt{p(1-p)}.
\]
The Hellinger score of a randomized algorithm $R$ with respect to the function $f$ and distribution $\mu$ is
\if\conf1
\begin{align*}
\barHel_f^\mu(R) 
&= \E_{D \sim R} \left[ \E_{\ell \sim D(\mu)} 1 - \sqrt{\frac{\mu(\ell_0)}{\mu(\ell)} \cdot \frac{\mu(\ell_1)}{\mu(\ell)}}\right] \\
&= 1 - \E_{D \sim R} \sum_{\ell \in D} \sqrt{\mu(\ell_0) \mu(\ell_1)}.
\end{align*}
\else
\[
\barHel_f^\mu(R) 
= \E_{D \sim R} \left[ \E_{\ell \sim D(\mu)} 1 - \sqrt{\frac{\mu(\ell_0)}{\mu(\ell)} \cdot \frac{\mu(\ell_1)}{\mu(\ell)}}\right]
= 1 - \E_{D \sim R} \sum_{\ell \in D} \sqrt{\mu(\ell_0) \mu(\ell_1)}.
\]
\fi

\paragraph{Exponential entropy score.}

A third way to measure the correctness of algorithms is to consider the \emph{exponential entropy} scoring function
\[
\phi_{\ent}(p) = 1 - \tfrac12 2^{-h(p)} = 1 - \tfrac12 p^p(1-p)^{1-p}
\]
where $h \colon p \mapsto -p \log(p) - (1-p) \log(1-p)$ is the binary entropy function.
The exponential entropy score of a randomized algorithm $R$ with respect to $f$ and $\mu$ is
\[
\barEnt_f^\mu(R) = 1 - \frac12 \E_{D \sim R} \left[ \E_{\ell \sim D(\mu)} \mu(\ell) \cdot 2^{-h\left(\frac{\mu(\ell_0)}{\mu(\ell)}\right)} \right].
\]

\subsection{Complexity Measures}

The \emph{cost} of the leaf $\ell$ of the decision tree $D$, denoted $\cost(\ell)$, is the depth of that leaf.
This depth corresponds to the number of queries made by the algorithm before it terminates.
Recall that $D(x)$ denotes the leaf of $D$ reached by input $x$.
The \emph{(absolute) worst-case cost} of the randomized algorithm $R$ is 
\[
	\cost(R) = \max_{\substack{x \in \dom(f) \\ D \in \supp(R)}} \cost( D(x) ).
\]
The \emph{worst-case average cost} of $R$ is
\[
	\barcost(R) = \max_{x \in \dom(f)} \E_{D \sim R}[ \cost(D(x)) ].
\]

The score and cost definitions introduced above give rise to a variety of different complexity measures on functions.
In all of the following definitions, fix any scoring measure $\ssf$.
The \emph{worst-case} randomized query complexity of $f$ is
\[
\R_{\gamma}(f) = \min_{R \,:\, \ssf(R) \ge \gamma} \cost(R)
\]
and the \emph{average-case} complexity of $f$ is
\[
\barR_\gamma(f) = \min_{R \,:\, \ssf(R) \ge \gamma} \barcost(R).
\]
For the reader's convenience, we review some of the basic facts relating worst-case and average-case query complexity in \cref{sec:wc-ac}.
Our results also rely on additional notions of randomized query complexity. 
We introduce these alternative measures in the next two subsections.

\subsection{Maximum Distributional Complexity}
\label{sec:max-dist-complexity}

The \emph{average cost} of $R$ over the distribution $\mu$ is 
\[
	\barcost^\mu(R) = \E_{x\sim \mu, D \sim R}[ \cost(D(x))].
\]
The \emph{distributional query complexity} of $f$ with respect to the distribution $\mu$ and the score parameter $\gamma$ is
\[
\avgR^\mu_\gamma(f) = \min_{R \,:\, \barsfmu(R) \ge \gamma} \barcost^\mu(R),
\]
the minimum expected cost of a randomized algorithm with expected score $\gamma$, where both the cost and score are taken in expectation over the internal randomness of the algorithm and the choice of the input $x \sim \mu$.

The \emph{maximum distributional randomized query complexity} of $f$ for the score parameter $\gamma$ is
\[
\avgR_\gamma(f) = \max_\mu \avgR^\mu_\gamma(f) = \max_\mu \min_{R \,:\, \barsfmu(R) \ge \gamma} \barcost^\mu(R).
\]
This measure of complexity is distribution-independent.

Note that because both the score and the cost of the randomized algorithms are taken in expectation over both the internal randomness and the choice of input over the underlying distribution, standard minimax theorems do not apply and so (unlike in the more common distributional settings considered in the literature where one of these measures is taken in a worst-case fashion) it is not clear whether $\avgR_\gamma(f)$ should always equal its deterministic analogue $\avgD_\gamma(f)$ or not.

\subsubsection{Score Amplification Properties}

The standard success amplification and boosting arguments used when considering $\R(f)$ and $\barR(f)$ do not apply for the maximum distributional complexity measure.
Nevertheless, we show that the standard arguments can be extended to show that this complexity measure does satisfy the usual score amplification properties.

First, we show that Schapire's boosting algorithm~\cite{Sch90} can be extended to give a success amplification theorem analogous to the ones for worst-case and average-case complexity for maximum distributional complexity.

\begin{restatable}{lemma}{boosting}
\label{lem:boosting-maxdist}
For any $\delta > 0$, any $f \colon \{0,1\}^m \to \{0,1\}$, and any success probability parameter $\gamma \in [\frac{1+\delta}2,1 - \delta]$,
\[
\avgR_{\gamma + \frac{\delta}3}(f) \le 4 \cdot \avgR_{\gamma}(f).
\]
\end{restatable}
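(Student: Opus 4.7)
The plan is to adapt Schapire's classical boosting-by-majority scheme~\cite{Sch90} to the setting of maximum distributional randomized query complexity. The key feature enabling this adaptation is the distributional minimax form of $\avgR_\gamma$: by definition, for every distribution $\mu'$ on the domain of $f$, there exists a randomized algorithm with expected success at least $\gamma$ and expected cost at most $T := \avgR_\gamma(f)$ on $\mu'$. This is precisely the ``weak learning'' hypothesis that Schapire's construction requires, and it is what makes the approach work even though $\avgR_\gamma$ lacks a direct minimax theorem.

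Fix an arbitrary distribution $\mu$ on the domain of $f$; since $\mu$ is arbitrary, it suffices to produce an algorithm with expected success at least $\gamma + \delta/3$ and expected cost at most $4T$ on $\mu$. The plan is to invoke the definition of $\avgR_\gamma$ three times, on three carefully chosen distributions. First, let $R_1$ be a weak learner on $\mu$ and let $A = \{x : R_1 \text{ is correct on } x\}$, so that $\mu(A) \ge \gamma$. Second, let $\mu_2 = \tfrac{1}{2\mu(A)} \mathbf{1}_A \, \mu + \tfrac{1}{2\mu(\bar A)} \mathbf{1}_{\bar A} \, \mu$ be the balanced reweighting that places equal mass on where $R_1$ succeeds and where it fails, and let $R_2$ be a weak learner on $\mu_2$. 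Third, let $D = \{x : R_1(x) \ne R_2(x)\}$, $\mu_3 = \mu|_D / \mu(D)$, and let $R_3$ be a weak learner on $\mu_3$. The boosted algorithm computes $R_1(x)$ and $R_2(x)$, outputs their common value if they agree, and otherwise computes and outputs $R_3(x)$.

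The cost analysis has three contributions. The cost of $R_1$ on $\mu$ is at most $T$ by construction. For $R_2$, writing $a := \E_{\mu|_A}[\cost(R_2)]$ and $b := \E_{\mu|_{\bar A}}[\cost(R_2)]$, the constraint $\barcost^{\mu_2}(R_2) \le T$ amounts to $(a+b)/2 \le T$, so $\barcost^\mu(R_2) = \mu(A) \, a + \mu(\bar A) \, b \le a + b \le 2T$. Finally, $R_3$ is invoked only on inputs in the disagreement set $D$, contributing at most $\mu(D) \cdot \barcost^{\mu_3}(R_3) \le \mu(D) \cdot T \le T$ in expectation. Summing gives total expected cost at most $4T$, as required.

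The success analysis will follow Schapire's original argument: splitting into the agreement and disagreement cases and using the weak learning bounds on all three of $\mu, \mu_2, \mu_3$ yields that the boosted algorithm succeeds on $\mu$ with probability at least $3\gamma^2 - 2\gamma^3 = \gamma + \gamma(2\gamma-1)(1-\gamma)$. The hardest step I anticipate is verifying that this improvement meets the $\delta/3$ threshold uniformly over the range $\gamma \in [\tfrac{1+\delta}{2}, 1-\delta]$: the quantity $\gamma(2\gamma-1)(1-\gamma)$ shrinks to $\Theta(\delta)$ at the endpoints, so pushing the improvement up to the precise constant $\delta/3$ may require a small tweak to Schapire's balanced reweighting (for instance, an asymmetric split between $\mu|_A$ and $\mu|_{\bar A}$, or a carefully tuned confidence-based third learner) that preserves the $4T$ cost bound while sharpening the boosted success guarantee.
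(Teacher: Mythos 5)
Your proposal follows essentially the same route as the paper: the paper's proof is exactly Schapire's three-learner scheme with the same three distributions (the original $\mu$, a balanced reweighting between the regions where the first learner succeeds and fails, and the disagreement distribution), the same $T + 2T + T = 4T$ cost accounting, and the same $3\gamma^2 - 2\gamma^3$ boosted success guarantee; the only cosmetic difference is that, because the learners are randomized, the paper defines the reweighted distributions via the per-input correctness probabilities $\alpha_i(x)$ rather than via your deterministic sets $A$ and $D$. The endpoint concern you flag is well founded but requires no tweak to the construction: the true improvement is $3\gamma^2-2\gamma^3-\gamma = \gamma(2\gamma-1)(1-\gamma)$, which at $\gamma=\tfrac{1+\delta}{2}$ is only $\tfrac{\delta(1-\delta^2)}{4}<\tfrac{\delta}{3}$; the paper's own derivation of the constant $\tfrac{\delta}{3}$ silently drops the leading factor of $\gamma$, so the honest constant is slightly smaller (e.g.\ $\tfrac{\delta}{6}$ certainly works), a harmless slip that affects none of the applications.
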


We also prove a (stronger) linear score amplification lemma for the Hellinger score measure.

\begin{restatable}{lemma}{linamplification}
\label{lem:hellinger-lin-amp}
For any $\delta \in [0,1]$ and any $f \colon \{0,1\}^m \to \{0,1\}$,
\[
	\R(f) = O \left( \frac{1}\delta \cdot \avgR_{\Hel\,\frac{1+\delta}2}(f) \right).
\]
\end{restatable}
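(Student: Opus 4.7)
The plan is to apply Yao's minimax principle $\R(f) = \max_\mu D^\mu_{2/3}(f)$ and reduce to showing that for every distribution $\mu$ on $\{0,1\}^m$ there is a deterministic decision tree of depth $O(\avgR_{\Hel\,\frac{1+\delta}{2}}(f)/\delta)$ whose expected success probability under $\mu$ is at least $2/3$. Writing $\pi_b = \Pr_{x\sim\mu}[f(x)=b]$, if $\min(\pi_0,\pi_1)\le 1/3$ the zero-query constant algorithm $\arg\max_b\pi_b$ already works, so I may assume $\pi_0,\pi_1\in(1/3,2/3)$. For such $\mu$, I would pass to the balanced mixture $\mu' = \tfrac12\mu|_{f=0}+\tfrac12\mu|_{f=1}$ and invoke the definition of $\avgR_{\Hel\,\frac{1+\delta}{2}}(f)$ at $\mu'$ to obtain a randomized algorithm $R_{\mu'}$ satisfying $\barHel_f^{\mu'}(R_{\mu'})\ge(1+\delta)/2$ with $\barcost^{\mu'}(R_{\mu'})\le\avgR_{\Hel\,\frac{1+\delta}{2}}(f)$.

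On the balanced $\mu'$ the identity $\barHel_f^{\mu'}(R_{\mu'}) = 1-\tfrac12\sum_\ell\sqrt{P_0(\ell)P_1(\ell)}$, where $P_b(\ell)=\Pr[R_{\mu'}\to\ell\mid f(x)=b]$, repackages the Hellinger hypothesis as the Bhattacharyya-coefficient bound $H(P_0,P_1)\le 1-\delta$; because $P_b$ depends only on the conditional $\mu|_{f=b}=\mu'|_{f=b}$, the same bound applies when inputs are instead drawn from $\mu$. I would then run $R_{\mu'}$ independently $k = O(1/\delta)$ times on the given input $x$ and combine the observed leaves by the Bayes-optimal rule $\arg\max_b\pi_b\prod_i P_b(\ell_i)$. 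Granted a multiplicative contraction of the Bhattacharyya coefficient across the $k$ repetitions---the technical heart of the argument, discussed below---the standard Bayes-error bound yields
\[
  \Pr_{x\sim\mu}[\text{error}] \;\le\; \sqrt{\pi_0\pi_1}\,(1-\delta)^k \;\le\; \tfrac12 e^{-\delta k},
\]
which is below $1/4$ for $k$ a sufficiently large constant multiple of $1/\delta$. The cost bookkeeping is standard: $\mu(x)=2\pi_{f(x)}\mu'(x)$ together with $\pi_b<2/3$ gives $\barcost^\mu(R_{\mu'})\le\tfrac43\avgR_{\Hel\,\frac{1+\delta}{2}}(f)$, so truncating the $k$ repetitions at $20k\avgR_{\Hel\,\frac{1+\delta}{2}}(f)$ worst-case queries loses at most $1/15$ in success probability by Markov, keeping the overall success on $\mu$ above $2/3$; averaging over internal randomness then produces the required deterministic tree.

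The main obstacle is exactly the contraction of the Bhattacharyya coefficient. Because the $k$ copies are run on the \emph{same} input $x$, the joint distribution of the observed leaves conditional on $f(x)=b$ is not the product $P_b^{\otimes k}$ but a mixture of such products over $x\sim\mu|_{f=b}$, and for a deterministic $R_{\mu'}$ this mixture has exactly the same Bhattacharyya coefficient as a single run---so naive repetition does not amplify at all. Obtaining a genuine $(1-\delta)^k$ contraction therefore requires the Hellinger-specific structure of the scoring rule: one route is to work in the forecasting-algorithm framework of~\cite{BB23}, where the scoring rule rewards calibrated probabilistic outputs and independent runs contribute genuinely independent Bayes updates; another is to iterate sequentially, rechoosing the refining algorithm at each step to target the current conditional $\mu|_{\text{path so far}}$ so that each new copy shrinks the $f$-posterior by a further $(1-\delta)$ factor. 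This Hellinger-flavoured amplification is precisely what buys the linear (rather than quadratic) dependence on $1/\delta$ in the conclusion.
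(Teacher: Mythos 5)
Your outer reduction is fine (Yao, rebalancing $\mu$, and reading the Hellinger score $\tfrac{1+\delta}{2}$ as a Bhattacharyya-coefficient bound of $1-\delta$ on the conditional leaf distributions), and you correctly diagnose the central obstacle: running one algorithm independently $k$ times on the same input does not contract the Bhattacharyya coefficient, since the conditional leaf distributions are mixtures over $x$, not products. But the proposal then stops exactly there. You name two candidate repairs without carrying either out, and the entire content of the lemma --- the linear rather than quadratic dependence on $1/\delta$ --- lives in that unproven contraction step. As written this is a plan whose key lemma is missing, not a proof.

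The second route you gesture at (sequentially rechoosing the algorithm for the current conditional distribution) is indeed how the paper argues, and to make it work you would need two concrete ingredients. First, an exact multiplicativity identity: run $A$ to a leaf $\ell$, then run a fresh $B_\ell$ chosen against the \emph{rebalanced} conditional $\nu_\ell$ that puts mass $\mu(x)/(2\mu(\ell_0))$ on $x\in\ell_0$ and $\mu(x)/(2\mu(\ell_1))$ on $x\in\ell_1$. For every leaf $\ell'$ of $B_\ell$ refining $\ell$ one has
\[
2\sqrt{\mu(\ell'_0)\mu(\ell'_1)} \;=\; 2\sqrt{\mu(\ell_0)\mu(\ell_1)}\cdot 2\sqrt{\nu_\ell(\ell'_0)\nu_\ell(\ell'_1)},
\]
so summing over leaves shows the Hellinger loss of the composed algorithm is at most the \emph{product} of the two losses; it is essential that $B_\ell$ is optimized against $\nu_\ell$ rather than $\mu|_\ell$, which is precisely where the maximum-over-distributions in $\avgR$ is used (and why the statement fails for fixed-distribution complexity). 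Second, a cost translation: $B_\ell$'s expected cost is guaranteed under $\nu_\ell$, not under $\mu|_\ell$, and the change of measure costs a factor $2\max\{\mu(\ell_0),\mu(\ell_1)\}/\mu(\ell)\le 2$, so $k=O(1/\delta)$ rounds cost $O(k)\cdot\avgR_{\Hel\,\frac{1+\delta}{2}}(f)$ in total. Your first route (forecasting algorithms) is neither needed nor developed enough to assess. Supplying the two ingredients above would turn your sketch into essentially the paper's proof.
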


Note that these two score amplification properties do \emph{not} hold in the standard distributional setting where we consider distributional complexity measures over a fixed distribution (c.f.,~\cite{Sha03,BKST24}). 
The proofs of \cref{lem:boosting-maxdist,lem:hellinger-lin-amp} strongly rely on the fact that the maximum distributional complexity measure holds over all distributions.
These proofs are presented in \cref{sec:amplification}.

\subsection{Score-Weighted Complexity}
\label{sec:score-weighted-defns}

In addition to the complexity measures introduced above, we also introduce one more notion of complexity which is based on a different way to measure the cost of algorithms.

\subsubsection{Sucess-Conditioned and Score-Weighted Cost}

The average cost of a randomized algorithm $R$ with respect to a distribution $\mu$ is measured on average over all inputs $x$.
We can also measure the expected cost of $R$ only on the inputs for which it outputs the correct value, giving a \emph{success-conditioned} average cost
\[
\E_{\substack{x \sim \mu \\ D \sim R}}\big[ \cost( D(x) ) \mid D \mbox{ outputs } f(x) \big].
\]
This success-conditioned cost of randomized algorithms can also be generalized to a \emph{score-weighted cost} measure for any scoring rule by setting
\[
	\scost_f^\mu(R) 
	= \frac{\E_{x \sim \mu, D \sim R}\big[ \ssf(D(x)) \cdot \cost( D(x) ) \big]}{\barsfmu(R)}.
\]
As we can verify directly, this measure coincides with the success-conditioned average cost when we set $\ssf = \success_f$ and use the optimal labels on the decision tree leaves.

The average cost and score-weighted cost of a randomized algorithm $R$ can differ substantially.
Consider for example a function $f \colon \{0,1\}^m \to \{0,1\}$ and an algorithm $R$ for $f^n$ that queries all $mn$ bits of the input to determine the value of $f^n(x)$ with probability $p = \frac{1}{2^n-1}$ and otherwise guesses the value of $f^n(x)$. 
For any distribution $\mu$, the average cost of this algorithm is $\barcost^\mu(R) = pmn = \frac{mn}{2^n-1}$ but its score-weighted cost with the success score function is
\if\conf1
\begin{align*}
\scost_{f^n}^\mu(R) &= \frac{p \cdot mn  + (1-p)2^{-n}\cdot 0}{p + (1-p)2^{-n}} \\
&= \frac{mn}{2} \gg \barcost^\mu(R).
\end{align*}
\else
\[
\scost_{f^n}^\mu(R) = \frac{p \cdot mn  + (1-p)2^{-n}\cdot 0}{p + (1-p)2^{-n}} = \frac{mn}{2} \gg \barcost^\mu(R).
\]
\fi

\subsubsection{Maximum Score-weighted Distributional Query Complexity}

The notion of score-weighted cost gives rise to one more distributional measure of complexity, the \emph{distributional score-weighted query complexity}:
\[
\sR_\gamma^\mu(f) = \min_{R \,:\, \barsfmu(R) \ge \gamma} \scost_f^\mu(R).
\]
Once again, we obtain a distribution-free complexity measure by taking the maximum distributional score-weighted query complexity over all choices of $\mu$:
\[
\sR_\gamma(f) = \max_{\mu} \sR_\gamma^\mu(f). 
\]

The maximum score-weighted distributional query complexity is related to both the maximum distributional and worst-case randomized query complexity measures in the following ways.

\begin{restatable}{proposition}{sRandAvgR}
\label{prop:sR-avgR}
For every function $f \colon \{0,1\}^m \to \{0,1\}$ and score parameter $\gamma$,
\[
	\tfrac12 \cdot \avgR_\gamma(f) \le \sR_\gamma(f) \le \frac{1}{\gamma} \cdot \avgR_\gamma(f)
\]
\end{restatable}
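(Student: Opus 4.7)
The plan is to prove both inequalities pointwise in $\mu$ by exploiting the boundedness property of the score function ($\ssf$ takes values in $[\tfrac12, 1]$), and then taking the maximum over $\mu$ to lift the inequalities to the distribution-free measures. The key observation is that $\scost_f^\mu(R)$ and $\barcost^\mu(R)$ differ only by the weighting factor $\ssf(D(x))/\barsfmu(R)$ in the expectation, and this weighting factor lies in a bounded range.

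For the upper bound $\sR_\gamma(f) \le \tfrac1\gamma \avgR_\gamma(f)$, fix any distribution $\mu$ and any randomized algorithm $R$ with $\barsfmu(R) \ge \gamma$. Since $\ssf(D(x)) \le 1$ by boundedness, we have
\[
\scost_f^\mu(R) = \frac{\E[\ssf(D(x)) \cdot \cost(D(x))]}{\barsfmu(R)} \le \frac{\barcost^\mu(R)}{\gamma}.
\]
Applying this to an algorithm $R$ achieving $\barcost^\mu(R) = \avgR_\gamma^\mu(f)$ yields $\sR_\gamma^\mu(f) \le \tfrac1\gamma \avgR_\gamma^\mu(f)$, and then taking the maximum over $\mu$ on both sides gives the desired bound.

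For the lower bound $\tfrac12 \avgR_\gamma(f) \le \sR_\gamma(f)$, fix $\mu$ and let $R$ be any randomized algorithm with $\barsfmu(R) \ge \gamma$. Using $\ssf(D(x)) \ge \tfrac12$ from boundedness and $\barsfmu(R) \le 1$, we obtain
\[
\scost_f^\mu(R) = \frac{\E[\ssf(D(x)) \cdot \cost(D(x))]}{\barsfmu(R)} \ge \frac{\tfrac12 \barcost^\mu(R)}{1} = \tfrac12 \barcost^\mu(R).
\]
Since this holds for every algorithm $R$ satisfying the score constraint, taking the infimum over such $R$ on both sides gives $\sR_\gamma^\mu(f) \ge \tfrac12 \avgR_\gamma^\mu(f)$ for every $\mu$. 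In particular, choosing $\mu^*$ to be a distribution maximizing $\avgR_\gamma^\mu(f)$, we get $\sR_\gamma(f) \ge \sR_\gamma^{\mu^*}(f) \ge \tfrac12 \avgR_\gamma(f)$.

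There is no real obstacle here; the proposition is essentially a two-line consequence of the range constraint $\phi \in [\tfrac12,1]$ built into the score function definition. The only subtlety worth noting is that the distributions achieving the maxima in $\avgR_\gamma(f)$ and $\sR_\gamma(f)$ need not coincide, but this is handled in the lower bound direction by the standard argument of evaluating $\sR_\gamma^\mu$ at the maximizer of $\avgR_\gamma^\mu$.
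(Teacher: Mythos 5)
Your proof is correct and follows essentially the same route as the paper: both directions come from sandwiching the weight $\ssf(D(x))/\barsfmu(R)$ between $\tfrac12/1$ and $1/\gamma$ using the boundedness of the score function, and then passing to the maximizing distribution. The paper's argument is the same two-line computation written at the level of the $\min$/$\max$ expressions, so there is nothing to add.
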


\begin{restatable}{proposition}{sRandR}
\label{prop:sR-R}
For every function $f \colon \{0,1\}^m \to \{0,1\}$, parameter $n \in \mathbb{N}$, and score $\gamma$,
\[
	\sR_{\gamma^n}(f) \le \R_{\gamma^n}(f).
\]
\end{restatable}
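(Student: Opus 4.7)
The plan is to exhibit, for every distribution $\mu$, a single randomized algorithm whose score-weighted cost against $\mu$ is at most $\R_{\gamma^n}(f)$, and then take the maximum over $\mu$. The natural candidate is the optimal worst-case algorithm itself, and the key is to verify that both the success-score lower bound and the score-weighted cost upper bound hold uniformly in $\mu$.

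Let $R$ be any randomized algorithm witnessing $\R_{\gamma^n}(f) = T$, so that every decision tree $D$ in the support of $R$ has depth at most $T$ and the labelled algorithm outputs $f(x)$ with probability at least $\gamma^n$ on every $x \in \dom(f)$. Fix an arbitrary distribution $\mu$. Since $\barsuccess_f^\mu(R)$ equals the expected success probability of $R$ when each leaf is relabelled by its own Bayes-optimal value under $\mu$, and relabelling can only increase the probability of being correct, we obtain
\[
\barsuccess_f^\mu(R) \;\ge\; \Pr_{x \sim \mu,\, D \sim R}\bigl[ D(x) \text{ outputs } f(x) \bigr] \;\ge\; \min_{x \in \dom(f)} \Pr_{D \sim R}\bigl[ D(x) \text{ outputs } f(x) \bigr] \;\ge\; \gamma^n,
\]
so $R$ satisfies the score constraint defining $\sR_{\gamma^n}^\mu(f)$.

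For the cost side, every leaf of every tree in the support of $R$ has depth at most $T$, so $\cost(D(x)) \le T$ pointwise. Pulling the constant $T$ out of the numerator of the score-weighted cost gives
\[
\scost_f^\mu(R) \;=\; \frac{\E_{x \sim \mu,\, D \sim R}\bigl[\success_f^\mu(D(x)) \cdot \cost(D(x))\bigr]}{\barsuccess_f^\mu(R)} \;\le\; \frac{T \cdot \barsuccess_f^\mu(R)}{\barsuccess_f^\mu(R)} \;=\; T.
\]
Combining the two displays yields $\sR_{\gamma^n}^\mu(f) \le T$ for every $\mu$, and taking the maximum over $\mu$ gives $\sR_{\gamma^n}(f) \le T = \R_{\gamma^n}(f)$. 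There is essentially no obstacle here: the statement is a consistency check reflecting the fact that a uniform-in-input worst-case bound on cost (resp.\ score) implies the same uniform-in-distribution bound on expected cost (resp.\ expected score), and that the score-weighting disappears because every leaf obeys the same depth cap $T$.
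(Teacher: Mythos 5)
Your proof is correct and follows essentially the same route as the paper's: take the optimal worst-case algorithm, observe that its pointwise success guarantee implies the distributional score constraint for every $\mu$, and use the uniform depth bound $\cost(D(x)) \le T$ to cancel the score weights in the numerator of $\scost_f^\mu$. The only cosmetic difference is that you spell out the relabelling argument for the success-probability score, whereas the paper phrases the same bound for a general score measure; the substance is identical.
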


We complete the proofs of these two propositions in \cref{sec:relations-sR}.


\section{Proof of the Direct Product Theorem}

We complete the proofs of the standard Direct Product Theorem, \cref{thm:DPT}, as well as the closely related \cref{thm:DPT-sR} in this section.
We begin by establishing the lower bounds for both of these theorems.

\subsection{Proof of the Lower Bounds}

As discussed in the introduction, there are two main technical results at the heart of the proofs of the Direct Product Theorems.
The first is the tensorization lemma for discounted score, which we restate here for the reader's convenience.

\tensorization*

The second is the equivalence lemma which shows the relation between discounted score and score-weighted distributional complexity.
We also restate it for convenience.

\maxDSsR*

We complete the proof of \cref{lem:DPT-maxDS} in \cref{sec:tensorization} and the proof of \cref{lem:maxDS-sR} in \cref{sec:equivalence}.
But first, let us show how these lemmas let us complete the proof of the lower bound for the Direct Product Theorem for maximum score-weighted distributional complexity.

\DPTsR*

\begin{proof}
Choose a distribution $\mu$ on the domain of $f$ that maximizes $\sR_\gamma^\mu(f)$, so that $\sR_\gamma(f) = \sR_\gamma^\mu(f)$.
Then if we choose $\alpha^*$ to be the parameter for which the second inequality in~\cref{lem:DPT-maxDS} holds,
\begin{align*}
\sR_{\gamma^n}^{\mu^n}(f^n)
&\ge \frac{1}{\alpha^*} \log \frac{\maxDS_{\alpha^*}^{\mu^n}(f^n)}{\gamma^n} & \mbox{(\if\conf0 First bound in \fi\cref{lem:maxDS-sR})} \\
&= \frac{1}{\alpha^*} \log \frac{\maxDS_{\alpha^*}^{\mu}(f)^n}{\gamma^n} & \mbox{(\cref{lem:DPT-maxDS})} \\
&= n \cdot \frac{1}{\alpha^*} \log \frac{\maxDS_{\alpha^*}^\mu(f)}{\gamma} \\
&= \Omega\left( n \cdot \sR_\gamma^\mu(f) \right) & \mbox{(\if\conf0 Second bound in \fi\cref{lem:maxDS-sR})}. 
\end{align*}
\end{proof}

And from there we complete the proof of the lower bound for our main Direct Product Theorem.
While this lower bound is stated in terms of success probability in the introduction for simplicity, we show here that it also holds for arbitrary score measures.

\begin{theorem}[Lower bound of \cref{thm:DPT}, generalized]
For every function $f \colon \{0,1\}^m \to \{0,1\}$, any score measure, any score $\gamma \in (\frac12, 1)$, and every $n \ge 1$,
\[
\R_{\gamma^n}(f^n) = \Omega\left( n \cdot \avgR_\gamma(f) \right).
\]
\end{theorem}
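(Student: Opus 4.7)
The plan is to derive this lower bound as an immediate corollary of three results already established earlier in the paper: \cref{prop:sR-R}, \cref{thm:DPT-sR}, and \cref{prop:sR-avgR}. Informally, $\R$ dominates $\sR$ in the worst case, $\sR$ enjoys a perfect direct product theorem (up to constants), and $\sR$ and $\avgR$ differ by at most a factor of $2$ on Boolean-valued functions. These three facts compose directly into the desired bound.

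Concretely, I would chain the three inequalities
\begin{align*}
\R_{\gamma^n}(f^n)
&\ge \sR_{\gamma^n}(f^n) && (\text{\cref{prop:sR-R}}) \\
&= \Omega\bigl( n \cdot \sR_\gamma(f) \bigr) && (\text{\cref{thm:DPT-sR}}) \\
&\ge \Omega\bigl( n \cdot \avgR_\gamma(f) \bigr) && (\text{\cref{prop:sR-avgR}}).
\end{align*}
The application of \cref{prop:sR-R} is with the (non-Boolean-valued) target $f^n$ and score parameter $\gamma^n$, which is exactly the form in which the proposition is stated. The application of \cref{prop:sR-avgR} uses only its left inequality $\sR_\gamma(f) \ge \tfrac12 \avgR_\gamma(f)$, and is applied to $f$ itself, which is Boolean-valued --- precisely the setting where that proposition was proved.

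Since everything reduces to lemmas already in hand, there is no serious obstacle to executing this proof --- the real work has been front-loaded into \cref{thm:DPT-sR}, which itself rests on the tensorization lemma (\cref{lem:DPT-maxDS}) and the equivalence lemma (\cref{lem:maxDS-sR}). The only point worth verifying is that \cref{prop:sR-R} and \cref{prop:sR-avgR}, whose statements one might be tempted to read as being about the success score, actually apply for an arbitrary score measure. Their uses here need only the boundedness axiom $\phi \in [\tfrac12,1]$: boundedness from above of $\ssf$ by $1$ ensures that the worst-case cost upper-bounds the score-weighted expected cost on any distribution (giving \cref{prop:sR-R}), while boundedness from below of $\ssf$ by $\tfrac12$ ensures that the score-weighting in the denominator of $\scost$ is at least a constant, which caps how much $\sR_\gamma$ can exceed $\avgR_\gamma$ (giving \cref{prop:sR-avgR}).
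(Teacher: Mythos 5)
Your proposal is correct and matches the paper's own proof exactly: the paper derives this bound by chaining \cref{prop:sR-R}, \cref{thm:DPT-sR}, and the left inequality of \cref{prop:sR-avgR} in precisely this order. Your extra remark about why the two propositions hold for arbitrary score measures (via the boundedness axiom) is a sound observation, though the paper does not spell it out.
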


\begin{proof}
Combining \cref{thm:DPT-sR} with our earlier results relating the different complexity measures, we obtain 
\begin{align*}
\R_{\gamma^n}(f^n)
&\ge \sR_{\gamma^n}(f^n) & \mbox{(\cref{prop:sR-R})} \\
& \ge \Omega \left( n \, \sR_\gamma(f) \right) & \mbox{(\cref{thm:DPT-sR})} \\
&\ge \Omega \left( n \, \big(\tfrac12 \avgR_\gamma(f)\big) \right) & \mbox{(\cref{prop:sR-avgR})} \\
&= \Omega \left( n \, \avgR_\gamma(f) \right).
\end{align*}
\end{proof}

\subsection{Proof of the Upper Bound} 

As a first step in the proof of the upper bound of \cref{thm:DPT}, we establish a basic result showing that it is possible to bound both the 
average and worst-case cost of algorithms effectively.

\begin{proposition}
\label{prop:bounded-costs}
Fix any $\delta > 0$. For any $\gamma$ in the range $\frac12 + \delta \le \gamma \le 1 - \delta$, $f \colon \{0,1\}^n \to \{0,1\}$, and distribution $\mu$, 
there exists an algorithm $A$ that simultaneously satisfies
\begin{align*}
\barsuccess_f^\mu(A) &\ge \gamma + \tfrac\delta6, \\
\barcost^\mu(A) &\le 4 \cdot \avgR_\gamma(f), \qquad \mbox{and} \\
\cost(A) &\le \tfrac{24}\delta \cdot \avgR_\gamma(f).
\end{align*}
\end{proposition}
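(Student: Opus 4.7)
The plan is to combine the boosting lemma for maximum distributional complexity with a standard Markov-style truncation. Since $\gamma \in [\tfrac12+\delta, 1-\delta] \subseteq [\tfrac{1+\delta}{2}, 1-\delta]$, we may apply \cref{lem:boosting-maxdist} to conclude that $\avgR_{\gamma+\delta/3}(f) \le 4\,\avgR_\gamma(f)$. In particular, specializing to the given distribution $\mu$ (using $\avgR^\mu \le \avgR$), we obtain a randomized algorithm $A'$ satisfying $\barsuccess_f^\mu(A') \ge \gamma + \delta/3$ and $\barcost^\mu(A') \le 4\,\avgR_\gamma(f)$. This handles the expected-cost requirement but gives no worst-case control, so a further step is needed.

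Next, define $A$ to be the truncation of $A'$ at the query budget $T = (24/\delta)\,\avgR_\gamma(f)$: whenever a run of $A'$ would make its $(T+1)$st query, halt instead and output an arbitrary guess. Clearly $\cost(A) \le T = (24/\delta)\,\avgR_\gamma(f)$ in the worst case over inputs and internal randomness, and $\barcost^\mu(A) \le \barcost^\mu(A') \le 4\,\avgR_\gamma(f)$ since truncation only shortens runs.

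It remains to bound the score loss caused by truncation. By Markov's inequality applied to the joint distribution of $x \sim \mu$ and $D \sim A'$,
\[
\Pr_{x,D}\big[\cost(D(x)) > T\big] \;\le\; \frac{\barcost^\mu(A')}{T} \;\le\; \frac{4\,\avgR_\gamma(f)}{(24/\delta)\,\avgR_\gamma(f)} \;=\; \frac{\delta}{6}.
\]
On the event that truncation does not occur, $A$ behaves identically to $A'$, so its success probability can decrease by at most $\delta/6$ relative to $A'$. Therefore
\[
\barsuccess_f^\mu(A) \;\ge\; \barsuccess_f^\mu(A') - \tfrac{\delta}{6} \;\ge\; \gamma + \tfrac{\delta}{3} - \tfrac{\delta}{6} \;=\; \gamma + \tfrac{\delta}{6},
\]
which completes the proof. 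There is no real obstacle here beyond checking that the constants from \cref{lem:boosting-maxdist} line up with the truncation threshold; the nontrivial content is entirely absorbed into the earlier boosting lemma.
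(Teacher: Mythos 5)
Your proof is correct and follows essentially the same route as the paper's: boost via \cref{lem:boosting-maxdist} to score $\gamma+\delta/3$ with expected cost $4\,\avgR_\gamma(f)$, then truncate at $\tfrac{24}{\delta}\avgR_\gamma(f)$ and use Markov's inequality to bound the truncation probability by $\delta/6$. The constants and the handling of the distributional subtlety (passing from $\avgR$ to $\avgR^\mu$) all line up with the paper's argument.
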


\begin{proof}
Let $A_0$ be an algorithm which satisfies $\barsuccess_f^\mu(A_0) \ge \gamma$ and 
$\barcost^\mu(A_0) \le \avgR^\mu_\gamma(f) \le \avgR_\gamma(f)$.
By the success amplification \cref{lem:boosting-maxdist}, there exists an algorithm $A_1$ with $\barsuccess_f^\mu(A_1) \ge \gamma + \frac\delta3$ and $\barcost^\mu(A_1) \le 4 \cdot \barcost^\mu(A_0) \le 4 \cdot \avgR_\gamma(f)$.

Define $A$ to be the algorithm that simulates $A_1$ but aborts and guesses the value of $f$ on the input whenever $A_1$'s cost exceeds $\frac{6}{\delta} \cdot \barcost(A_1)$.
By construction, the average- and worst-case costs of $A$ satisfy the conclusion of the proposition.
And by Markov's inequality, the probability that $A$ aborts $A_1$'s execution is at most $\frac\delta6$, so its success probability is at least $(\gamma + \frac\delta3) - \frac\delta6 = \gamma + \frac\delta6$.
\end{proof}

We also use the following variant of the Azuma--Hoeffding inequality.

\begin{lemma}
\label{lem:azuma}
Let $(X_1,\ldots,X_n)$ be a sequence of random variables over $\mathbb{R}^n$ that satisfies
$\E[ X_{k} \mid X_1,\ldots,X_{k-1} ] \le \alpha$ and
$|X_{k}| \le c$
for all $k = 1,\ldots,n$. Then
\[
	\Pr\left[ \sum_{i=1}^n X_i > \alpha n + t \right] \le e^{-t^2/2c^2n}.
\]
\end{lemma}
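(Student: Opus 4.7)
The statement is a one-sided concentration inequality for a sequence that is not quite a martingale, since the conditional means are only bounded by $\alpha$ rather than equal to it. The natural plan is to reduce to the standard Azuma--Hoeffding martingale bound by recentering, and then run the Chernoff argument with Hoeffding's lemma applied conditionally.

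Concretely, let $\mathcal{F}_k = \sigma(X_1,\ldots,X_k)$, define $\mu_k = \E[X_k \mid \mathcal{F}_{k-1}]$, and set $Y_k = X_k - \mu_k$. Then $(Y_k)$ is a martingale difference sequence with respect to $(\mathcal{F}_k)$. Since the hypothesis gives $\mu_k \le \alpha$, we have the deterministic bound $X_k \le Y_k + \alpha$ almost surely, so $\sum_k X_k \le \sum_k Y_k + \alpha n$, and hence
\[
    \Pr\!\left[\textstyle\sum_{i=1}^n X_i > \alpha n + t\right] \le \Pr\!\left[\textstyle\sum_{i=1}^n Y_i > t\right].
\]
This reduction is the key step: the uniform bound $\mu_k \le \alpha$, rather than an equality, is exactly what lets us absorb the conditional means into the additive $\alpha n$ without inflating the tail.

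Next I would apply the standard Chernoff trick to the martingale $\sum Y_k$. For any $\lambda > 0$, iterate the tower property to get $\E\bigl[e^{\lambda \sum_{i=1}^n Y_i}\bigr] = \E\bigl[e^{\lambda \sum_{i=1}^{n-1} Y_i}\cdot \E[e^{\lambda Y_n}\mid \mathcal{F}_{n-1}]\bigr]$. The conditional distribution of $Y_n$ given $\mathcal{F}_{n-1}$ has mean $0$ and, because $|X_n|\le c$ entails $X_n - \mu_n \in [-c-\mu_n,\,c-\mu_n]$, is supported in an interval of length $2c$. Hoeffding's lemma therefore gives $\E[e^{\lambda Y_n}\mid \mathcal{F}_{n-1}] \le e^{\lambda^2 c^2/2}$. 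Induction yields $\E[e^{\lambda \sum_i Y_i}] \le e^{n\lambda^2 c^2/2}$, and Markov's inequality then gives $\Pr[\sum_i Y_i > t] \le e^{-\lambda t + n\lambda^2 c^2/2}$. Optimizing at $\lambda = t/(nc^2)$ produces the claimed bound $e^{-t^2/(2c^2 n)}$.

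The proof is essentially bookkeeping on top of two classical ingredients (Doob centering and Hoeffding's lemma), so there is no real technical obstacle; the one subtle point worth being careful about is getting the constant right in the exponent. Naively using $|Y_k| \le 2c$ with the symmetric form of Hoeffding's lemma (bounding by $e^{\lambda^2(4c)^2/8}$) would yield a worse constant. The correct bound $e^{\lambda^2 c^2/2}$ uses the tighter fact that the support of $Y_k$ conditional on $\mathcal{F}_{k-1}$ has length exactly $2c$ (not $4c$), which is what matches the stated exponent of $t^2/(2c^2n)$.
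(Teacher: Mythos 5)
Your proof is correct and follows essentially the same route as the paper, which recenters the sum (by the constant $\alpha$ rather than by the conditional means $\mu_k$) and then invokes the standard Azuma--Hoeffding inequality for the resulting supermartingale, whereas you re-derive that inequality via the Chernoff/Hoeffding-lemma argument. If anything, your version is more careful on the one delicate point: the paper's claim that $|Y_{k+1}-Y_k|\le c$ is not literally true (the increment $X_{k+1}-\alpha$ lies in an interval of length $2c$ but can exceed $c$ in absolute value when $\alpha\neq 0$), and your explicit use of Hoeffding's lemma with the interval-length form is exactly what justifies the stated constant in the exponent.
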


\begin{proof}
The sequence $(Y_0,Y_1,\ldots,Y_n)$ with $Y_0 = 0$ and $Y_k = \sum_{i=1}^k (X_i - \alpha)$ forms a submartingale with bounded differences $|Y_{k+1} - Y_k| \le c$, and so by the standard form of the Azuma--Hoeffding inequality,
\[
\Pr\left[ \sum_{i=1}^n X_i > \alpha n + t \right]
= \Pr\left[ Y_n > t \right] \le e^{-t^2/2c^2n}. \qedhere
\]
\end{proof}

We are now ready to complete the proof of the upper bound for the Direct Product Theorem.

\begin{theorem}[Upper Bound of \cref{thm:DPT}]
For any function $f \colon \{0,1\}^m \to \{0,1\}$, success parameter $\gamma$ in the range $\frac12 < \gamma < 1$, and $n \ge \frac{2(24)^2}{(\min\{2\gamma-1, 1-\gamma\})^{2}}$,
\begin{equation}
\R_{\gamma^n}(f^n) = O\left( n \cdot \avgR_\gamma(f) \right).
\end{equation}
\end{theorem}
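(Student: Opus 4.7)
The plan is to combine Yao's minimax principle with a sequential, distribution-conditional version of the naïve upper bound~\cref{eq:DPT-wcUB}, using per-iteration amplification (\cref{prop:bounded-costs}) to create success-probability slack that can absorb the error from a global worst-case cost cutoff. Concretely, since $\R_{\gamma^n}(f^n)$ is a worst-case measure it admits a minimax theorem, so it suffices to fix an arbitrary distribution $\mu$ over $(\{0,1\}^m)^n$ and construct a deterministic decision tree of depth $O(n \avgR_\gamma(f))$ whose probability of outputting $f^n(x)$ under $x \sim \mu$ is at least $\gamma^n$.

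Set $\delta = \min\{2\gamma-1, 1-\gamma\}$ and process the $n$ copies sequentially. After completing copies $1,\ldots,i-1$ with transcript $\tau_{<i}$, let $\mu_i$ be the conditional marginal of $\mu$ on $x^{(i)}$ given $\tau_{<i}$. Since $\mu_i$ is a well-defined distribution on $\{0,1\}^m$, \cref{prop:bounded-costs} yields an algorithm $R_i$ with
\[
\barsuccess_f^{\mu_i}(R_i) \ge \gamma + \tfrac{\delta}{6}, \qquad \barcost^{\mu_i}(R_i) \le 4\,\avgR_\gamma(f), \qquad \cost(R_i) \le \tfrac{24}{\delta}\,\avgR_\gamma(f).
\]
Running the $R_i$ in sequence, the conditional distribution of $x^{(i)}$ given $\tau_{<i}$ is exactly $\mu_i$ by definition, so the conditional success of $R_i$ given $\tau_{<i}$ is at least $\gamma + \delta/6$; the tower rule compounded across $i = 1,\ldots,n$ shows the joint probability that every $R_i$ outputs $f(x^{(i)})$ is at least $(\gamma + \delta/6)^n$.

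For the cost, the per-iteration costs $C_1,\ldots,C_n$ satisfy $\E[C_i \mid \tau_{<i}] \le 4\,\avgR_\gamma(f)$ and $|C_i| \le (24/\delta)\,\avgR_\gamma(f)$ deterministically. Applying \cref{lem:azuma} with $\alpha = 4\,\avgR_\gamma(f)$ and $c = (24/\delta)\,\avgR_\gamma(f)$, the total cost $\sum_i C_i$ exceeds any threshold $T = \lambda \, n\,\avgR_\gamma(f)$ with probability at most $\exp\!\bigl(-(\lambda-4)^2 n \delta^2 / 1152\bigr)$. Impose this cutoff on the sequential algorithm above: if the running query count ever reaches $T$, abort and emit an arbitrary default string. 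The overall success against $\mu$ is then at least $(\gamma+\delta/6)^n$ minus the abort probability. Using $(\gamma+\delta/6)^n \ge \gamma^n(1 + \delta/(6\gamma))^n$ and the hypothesis $n \ge 2(24)^2/\delta^2$, the amplification gain dominates the Azuma tail, so the algorithm succeeds with probability at least $\gamma^n$ and has depth $T = O(n\,\avgR_\gamma(f))$; Yao's minimax then yields the claimed randomized upper bound.

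The main obstacle is calibrating $\lambda$ in the final step: we need the amplification gain $(\gamma+\delta/6)^n - \gamma^n$ to dominate $\exp\!\bigl(-\Omega(n \delta^2)\bigr)$, and this balance is tight precisely at the threshold $n \approx 1/\delta^2$ where the standard-deviation scale $\sqrt{n}/\delta$ catches up with the mean scale $n$. In regimes where a single application of \cref{prop:bounded-costs} does not produce enough slack to beat the Azuma tail (most notably for $\gamma$ very close to $1/2$, where $\gamma^n$ is nearly $2^{-n}$ and the $\delta/6$ boost is small), the natural remedy is to first iterate \cref{lem:boosting-maxdist} to drive $\gamma$ up to a bounded-error value before running the pipeline above; this absorbs a $\gamma$-dependent multiplicative constant into the hidden $O(\cdot)$ while preserving the linear-in-$n$ dependence and the stated threshold on $n$.
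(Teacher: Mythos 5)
Your construction is the paper's: Yao's minimax, the sequential conditional-marginal algorithms $A_i$ from \cref{prop:bounded-costs}, the tower-rule product bound $(\gamma+\delta/6)^n$ on success, and an Azuma-based cutoff. But the final accounting step is where your argument genuinely breaks. You bound the \emph{unconditional} abort probability by $\exp(-\Omega((\lambda-4)^2 n\delta^2))$ and then subtract it from $(\gamma+\delta/6)^n$. Outside the bounded-error regime this subtraction is negative for any constant $\lambda$: e.g.\ for $\gamma = 1-\epsilon$ one has $(\gamma+\delta/6)^n \approx e^{-\Theta(\epsilon n)}$ while the Azuma tail is $e^{-\Theta(\epsilon^2 n)} \gg e^{-\Theta(\epsilon n)}$, so the tail swamps the success probability; making the two comparable forces $\lambda = \Theta(1/\sqrt{\epsilon})$ (and $\Theta(1/\delta)$ in the small-advantage regime), which destroys the claimed $O(n\,\avgR_\gamma(f))$ depth. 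Your proposed remedy---pre-boosting $\gamma$ to a bounded-error value via \cref{lem:boosting-maxdist}---does not rescue this: each boosting round costs a factor of $4$ and only multiplies the advantage by a constant, so reaching bounded error from $\gamma = \frac{1+\delta}{2}$ costs $\poly(1/\delta)$ in query complexity. That factor is genuinely not a universal constant (already for $\PARITY$, $\avgR_{2/3}$ exceeds $\avgR_{(1+\delta)/2}$ by a factor of $1/\delta$), so it cannot be "absorbed into the hidden $O(\cdot)$"; the theorem is precisely interesting in the regimes where this factor blows up.

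The missing idea is to run the Azuma argument \emph{conditioned on the success event} $W$ rather than unconditionally. Since each $A_i$ has success probability at least $\frac12$ on its conditional distribution, conditioning on correctness inflates each conditional expected cost by at most a factor of $2$, giving $\E[X_i \mid X_{<i}, W] \le 8\,\avgR_\gamma(f)$; the martingale bound then yields $\Pr[\text{abort}\mid W] \le e^{-\delta^2 n/(2\cdot 24^2)} \le e^{-1}$ once $n \ge 2(24)^2/\delta^2$. This turns the error accounting from additive (tail must beat $(\gamma+\delta/6)^n - \gamma^n$) into multiplicative (conditional tail must only be a constant below $1$), and the resulting constant-factor loss in success probability is exactly what the amplification slack $(1+\delta/6)^n \ge e^{\delta n/12} \ge e$ is there to absorb, with a cutoff of $9n\,\avgR_\gamma(f)$ that is a universal constant times $n\,\avgR_\gamma(f)$.
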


\begin{proof}
Fix $\delta = \min\{2\gamma - 1, 1-\gamma\}$.
By Yao's Minimax Principle, it suffices to show that for any fixed distribution $\mu$, there is an algorithm with worst-case cost $O(n \, \avgR_\gamma(f) )$ that computes $f^n$ with score at least $\gamma^n$ on inputs drawn from the distribution $\mu$.
Fix $\mu$.
Consider the algorithm $B_0$ that computes the value of $f^n$ on inputs drawn from $\mu$ by 
simulating algorithms $A_1,\ldots,A_n$ on the $n$ instances of the input in the following way.

First, let $\mu_1$ denote the marginal of $\mu$ on the first instance of the input.
By \cref{prop:bounded-costs}, there exists an algorithm $A_1$ with
$\barsuccess_f^{\mu_1}(A_1) \ge \gamma + \frac\delta6$, $\barcost^{\mu_1}(A_1) \le 4 \cdot \avgR_\gamma(f)$, and $\cost(A_1) \le \frac{24}{\delta} \avgR_\gamma(f)$.
The algorithm $B_0$ executes $A_1$ on the first instance of the input.
Then for $i=2,3,\ldots,n$ considered in order, let $\mu_i$ denote the marginal of $\mu$ on the $i$th instance of the input conditioned on the bits of the first $i-1$ instances that have been revealed by the algorithms $A_1,\ldots,A_{i-1}$ that have been previously executed.
There exists an algorithm $A_i$ that satisfies 
$\barsuccess_f^{\mu_i}(A_i) \ge \gamma + \frac\delta6$, $\barcost^{\mu_i}(A_i) \le 4 \cdot \avgR_\gamma(f)$, and $\cost(A_i) \le \frac{24}{\delta} \avgR_\gamma(f)$.
The algorithm $B_0$ executes $A_i$ on the $i$th instance of the input.

By the guarantees on $A_1,\ldots,A_n$, the score of $B_0$ is
\[
	\barsuccess_{f^n}^\mu(B_0) = \prod_{i=1}^n \barsuccess_f^{\mu_i}(A_i)
	\ge \left( \gamma + \tfrac\delta6 \right)^n
\]
and its expected cost satisfies
\[
	\barcost^\mu(B_0) = \sum_{i=1}^n \barcost^{\mu_i}(A_i) \le 4n \cdot \avgR_\gamma(f).
\]
To complete the proof of the upper bound in \cref{thm:DPT}, we must also bound the worst-case cost of our algorithm for $f^n$. Unfortunately, the cost of $B_0$ can be as large as $\frac{24}{\delta} n \cdot \avgR_\gamma(f)$, which is larger than our target bound by a factor of $O(\frac1\delta)$.

So let us now define $B$ to be the algorithm that simulates $B_0$ but aborts whenever the total cost of the simulation exceeds $9 n \cdot \avgR_\gamma(f)$.
The algorithm $B$ outputs any value whenever it aborts.
Let us write $E$ to denote the event that $B$ aborts, and let $W$ denote the event that $B_0(x) = f^n(x)$.
The probability that $B$ succeeds on inputs drawn from $\mu$ is bounded below by
\begin{align*}
\Pr_{x \sim \mu}\left[ B(x) = f^n(x)\right]
&\ge \Pr_{x \sim \mu}\left[ W \wedge \overline{E} \right] \\
&= \Pr_{x \sim \mu}\left[ W \right] \cdot \Pr_{x \sim \mu}\left[ \overline{E} \,\mid\, W \right].
\end{align*}
We have already seen that $\Pr_{x \sim \mu}\left[ W \right] = \barsuccess_{f^n}^{\mu^n}(B_0) \ge (\gamma + \frac\delta6)^n$.
It remains to bound the second term.

Define $X_i$ to be a random variable that represents the cost of $A_i$ when we run $B_0$.
The expected value of $X_i$ conditioned on $X_1,\ldots,X_{i-1}$ and $W$ satisfies
\if\conf1
\begin{align*}
\E_{x \sim \mu}\big[ X_i \mid &X_1,\ldots,X_{i-1},W \big] \\
&\le \max_{\mu_i} \E_{x_i \sim \mu_i}\left[ X_i \,\mid\, A_i(x_i) = f(x_i) \right] \\
&\le \max_{\mu_i} \frac{\E_{x_i \sim \mu_i}[ X_i ]}{\Pr_{x_i \sim \mu_i}[ A_i(x_i) = f(x_i)]} \\
&\le \max_{\mu_i} \frac{\barcost^{\mu_i}(A_i)}{\barsuccess_f^{\mu_i}(A_i)} 
\end{align*}
\else\begin{align*}
\E_{x \sim \mu}\left[ X_i \mid X_1,\ldots,X_{i-1},W \right]
&\le \max_{\mu_i} \E_{x_i \sim \mu_i}\left[ X_i \,\mid\, A_i(x_i) = f(x_i) \right] \\
&\le \max_{\mu_i} \frac{\E_{x_i \sim \mu_i}[ X_i ]}{\Pr_{x_i \sim \mu_i}[ A_i(x_i) = f(x_i)]} \\
&\le \max_{\mu_i} \frac{\barcost^{\mu_i}(A_i)}{\barsuccess_f^{\mu_i}(A_i)} 
\end{align*}
\fi

Since the success probability score of any algorithm is bounded below by $\barsuccess_f^{\mu_i}(A_i) \ge \frac12$, we conclude that
\if\conf1
\begin{align*}
\E_{x \sim \mu}\left[ X_i \mid X_1,\ldots,X_{i-1},W \right]
&\le 2\max_{\mu_i} \barcost^{\mu_i}(A_i) \\
&\le 8 \, \avgR_\gamma(f).
\end{align*}
\else
\[
\E_{x \sim \mu}\left[ X_i \mid X_1,\ldots,X_{i-1},W \right]
\le 2\max_{\mu_i} \barcost^{\mu_i}(A_i)
\le 8 \, \avgR_\gamma(f).
\]
\fi
And each $X_i$ is bounded by $0 \le X_i \le \frac{24}{\delta} \avgR_\gamma(f)$.
By \cref{lem:azuma} with parameters $t = n \avgR_\gamma(f)$ and $c = \frac{24}{\delta} \avgR_\gamma(f)$, we then have 
\if\conf1
\begin{align*}
	\Pr[ E \,\mid\, W ] &\le \Pr\big[ X > \E[X \mid W] + n \cdot \avgR_\gamma(f) \mid W \big] \\
	&\le e^{-\frac{1}{2 \cdot (24)^2} \delta^2 n}.
\end{align*}
\else
\[
	\Pr[ E \,\mid\, W ] \le \Pr\big[ X > \E[X \mid W] + n \cdot \avgR_\gamma(f) \mid W \big]
	\le e^{-\frac{1}{2 \cdot (24)^2} \delta^2 n}.
\]
\fi
When $n > 2(24)^2/\delta^2$, the last expression is smaller than $e^{-1}$.
And in this case the overall success of $B$ is at least $(\gamma + \frac{\delta}6)^n e^{-1} \ge \gamma^n \cdot (1 + \frac{\delta}6)^n e^{-1} \ge \gamma^n e^{\frac{\delta}{12}n - 1}$, which is at least $\gamma^n$ when $n \ge 12/\delta$.
\end{proof}


\section{Proof of the Tensorization Lemma}
\label{sec:tensorization}

\tensorization*

\begin{proof}
We start by proving that $\maxDS_\alpha^{\mu^n}(f^n) \le \maxDS_\alpha^\mu(f)^n$.
Let $A$ be a randomized algorithm with discounted score $\ds_{f,\alpha}^\mu(A) = \maxDS_\alpha^\mu(f)$.
Consider the algorithm $B$ that computes $f^n$ by running $n$ independent instances of $A$ sequentially on each of the $n$ inputs.
By the independence of the $n$ inputs and of the $n$ instances of $A$, the discounted score of $B$ over $\mu^n$ is
\if\conf1
\begin{align*}
&\ds_{f^n,\alpha}^{\mu^n}(B) \\
&= \E_{\substack{x \sim \mu^n \\ D \sim A^n}}\left[ \prod_{i=1}^n \sfmu(D^{(i)}(x_i)) \cdot e^{-\alpha \, \cost(D^{(i)}(x_i))} \right] \\
&= \prod_{i=1}^n \E_{\substack{x_i \sim \mu \\ D^{(i)} \sim A}}\left[ \sfmu(D^{(i)}(x_i)) \cdot e^{-\alpha \, \cost(D^{(i)}( x_i))} \right] \\
&= \ds_{f,\alpha}^\mu(A)^n.
\end{align*}
\else
\begin{align*}
\ds_{f^n,\alpha}^{\mu^n}(B) 
&= \E_{\substack{x_1,\ldots,x_n \sim \mu \\ D^{(1)},\ldots,D^{(n)} \sim A}}\left[ \prod_{i=1}^n \sfmu(D^{(i)}(x_i)) \cdot e^{-\alpha \, \cost(D^{(i)}(x_i))} \right] \\
&= \prod_{i=1}^n \E_{\substack{x_i \sim \mu \\ D^{(i)} \sim A}}\left[ \sfmu(D^{(i)}(x_i)) \cdot e^{-\alpha \, \cost(D^{(i)}( x_i))} \right] \\
&= \ds_{f,\alpha}^\mu(A)^n.
\end{align*}
\fi
Therefore, $\maxDS_\alpha^{\mu^n}(f^n) \le \ds_{f^n,\alpha}^{\mu^n}(B) = \ds_{f,\alpha}^\mu(A)^n = \maxDS_\alpha^\mu(f)^n$, as we wanted to show.

\medskip

We now prove that $\maxDS_\alpha^{\mu^n}(f^n) \ge \maxDS_\alpha^\mu(f)^n$.
Let $B$ be an algorithm that computes $f^n$ and has discounted score $\ds_{f^n, \alpha}^{\mu^n}(B) = \maxDS_\alpha^{\mu^n}(f^n)$.
Without loss of generality, we can assume that $B$ is deterministic.
We define a randomized algorithm $A$ for $f$ in the following way.

The algorithm $A$ is defined with respect to a specific distribution $\nu$ on the domain of $f^n$ that depends on $f$, $\mu$, $D$, and $\alpha$ in a way that will be specified later.
The behaviour of $A$ is defined as follows.
First, it picks $i \in [n]$ uniformly at random.
Next, it runs the algorithm $A_i$ which works by running $B$ on $n-1$ fake inputs and the real input to $f$, with the real input placed in position $i$.
When $B$ makes a query inside the $i$th input to $f$, the algorithm $A_i$ queries the real input.
And when $B$ makes a query to the $j$th input for some $j \neq i$, the answer to the query is generated artificially by the algorithm $A_i$; it generates the query answer from the conditional distribution $\nu$ conditioned on the transcript of $B$ so far.
(That is, it samples a string $y$ from $\nu$ conditioned on the string being consistent with all the queries answered so far, and it determines the value of the query by making the corresponding query to $y$.)

Let $\pi_i$ be the distribution over the leaves of $D$ reached by this simulation by $A_i$, assuming the real input (the one at position $i$) was sampled from $\mu$.
Extend $\pi_i$ to a distribution over the domain of $f^n$ by replacing each leaf of $B$ with a sample from $\mu^n$ conditioned on reaching that leaf. 
The behaviour of $A_i$ on an input sampled from $\mu$ is then the same as the behaviour of $B$ on an input sampled from $\pi_i$.
For an algorithm acting on $n$ inputs to $f$, let $\cost_i$ denote the number of queries to the $i$th input.
We now write
\if\conf1
\begin{align*}
\ds_{f,\alpha}^\mu(A) 
	&= \E_{i \in [n]} \left[ \sum_{\ell \in B} \pi_i(\ell) \cdot \sfmu(\ell_i) \, e^{-\alpha \, \cost_i(\ell)} \right] \\
	&= \sum_{\ell \in B} \E_{i \sim [n]}\left[ \pi_i(\ell) \cdot \sfmu(\ell_i) \, e^{-\alpha \, \cost_i(\ell)} \right].
\end{align*}
\else
\[
	\ds_{f,\alpha}^\mu(A) 
	= \E_{i \in [n]} \left[ \sum_{\ell \in B} \pi_i(\ell) \cdot \sfmu(\ell_i) \, e^{-\alpha \, \cost_i(\ell)} \right]
	= \sum_{\ell \in B} \E_{i \sim [n]}\left[ \pi_i(\ell) \cdot \sfmu(\ell_i) \, e^{-\alpha \, \cost_i(\ell)} \right].
\]
\fi
The positivity of the score measures guarantees that each term of the sum is non-negative.
By applying the AM-GM inequality to each term of the sum, the last expression is bounded below by
\if\conf1
\begin{align*}
\sum_{\ell \in B} &\E_{i \sim [n]}\left[ \pi_i(\ell) \cdot \sfmu(\ell_i) \, e^{-\alpha \, \cost_i(\ell)} \right] \\
&\ge \sum_\ell \left( \prod_{i=1}^n \pi_i(\ell) \cdot \sfmu(\ell_i) \, e^{-\alpha \, \cost_i(\ell)} \right)^{1/n}.
\end{align*}
\else
\[
\sum_{\ell \in B} \E_{i \sim [n]}\left[ \pi_i(\ell) \cdot \sfmu(\ell_i) \, e^{-\alpha \, \cost_i(\ell)} \right]
\ge \sum_\ell \left( \prod_{i=1}^n \pi_i(\ell) \cdot \sfmu(\ell_i) \, e^{-\alpha \, \cost_i(\ell)} \right)^{1/n}.
\]
\fi
We next observe that $\sum_i \cost_i(\ell) = \cost(\ell)$.
By definition, we also have $\prod_i \sfmu(\ell_i) = \sss_{f^n}^{\mu^n}(\ell)$.
And then the key observation at the heart of the proof is that we also have
\[
	\prod_{i=1}^n \pi_i(\ell) = \nu(\ell)^{n-1} \mu^n(\ell).
\]
To see why this last identity holds, note that we can write $\pi_i(\ell) = \prod_{(v,w)} \pi_i(w \mid v)$ where the product is over all the directed edges $(v,w)$ from the root of the tree to the leaf $\ell$. By our construction, 
\[
\pi_i(w|v) = \begin{cases}
	\mu^n(w|v) & \mbox{if node $v$ queries the $i$th instance} \\
	\nu(w|v) & \mbox{otherwise}.
\end{cases}
\]
So we have
\if\conf1
\begin{align*}
	\prod_{i=1}^n \pi_i(\ell) 
	&= \prod_{(v,w)} \prod_{i} \pi_i(w|v) \\
	&= \prod_{(v,w)} \mu^n(w|v) \nu(w|v)^{n-1} \\
	&= \prod_{(v,w)} \mu^n(w|v) \left( \prod_{(v,w)} \nu(w|v) \right)^{n-1} \\
	&= \mu^n(\ell) \nu(\ell)^{n-1}.
\end{align*}
\else
\begin{align*}
	\prod_{i=1}^n \pi_i(\ell) 
	= \prod_{(v,w)} \prod_{i} \pi_i(w|v) 
	&= \prod_{(v,w)} \mu^n(w|v) \nu(w|v)^{n-1} \\
	&= \prod_{(v,w)} \mu^n(w|v) \left( \prod_{(v,w)} \nu(w|v) \right)^{n-1}
	= \mu^n(\ell) \nu(\ell)^{n-1}.
\end{align*}
\fi

Combining all of the above, we then end up with
\if\conf1
\begin{align*}
	\ds_{f,\alpha}^\mu&(A) \ge \\
	&\sum_\ell \left( \nu(\ell)^{n-1} \mu^n(\ell) \cdot \sss_{f^n}^{\mu^n}(\ell) \, e^{-\alpha \, \cost(\ell)} \right)^{1/n}.
\end{align*}
\else
\[
	\ds_{f,\alpha}^\mu(A) 
	\ge \sum_\ell \left( \nu(\ell)^{n-1} \mu^n(\ell) \cdot \sss_{f^n}^{\mu^n}(\ell) \, e^{-\alpha \, \cost(\ell)} \right)^{1/n}.
\]
\fi
Define $\nu$ by setting
\[
	\nu(y) = \frac{\mu^n(y) \cdot \sss_{f^n}^{\mu^n}(B(y)) \, e^{-\alpha \, \cost(B(y))}}{\sum_z \mu^n(z)\cdot \sss_{f^n}^{\mu^n}(B(z)) \, e^{-\alpha \, \cost(B(z))}}.
\]
This gives the corresponding distribution on leaves defined by
\[
	\nu(\ell) = \frac{\mu^n(\ell) \cdot \sss_{f^n}^{\mu^n}(\ell) \, e^{-\alpha \, \cost(\ell)}}{\sum_{\ell'} \mu^n(\ell') \cdot \sss_{f^n}^{\mu^n}(\ell') \, e^{-\alpha \, \cost(\ell')}},
\]
and so we obtain
\if\conf1
\begin{align*}
	\ds_{f,\alpha}^\mu(A) 
	&\ge \left( \sum_{\ell} \mu^n(\ell) \cdot \sss_{f^n}^{\mu^n}(\ell) \, e^{-\alpha \, \cost(\ell)} \right)^{1/n} \\
	&= \ds_{f^n,\alpha}^{\mu^n}(B)^{1/n}. \qedhere
\end{align*}
\else
\[
	\ds_{f,\alpha}^\mu(A) 
	\ge \left( \sum_{\ell} \mu^n(\ell) \cdot \sss_{f^n}^{\mu^n}(\ell) \, e^{-\alpha \, \cost(\ell)} \right)^{1/n} 
	= \ds_{f^n,\alpha}^{\mu^n}(B)^{1/n}. \qedhere
\]
\fi
\end{proof}


\section{Proof of the Equivalence Lemma}
\label{sec:equivalence}

\subsection{Proof Overview}

Both the upper and lower bounds in \cref{lem:maxDS-sR} rely on an alternative characterization of discounted score and score-weighted cost in terms of a reweighted distribution on leaves.
Let us first introduce this distribution.

For any randomized algorithm $R$, let us denote the distribution $\pi_R$ on the leaves of the deterministic decision trees in the support of $R$ obtained by taking the probability that we reach this leaf when $D \sim R$ and $x \sim \mu$ and rescaling it by the weight $\barsfmu(\ell)$. 
Namely, letting $\mu(x)$ and $R(D)$ denote the probability that $x \sim \mu$ and $D \sim R$, respectively, we define the rescaled probability of the leaf $\ell$ of $D$ to be
\[
	\pi_R(\ell) 
= \frac{\mu(\ell) R(D) \cdot \barsfmu(\ell)}{\barsfmu(R)}.
\]

The discounted score of the randomized algorithm $R$ satisfies
\[
	\ds_{f,\alpha}^\mu(R) 
	= \barsfmu(R) \cdot \E_{\ell \sim \pi_R}[ e^{-\alpha \, \cost(\ell)}]
\]
and the success-conditioned cost of an algorithm satisfies
\[
	\scost_f^\mu(R) 
	= \E_{\ell \sim \pi_R}[ \cost(\ell) ].
\]
This means that we can express the maximum discounted score of $f$ as
\[
	\maxDS_\alpha^\mu(f) 
	= \max_{R} \barsfmu(R) \cdot \E_{\ell \sim \pi_R}[ e^{-\alpha \, \cost(\ell)}]
\]
and its distributional success-conditioned query complexity of $f$ can be expressed as
\[
	\sR_\gamma^\mu(f) = \min_{R \,:\, \barsfmu(R) \ge \gamma} \E_{\ell \sim \pi_R}[ \cost(\ell) ].
\]
With this characterization,
the lower bound in \cref{lem:maxDS-sR} is an immediate consequence of Jensen's inequality.

The upper bound requires a few more ingredients.
The first is \cref{prop:success-discount}, showing that for every success parameter $\gamma$, there exists a discount parameter $\alpha^*$ for which the maximum discounted score $\maxDS_{\alpha^*}^\mu(f)$ is achieved by a randomized algorithm $R$ with average score $\barsfmu(R)$ exactly equal to $\gamma$.
Choosing the parameter $\alpha^*$ and corresponding algorithm $R^*$ in the conclusion of the proposition, we obtain the identity
\[
	\maxDS_{\alpha^*}^\mu(f) = \gamma \cdot \E_{\pi_{R^*}}[ e^{-\alpha \, \cost(\ell)} ].
\]
What we would like to use at this point is a ``reverse Jensen inequality'' to obtain a bound of the form
\[
	\E_{\pi_{R^*}}[ e^{-\alpha^* \, \cost(\ell)} ] \stackrel{?}{\le} 
	e^{-c \, \alpha^* \cdot \E_{\pi_{R^*}}[ \cost(\ell)]}
\]
for some constant $c$.
No such inequality holds in general, but we do have the fact that when $y$ is a random variable bounded by some interval $[0,d]$, then convexity of the $e^{-y}$ function implies that $e^{-y} \le (1-\frac{y}d) e^{-0} + \frac yd e^{-d}$ and so
\[
	\E[e^{-y}] \le 1 - \frac{1-e^{-d}}{d} \E[y] \le e^{-\frac{1-e^{-d}}d \E[y]}.
\]
Applying this inequality to the random variable $y = \alpha^* \cost(R^*(x))$ then gives exactly the type of inequality we are aiming for if we have the guarantee that the cost of the leaves in $R^*$ is bounded above by $O(1/\alpha^*)$. 

In general, however, we do not have any such guarantee on the maximum cost of the algorithm $R^*$.
We can get around this limitation by considering a truncation of $R^*$ that yields a different algorithm $R^\dagger$.
The key to completing the argument with this approach is to show that such a truncation does not affect the average cost by more than a constant multiplicative factor.
And this is where the fact that $R^*$ has maximum discounted score for the parameter $\alpha^*$ comes into play: we can argue that any such algorithm cannot have much of its average cost (even in the rescaled distribution $\pi_{R^*}$) on high-cost leaves.
The proof of this last fact is obtained by a contradiction argument, showing that if the last condition holds, we can define an algorithm with strictly better discounted score by truncating part of the tree of $R^*$, contradicting the fact that it has maximal discounted score.

\subsection{The Proof}

\maxDSsR*

\begin{proof}
By the characterization of maximum discounted score introduced in the proof overview,
\begin{align*}
\maxDS_\alpha^\mu(f) 
&= \max_{R} \barsfmu(R) \cdot \E_{\ell \sim \pi_R}[ e^{-\alpha \, \cost(\ell)}] \\
&\ge \max_{R \,:\, \barsfmu(R) \ge \gamma} \barsfmu(R) \cdot \E_{\pi_R}[ e^{-\alpha \, \cost(\ell)}] \\
&\ge \gamma \cdot \max_{R \,:\, \barsfmu(R) \ge \gamma} \E_{\pi_R}[ e^{-\alpha \, \cost(\ell)} ].
\end{align*}
By Jensen's inequality and the convexity of the function $y \mapsto e^{-\alpha y}$, 
for every algorithm $R$ we have the inequality $\E_{\pi_R}[ e^{-\alpha \, \cost(\ell)}] \ge e^{-\alpha \, \E_{\pi_R}[ \cost(\ell) ]}$. 
Therefore,
\begin{align*}
\maxDS_\alpha^\mu(f)
&\ge \gamma \cdot \max_{R \,:\, \barsfmu(R) \ge \gamma} e^{- \alpha \, \E_{\pi_R}[ \cost(\ell) ]} \\
&= \gamma \cdot e^{- \alpha \cdot \min_{R \,:\, \barsfmu(R) \ge \gamma} \E_{\pi_R}[ \cost(\ell) ] } \\
&= \gamma \cdot e^{-\alpha \, \sR_\gamma^\mu(f) }.
\end{align*}
This completes the proof of the lower bound in the lemma.

\bigskip
We now complete the proof of the upper bound.
Let $\alpha^*$ be the discount rate and $R^*$ be the associated algorithm in the conclusion of \cref{prop:success-discount}.
These objects satisfy
\if\conf1
\begin{align*}
	\maxDS_{\alpha^*}^\mu(f) 
	&= \barsfmu(R^*) \cdot \E_{\ell \sim \pi_{R^*}}[ e^{-\alpha^* \, \cost(\ell)} ] \\
	&= \gamma \cdot \E_{\ell \sim \pi_{R^*}}[ e^{-\alpha^* \, \cost(\ell)} ].
\end{align*}
\else
\[
	\maxDS_{\alpha^*}^\mu(f) 
	= \barsfmu(R^*) \cdot \E_{\ell \sim \pi_{R^*}}[ e^{-\alpha^* \, \cost(\ell)} ]
	= \gamma \cdot \E_{\ell \sim \pi_{R^*}}[ e^{-\alpha^* \, \cost(\ell)} ].
\]
\fi

Let $d$ be a depth parameter whose value will be set later.
The trivial inequality $\min\{\cost(\ell),d\} \le \cost(\ell)$ implies that $\maxDS_{\alpha^*}^\mu(f)$ can be bounded above by
\[
	\maxDS_{\alpha^*}^\mu(f) \le \gamma \cdot \E_{\ell \sim \pi_{R^*}}[ e^{-\alpha^* \, \min\{\cost(\ell),d\}} ].
\]
By convexity of the function $e^{-\alpha^* \, y}$, when $y$ is bounded by $0 \le y \le d$, then
\if\conf1
\begin{align*}
	e^{-\alpha^* \, y} 
	&\le \left( 1 - \frac{y}d \right) \cdot e^{-\alpha^* \cdot 0} + \frac yd \cdot e^{-\alpha^* \, d} \\
	&= 1 - \frac{1-e^{-\alpha^* \, d}}d \cdot y.
\end{align*}
\else
\[
	e^{-\alpha^* \, y} 
	\le \left( 1 - \frac{y}d \right) \cdot e^{-\alpha^* \cdot 0} + \frac yd \cdot e^{-\alpha^* \, d}
	= 1 - \frac{1-e^{-\alpha^* \, d}}d \cdot y.
\]
\fi
Applying this inequality to the random variable $y = \min\{\cost(\ell), d\}$, we obtain
\if\conf1
\begin{align*}
	\E_{\ell \sim \pi_{R^*}}&[ e^{-\alpha^* \, \min\{\cost(\ell), d\}} ] \\
	&\le 1 - \frac{1 - e^{-\alpha^* \, d}}d \cdot \E_{\pi_{R^*}}[ \min\{\cost(\ell), d\} ] \\
	&\le e^{-\frac{1-e^{-\alpha^* \, d}}d \cdot \E_{\pi_{R^*}}[ \min\{\cost(\ell), d\}]}.
\end{align*}
\else
\[
	\E_{\ell \sim \pi_{R^*}}[ e^{-\alpha^* \, \min\{\cost(\ell), d\}} ]
	\le 1 - \frac{1 - e^{-\alpha^* \, d}}d \cdot \E_{\pi_{R^*}}[ \min\{\cost(\ell), d\} ]
	\le e^{-\frac{1-e^{-\alpha^* \, d}}d \cdot \E_{\pi_{R^*}}[ \min\{\cost(\ell), d\}]}.
\]
\fi
Setting $d = 2\ln(2)/\alpha^*$, we observe that
\[
	\frac{1-e^{-\alpha^* \, d}}{d} 
	= \frac{1 - 2^{-2}}{2 \ln(2) / \alpha^*} 
	= \frac3{8\ln(2)} \cdot \alpha^*.
\]
Combining all of the above gives us the inequality
\[
	\maxDS_{\alpha^*}^\mu(f) \le \gamma \cdot e^{-c' \alpha^* \E_{\pi_{R^*}}[ \cost(\lfloor D \rfloor_d(x) )]}
\]
for the constant $c' = 3/8\ln(2)$.
To complete the proof of the upper bound, it remains to bound $\E_{\pi_{R^*}}[ \min\{\cost(\ell), d\}]$ from below by a multiplicative constant times $\E_{\pi_{R^*}}[ \cost(\ell) ]$.

\begin{claim}
With $R^*$ and $d$ as defined above, $\E_{\pi_{R^*}}[ \min\{\cost(\ell), d\} ] \ge \frac13 \E_{\pi_{R^*}}[ \cost(\ell) ]$.
\end{claim}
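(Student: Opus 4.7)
My plan is to follow the contradiction-via-truncation approach hinted at in the overview, but carried out locally at every internal node rather than at a single fixed scale. Without loss of generality $R^*$ may be taken deterministic (since the discounted score is linear in the distribution over trees), and the key input is that truncating any subtree of $R^*$ cannot strictly increase the discounted score.

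The first step is to extract the local optimality condition at each internal node $v$, at depth $c_v$, of $R^*$. Replacing the subtree $T_v$ rooted at $v$ by the single leaf $v$ yields a valid algorithm whose discounted score is no larger; after cancelling the contribution of leaves outside $T_v$, this reduces to $\sum_{\ell \in T_v} p_\ell\, e^{-\alpha^*(\cost(\ell) - c_v)} \ge p_v$, where $p_\ell := \mu(\ell)\sfmu(\ell)$. Using $\sfmu(v) \ge \tfrac12$ and $\sfmu(\ell) \le 1$ we get $p_v \ge \tfrac12 Q_v$ with $Q_v := \sum_{\ell \in T_v} p_\ell$, so dividing by $Q_v$ turns the inequality into a conditional expectation bound
\[
\E_{\pi_{R^*}}\bigl[\, e^{-\alpha^*(\cost(\ell) - c_v)} \,\bigm|\, \ell \in T_v \,\bigr] \ge \tfrac12.
\]

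The second step is to upgrade this expectation bound into a tail bound via Markov's inequality applied to the variable $1 - e^{-\alpha^*(\cost(\ell) - c_v)} \in [0,1]$ at threshold $\tfrac34$. Since $d = 2\ln(2)/\alpha^*$ is calibrated precisely so that $e^{-\alpha^* d} = \tfrac14$, this gives $\Pr_{\pi_{R^*}}[\cost(\ell) - c_v > d \mid \ell \in T_v] \le \tfrac23$. Summing over all internal $v$ at a common depth $t$---whose events $\{\ell \in T_v\}$ partition $\{\cost(\ell) > t\}$---yields the key shift inequality
\[
\Pr_{\pi_{R^*}}[\cost(\ell) > t + d] \le \tfrac23 \cdot \Pr_{\pi_{R^*}}[\cost(\ell) > t] \qquad \text{for every } t \ge 0.
\]

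The final step is a short telescoping calculation: writing $\E[X] = \sum_{t \ge 0} \Pr[X > t]$ for the non-negative integer random variable $\cost(\ell)$, the shift inequality yields $\E_{\pi_{R^*}}[(\cost(\ell) - d)_+] = \sum_{t \ge 0} \Pr_{\pi_{R^*}}[\cost(\ell) > t + d] \le \tfrac23\, \E_{\pi_{R^*}}[\cost(\ell)]$, and combining with the identity $\E[\cost(\ell)] = \E[\min(\cost(\ell), d)] + \E[(\cost(\ell) - d)_+]$ gives the desired $\E_{\pi_{R^*}}[\min(\cost(\ell), d)] \ge \tfrac13\, \E_{\pi_{R^*}}[\cost(\ell)]$. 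I expect the main subtlety to be the Markov step: the threshold $\tfrac34$, the discount value $\tfrac14 = e^{-\alpha^* d}$, and the stride $d$ must all align so that the resulting geometric decay factor $\tfrac23$ composes with the telescoping to yield exactly the $\tfrac13$ constant required by the subsequent bound on $\maxDS_{\alpha^*}^\mu(f)$. Everything else is essentially bookkeeping.
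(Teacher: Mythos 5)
Your proof is correct, and it reaches the claim through the same pivotal inequality as the paper — the geometric tail decay $\Pr_{\pi_{R^*}}[\cost(\ell) > t+d] \le \tfrac23 \Pr_{\pi_{R^*}}[\cost(\ell) > t]$ with $d = 2\ln(2)/\alpha^*$ — but you derive that inequality by a different mechanism. The paper truncates $R^*$ globally at a single depth $\tau$, writes out the discounted score of $R^*$ and of $\lfloor R^*\rfloor_\tau$ as sums over $\{\cost \le \tau\}$ and $\{\cost > \tau\}$, invokes optimality to compare them, and then sandwiches $\E_{\pi_{R^*}}[e^{-\alpha^*\cost(\ell)} X_{>\tau}]$ between two step functions to extract the decay factor. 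You instead extract a \emph{local} optimality condition at every internal node $v$ (truncating the subtree $T_v$ cannot help), normalize it into the conditional bound $\E_{\pi_{R^*}}[e^{-\alpha^*(\cost(\ell)-c_v)} \mid \ell \in T_v] \ge \tfrac12$ using the $\ge\tfrac12$ boundedness of the score at the new leaf, and convert it to a conditional tail bound via Markov; summing over the nodes at depth $t$, which partition $\{\cost(\ell)>t\}$, recovers the same shift inequality with the same constants. Your finish is also slightly cleaner: rather than the paper's geometric series over multiples of $d$ bounding $\E[\cost(\ell) X_{>d}] \le 3d\Pr[X_{>d}]$, you apply the shift inequality at every integer $t$ to get $\E[(\cost(\ell)-d)_+] \le \tfrac23\E[\cost(\ell)]$ and subtract. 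Two minor points worth spelling out in a full write-up: (i) the reduction to deterministic $R^*$ should note that every tree in the support of an optimal mixture is itself optimal and that $\pi_{R^*}$ is a convex combination of the corresponding per-tree distributions, so the claim transfers; (ii) the step $\E[(\cost(\ell)-d)_+] \le \sum_{t\ge0}\Pr[\cost(\ell) > t+d]$ uses monotonicity of the tail since $d$ need not be an integer — but this inequality goes in the direction you need. Neither is a gap.
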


For any depth $\tau$, let us write $X_{> \tau}$ to denote the event that $\cost(\ell) > \tau$ and, in a slight abuse of notation, let us also write $X_{> \tau}$ to denote the indicator random variable corresponding to this event.
Similarly, let $X_{\le \tau}$ denote the event that $\cost(\ell) \le \tau$ and its indicator random variable.

Let us write $L(\mu,R^*)$ to denote the distribution on the leaf $D(x)$ reached when we draw $x \sim \mu$ and $D \sim R^*$.
We can write the discounted score of $R^*$ as
\if\conf1
\begin{align*}
\ds_{f,\alpha^*}^\mu(R^*)
&= \E_{\ell \sim L(\mu,R^*)}[ \barsfmu(\ell) \cdot e^{-\alpha^* \, \cost(\ell)} X_{\le \tau}] \\
& \quad + \E_{\ell \sim L(\mu,R^*)}[ \barsfmu(\ell) \cdot e^{-\alpha^* \, \cost(\ell)} X_{> \tau}] \\
&= \E_{\ell \sim L(\mu,R^*)}[ \barsfmu(\ell) \cdot e^{-\alpha^* \, \cost(\ell)} X_{\le \tau}] \\
	&\quad + \gamma \, \E_{\pi_{R^*}}[e^{-\alpha^* \, \cost(\ell)} X_{> \tau}].
\end{align*}
\else
\begin{align*}
\ds_{f,\alpha^*}^\mu(R^*)
&= \E_{\ell \sim L(\mu,R^*)}[ \barsfmu(\ell) \cdot e^{-\alpha^* \, \cost(\ell)} X_{\le \tau}]
+ \E_{\ell \sim L(\mu,R^*)}[ \barsfmu(\ell) \cdot e^{-\alpha^* \, \cost(\ell)} X_{> \tau}] \\
&= \E_{\ell \sim L(\mu,R^*)}[ \barsfmu(\ell) \cdot e^{-\alpha^* \, \cost(\ell)} X_{\le \tau}]
	+ \gamma \, \E_{\pi_{R^*}}[e^{-\alpha^* \, \cost(\ell)} X_{> \tau}].
\end{align*}
\fi

Let $\lfloor R^* \rfloor_\tau$ be the randomized algorithm by truncating $R^*$ at depth $\tau$.
The truncation is completed by replacing each internal node at depth $\tau$ in a tree $D$ in the support of $R^*$ with a leaf to obtain the tree $\lfloor D \rfloor_\tau$.
For every leaf $\ell$ in a decision tree $D$ in the support of $R^*$, let us write $\lfloor \ell \rfloor_\tau$ to denote the leaf in $\lfloor D \rfloor_\tau$ reached by the inputs that lead to $\ell$ in $D$.
The cost of these leaves satisfies $\cost(\lfloor \ell \rfloor_\tau) = \min\{\cost(\ell),\tau\}$.
And the discounted score of the algorithm $\lfloor R^* \rfloor_\tau$ is bounded below by
\if\conf1
\begin{align*}
\ds_{f,\alpha^*}^\mu(\lfloor R^* \rfloor_\tau)
&\ge \E_{\ell \sim L(\mu,R^*)}[ \barsfmu(\ell) e^{-\alpha^* \, \cost(\ell)} X_{\le \tau}] \\
&\quad + \tfrac12 e^{-\alpha^* \, \tau} \Pr_{\ell \sim L(\mu,R^*)}[X_{> \tau}]
\end{align*}
\else
\[
\ds_{f,\alpha^*}^\mu(\lfloor R^* \rfloor_\tau)
\ge \E_{\ell \sim L(\mu,R^*)}[ \barsfmu(\ell) e^{-\alpha^* \, \cost(\ell)} X_{\le \tau}]
+ \tfrac12 e^{-\alpha^* \, \tau} \Pr_{\ell \sim L(\mu,R^*)}[X_{> \tau}]
\]
\fi
where the second term uses the bounded range property of scoring measures which guarantees that the score of every leaf (including in particular the ones introduced during the truncation process) is at least $\frac12$.
Since $R^*$ has maximum discounted score, we must have $\ds_{f,\alpha^*}^\mu(R^*) \ge \ds_{f,\alpha^*}^\mu(\lfloor R^* \rfloor_\tau)$.
And so the above decompositions of the discounted scores of $R^*$ and $\lfloor R^* \rfloor_\tau$ imply that
\[
	\gamma \, \E_{\ell \sim \pi_{R^*}}[e^{-\alpha^* \, \cost(\ell)} X_{> \tau}]
\ge \tfrac12 e^{-\alpha^* \, \tau} \Pr_{\ell \sim L(\mu,R^*)}[ X_{> \tau} ].
\]
The right-hand side of this inequality can be further bounded below by noting that
\if\conf1
\begin{align*}
	\Pr_{\ell \sim L(\mu,R^*)}[X_{> \tau}] 
	&\ge \E_{\ell \sim L(\mu,R^*)}[ \barsfmu(\ell) X_{> \tau}] \\
    &= \barsfmu(R^*) \E_{\ell \sim \pi_{R^*}}[ X_{> \tau}] \\
	&= \gamma \, \Pr_{\ell \sim \pi_{R^*}}[ X_{> \tau} ].
\end{align*}
\else
\[
	\Pr_{\ell \sim L(\mu,R^*)}[X_{> \tau}] 
	\ge \E_{\ell \sim L(\mu,R^*)}[ \barsfmu(\ell) X_{> \tau}]
    = \barsfmu(R^*) \E_{\ell \sim \pi_{R^*}}[ X_{> \tau}]
	= \gamma \, \Pr_{\ell \sim \pi_{R^*}}[ X_{> \tau} ].
\]
\fi
And the left-hand side can be bounded above by using
\if\conf1
\begin{align*}
	\E_{\ell \sim \pi_{R^*}}&[e^{-\alpha^* \, \cost(\ell)} X_{> \tau}] 
	\le \\
    &\left(\Pr_{\pi_{R^*}}[X_{> \tau}] - \Pr_{\pi_{R^*}}[ X_{> \tau + d}]\right) e^{-\alpha^* \tau} \\
    &+ \Pr_{\pi_{R^*}}[ X_{> \tau + d}] e^{-\alpha^*(\tau + d)}.
\end{align*}
\else
\[
	\E_{\ell \sim \pi_{R^*}}[e^{-\alpha^* \, \cost(\ell)} X_{> \tau}] 
	\le \left(\Pr_{\pi_{R^*}}[X_{> \tau}] - \Pr_{\pi_{R^*}}[ X_{> \tau + d}]\right) e^{-\alpha^* \tau} + \Pr_{\pi_{R^*}}[ X_{> \tau + d}] e^{-\alpha^*(\tau + d)}.
\]
\fi
Combining the last three inequalities, we obtain
\[
	\Pr_{\pi_{R^*}}[X_{> \tau + d}](1-e^{-\alpha^* d}) \le \tfrac12 \Pr_{\pi_{R^*}}[X_{> \tau}]. 
\]
Recalling that $d = 2\ln(2)/\alpha^*$, this gives $\frac34 \Pr_{\pi_{R^*}}[X_{> \tau + d}] \le \frac12 \Pr_{\pi_{R^*}}[X_{> \tau}]$ or, equivalently,
\[
	\Pr_{\pi_{R^*}}[X_{> \tau + d}] \le \frac23 \Pr_{\pi_{R^*}}[ X_{> \tau} ].
\]
We now use this inequality to complete the proof of the claim.

The expected cost of $R^*$ under the distribution $\pi_{R^*}$ can be expressed as
\begin{align*}
	\E_{\pi_{R^*}}[ \cost(\ell) ] 
	= \E_{\pi_{R^*}}[ \cost(\ell) X_{\le d}] 
	+ \E_{\pi_{R^*}}[ \cost(\ell) X_{> d}]. 
\end{align*}
The  latter term can be bounded above by
\if\conf1
\begin{align*}
	\E_{\pi_{R^*}}&[ \cost(\ell) X_{> d}] \\
& \le \sum_{k \ge 1} (k+1)d \left(\Pr_{\pi_{R^*}}[X_{> kd}] - \Pr_{\pi_{R^*}}[ X_{> (k+1)d}\right) \\
&= d \sum_{k \ge 1} \Pr_{\pi_{R^*}}[ X_{> kd} ].
\end{align*}
\else
\begin{align*}
	\E_{\pi_{R^*}}[ \cost(\ell) X_{> d}] 
& \le \sum_{k \ge 1} (k+1)d \left(\Pr_{\pi_{R^*}}[X_{> kd}] - \Pr_{\pi_{R^*}}[ X_{> (k+1)d}\right) \\
&= d \sum_{k \ge 1} \Pr_{\pi_{R^*}}[ X_{> kd} ].
\end{align*}
\fi
From the inequality in the last paragraph, for every $k \ge 2$,
\[
	\Pr_{\pi_{R^*}}[ X_{> kd} ] 
	\le \tfrac23 \Pr_{\pi_{R^*}}[ X_{> (k-1)d} ] 
	\le \left(\tfrac23\right)^{k-1} \Pr_{\pi_{R^*}}[ X_{> d} ].
\]
So $\E_{\pi_{R^*}}[ \cost(\ell) X_{> d}] \le d \sum_{k \ge 0} (\frac23)^k \Pr_{\pi_{R^*}}[X_{>d}]$ and
\[
	\E_{\pi_{R^*}}[ \cost(\ell) ] \le 
	\E_{\pi_{R^*}}[ \cost(\ell) X_{\le d}] + 3d \Pr_{\pi_{R^*}}[X_{> d}].
\]
Meanwhile, the expected cost of $\lfloor R^* \rfloor_d$ under $\pi_{R^*}$ can be expressed as
\[
	\E_{\pi_{R^*}}[ \cost(\lfloor \ell \rfloor_d) ] 
	= \E_{\pi_{R^*}}[ \cost(\ell) X_{\le d}] + d \Pr_{\pi_{R^*}}[ X_{> d}].
\]
Therefore, $\E_{\pi_{R^*}}[ \cost(\ell) ] \le 3 \E_{\pi_{R^*}}[ \cost(\lfloor \ell \rfloor_d) ] = 3 \E_{\pi_{R^*}}[ \min\{\cost(\ell),d\}]$, as we wanted to show.
\end{proof}


\section{List-Decoding Direct Product Theorems}

\subsection{Preliminary Lemmas}

Before we can complete the proof of \cref{thm:LD-DPT}, we need to establish a few information-theoretic facts.
The \emph{entropy} of the random variable $X \sim \pi$ over a finite domain is
\[
H(X) = \sum_{x} \pi(x) \log \tfrac1{\pi(x)} = \E_{X \sim \pi}[\log \tfrac1{\pi(X)} ].
\]
The \emph{cross-entropy} of $X \sim \pi$ with respect to $Z \sim \sigma$ over the same domain is
\[
H(Z,X) = \E_{X \sim \sigma}[ \log \tfrac1{\pi(X)}].
\]

We start with a basic inequality on the entropy of collections of independent Bernoulli random variables.

\begin{lemma}
\label{lem:HX-low-cross-entropy}
Let $X_1,\ldots,X_n$ and $Z_1,\ldots,Z_n$ be $2n$ independent Bernoulli random variables with $X_i \sim \mathrm{Ber}(\frac{1+\delta_i}2)$ and $Z_i \sim \mathrm{Ber}(\frac{1+\tau_i}2)$ for $i=1,\ldots,n$.
If $X = (X_1,\ldots,X_n)$ and $Z = (Z_1,\ldots,Z_n)$ satisfy
$H(Z) \ge n - \alpha$
and
$H(Z,X) \le n - \beta$
for some $\beta \le \alpha$, then
\[
	H(X) \le n - \frac{\beta^2}{16\alpha}.
\]
\end{lemma}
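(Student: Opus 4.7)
I would first rewrite the three entropy quantities coordinate-by-coordinate. Set $\delta_i = 2p_i - 1$ and $\tau_i = 2q_i - 1$, let $F(x) = 1 - h((1+x)/2)$, and define
\[
G(\delta, \tau) \;=\; 1 - H\bigl(\mathrm{Ber}((1+\tau)/2),\,\mathrm{Ber}((1+\delta)/2)\bigr) \;=\; \tfrac{1+\tau}{2}\log(1+\delta) + \tfrac{1-\tau}{2}\log(1-\delta).
\]
By independence of the coordinates, the hypotheses become $\sum_i F(\tau_i) \le \alpha$ and $\sum_i G(\delta_i, \tau_i) \ge \beta$, and the desired conclusion is $\sum_i F(\delta_i) \ge \beta^2/(16\alpha)$.

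The entire proof rests on the single-coordinate inequality
\[
G(\delta, \tau) \;\le\; 4\sqrt{F(\tau)\,F(\delta)} \qquad \text{for all } \delta, \tau \in [-1, 1]. \qquad (\star)
\]
Granted $(\star)$, the lemma follows in two lines by summing and applying Cauchy-Schwarz:
\[
\beta \;\le\; \sum_i G(\delta_i,\tau_i) \;\le\; 4\sum_i \sqrt{F(\tau_i) F(\delta_i)} \;\le\; 4\sqrt{\Bigl(\sum_i F(\tau_i)\Bigr)\Bigl(\sum_i F(\delta_i)\Bigr)} \;\le\; 4\sqrt{\alpha \sum_i F(\delta_i)},
\]
which, after squaring and rearranging, gives $\sum_i F(\delta_i) \ge \beta^2/(16\alpha)$, i.e., $H(X) \le n - \beta^2/(16\alpha)$.

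The main obstacle is $(\star)$. Since $G(\delta,\tau) \le 0$ whenever $\delta$ and $\tau$ have opposite signs (or either is $0$), the only non-trivial case is $\tau, \delta \ge 0$. Under the substitution $\delta = \tanh t$, $\tau = \tanh s$ with $t,s \ge 0$, the inequality becomes, after multiplying through by $\ln 2$,
\[
t\tanh s - \ln\cosh t \;\le\; 4\sqrt{\bigl(t\tanh t - \ln\cosh t\bigr)\bigl(s\tanh s - \ln\cosh s\bigr)}.
\]
I would prove this by splitting on whether $F(\tau) \le 16\,F(\delta)$: in that regime the trivial bound $G \le F(\tau)$ already suffices, because $F(\tau) \le \sqrt{F(\tau)\cdot 16 F(\delta)} = 4\sqrt{F(\tau)F(\delta)}$. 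In the complementary regime $F(\tau) > 16\,F(\delta)$ (which forces $s$ substantially larger than $t$), I would use the integral representations $F(t)\ln 2 = \int_0^t v\,\mathrm{sech}^2 v\,dv$ and, for $s \ge t$, $G\ln 2 = \int_0^s \min(v,t)\,\mathrm{sech}^2 v\,dv$, together with a weighted Cauchy-Schwarz on the tail $\int_t^s \min(v,t)\,\mathrm{sech}^2 v\,dv$ and the standard bound $\ln\cosh t \le t^2/2$. The genuinely delicate regime is $\delta$ close to $0$ with $\tau$ close to $1$, where $(\star)$ is essentially tight (both sides are of order $\delta$) and crude Pinsker-style estimates on $D(q\|p)$ would lose the necessary constant; the explicit hyperbolic substitution is what recovers the correct constant $4$.
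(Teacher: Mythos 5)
Your reduction is sound and your key inequality $(\star)$ is in fact true, so the overall plan works; but it is packaged differently from the paper's argument, and one step of your plan is left genuinely unfinished. The paper first reduces (WLOG) to $\delta_i\le\tau_i$, then works with the bias vectors directly: it uses $\frac{x^2}{2\ln 2}\le F(x)\le x^2$ to translate the entropy hypotheses into $\|\tau\|_2^2\le 2\ln 2\cdot\alpha$, bounds the cross-entropy excess coordinatewise by $G(\delta_i,\tau_i)\le F(\delta_i)+\frac{\tau_i\delta_i}{\ln 2}$, and then applies Cauchy--Schwarz to $\langle\tau,\delta\rangle$. Your version folds all of this into the single pointwise inequality $G(\delta,\tau)\le 4\sqrt{F(\delta)F(\tau)}$ followed by Cauchy--Schwarz on $\sum_i\sqrt{F(\tau_i)F(\delta_i)}$; this is the same mechanism, but it is cleaner in that it needs no WLOG on the signs or relative sizes of the biases, and it isolates exactly what must be checked. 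Your handling of the regime $F(\tau)\le 16F(\delta)$ via Gibbs' inequality $G\le F(\tau)$ is correct and complete.

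The soft spot is the complementary regime. The inequality does hold there, but your proposed route (integral representations plus a ``weighted Cauchy--Schwarz on the tail'') is under-specified, and its most natural instantiation fails: weighting by $\mathrm{sech}^2 v\,dv$ and comparing $\int_t^s \tfrac{t^2}{v}\,\mathrm{sech}^2 v\,dv$ against $\int_0^t v\,\mathrm{sech}^2 v\,dv\approx t^2/2$ loses a factor $\ln(1/t)$ when $t\to 0$ and $s$ is large, which would destroy the absolute constant. You do not need this machinery. In the regime $F(\tau)>16F(\delta)$ one has $\tau>\delta\ge 0$ (after your sign reduction), and then the paper's elementary estimates close the gap with room to spare: $G(\delta,\tau)=F(\delta)+\tfrac{\tau-\delta}{2}\log\tfrac{1+\delta}{1-\delta}\le F(\delta)+\tfrac{\tau\delta}{\ln 2}\le \bigl(1+\tfrac{1}{\ln 2}\bigr)\delta\tau$, while $4\sqrt{F(\delta)F(\tau)}\ge 4\cdot\tfrac{\delta\tau}{2\ln 2}=\tfrac{2}{\ln 2}\,\delta\tau\approx 2.89\,\delta\tau>2.45\,\delta\tau$. (Incidentally, the near-tight corner $\delta\to 0$, $\tau\to 1$ that you worry about gives ratio $\sqrt{2/\ln 2}\approx 1.7$, so there is no delicate constant-chasing needed for the bound $4$.) With that substitution your proof is complete and, arguably, tidier than the paper's.
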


\begin{proof}
If there is any index $i$ for which $\delta_i > \tau_i$, then the random variable $X'$ drawn as $X$ except with bias $\tau_i$ for $X_i$ also satisfies the two constraints since $H(Z_i,X_i) \ge H(Z_i) = H(Z_i,X'_i)$ and it has strictly larger entropy $H(X') > H(X)$.
Therefore, without loss of generality, we can consider only the case where $\delta_i \le \tau_i$ for each $i=1,\ldots,n$.

The entropy of $X_i$ satisfies
\if\conf1
\begin{align*}
	1-H(X_i) 
    &= 1 - h\left(\frac{1+\delta_i}2\right) \\
    &= \frac1{2 \ln 2}\left( \delta_i^2 + \frac{\delta_i^4}{6} + \frac{\delta_i^6}{15} + \cdots\right).
\end{align*}
\else
\[
	1-H(X_i) = 1 - h\left(\frac{1+\delta_i}2\right) = \frac1{2 \ln 2}\left( \delta_i^2 + \frac{\delta_i^4}{6} + \frac{\delta_i^6}{15} + \cdots\right).
\]
\fi
So $\frac{\delta_i^2}{2 \ln 2} \le 1 - H(X_i) \le \delta_i^2$
and by adding the entropy of each coordinate of $X$,
\[
	\frac{\|\delta\|_2^2}{2\ln 2} \le n - H(X) \le \|\delta\|_2^2.
\]
Similarly, $n - H(Z) \ge \|\tau\|_2^2/2\ln 2$. 
With the condition in the lemma statement, we therefore have the bound
\[
\|\tau\|_2^2 \le 2 \ln 2 \cdot \alpha.
\]

We also have
\[
	1 - H(Z_i,X_i) = 1 - H(X_i) + \frac{\tau_i-\delta_i}2 \log( \frac{1+\delta_i}{1-\delta_i}).
\]
Since $\log(1+x) \le x/\ln(2)$, the last term in this expression is bounded by
\if\conf1
\begin{align*}
	\frac{\tau_i-\delta_i}2 \log( \frac{1+\delta_i}{1-\delta_i} ) &= 
	\frac{\tau_i-\delta_i}2 \log( 1 + \frac{2\delta_i}{1-\delta_i} ) \\
    &\le
	\frac{\tau_i-\delta_i}2 \cdot \frac{2\delta_i}{\ln(2)(1-\delta_i)} \\
    &= \frac{\tau_i-\delta_i}{1-\delta_i} \cdot \frac{\delta_i}{\ln 2} \\ 
    &\le \frac{\tau_i \delta_i}{\ln 2},
\end{align*}
\else
\[
	\frac{\tau_i-\delta_i}2 \log( \frac{1+\delta_i}{1-\delta_i} ) = 
	\frac{\tau_i-\delta_i}2 \log( 1 + \frac{2\delta_i}{1-\delta_i} ) \le
	\frac{\tau_i-\delta_i}2 \cdot \frac{2\delta_i}{\ln(2)(1-\delta_i)} 
    = \frac{\tau_i-\delta_i}{1-\delta_i} \cdot \frac{\delta_i}{\ln 2}\le \frac{\tau_i \delta_i}{\ln 2},
\]
\fi
and we obtain
\if\conf1
\begin{align*}
n - H(Z,X) &\le n - H(X) + \frac{\left< \tau, \delta \right>}{\ln 2} \\
&\le \|\delta\|_2^2 + \frac{\|\tau\|_2 \|\delta\|_2}{\ln 2} \\
&\le \frac{2}{\ln 2}\,\|\tau\|_2 \|\delta\|_2
\end{align*}
\else
\[
	n - H(Z,X) \le n - H(X) + \frac{\left< \tau, \delta \right>}{\ln 2} \le \|\delta\|_2^2 + \frac{\|\tau\|_2 \|\delta\|_2}{\ln 2} \le \frac{2}{\ln 2}\,\|\tau\|_2 \|\delta\|_2
\]
\fi
where the last inequality uses the fact that $\|\delta\|_2 \le \|\tau\|_2 \le \frac{\|\tau\|_2}{\ln 2}$.

The above inequalities and the conditions of the lemma now give
\[
	\beta^2 \le \big(n - H(Z,X)\big)^2 \le \frac{4\|\tau\|_2^2\|\delta\|_2^2}{\ln^2 2} \le \frac{8}{\ln2} \, \alpha \|\delta\|_2^2. 
\]
These inequalities imply that $\frac{\|\delta\|_2^2}{\ln 2} \ge \frac{\beta^2}{8 \, \alpha}$.
Therefore,
\[
	n-H(X) \ge \frac{\|\delta\|_2^2}{2 \ln 2} \ge \frac{\beta^2}{16 \alpha}. \qedhere
\]
\end{proof}

We use the above bound to obtain the following result.

\begin{lemma}
\label{lem:HX-list}
Fix any $n \in \mathbb{N}$, $\ell \in \{1,\ldots,n\}$, and $\epsilon \in [0,\frac14]$.
If $p_1,\ldots,p_n$ are biases and $L \subseteq \{0,1\}^n$ is a list of $2^{n-\ell}$ elements such 
that when $X_i \sim \mathrm{Ber}(p_i)$ are drawn independently, $X = (X_1,\ldots,X_n)$ satisfies
\[
	\Pr[ X \in L] \ge 2^{-\epsilon \ell},
\]
then $H(X) \le n - \frac{(1-\epsilon)^2 \ell}{320}$.
\end{lemma}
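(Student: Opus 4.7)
The plan is to apply \lem{HX-low-cross-entropy} with a carefully chosen vector of independent Bernoullis $Z$, whose biases are the coordinate marginals of $X$ conditioned on the event $\{X \in L\}$. Concretely, let $P = \Pr[X \in L] \ge 2^{-\epsilon \ell}$ and let $X'$ denote the random vector with distribution equal to the conditional of $X$ given $X \in L$. Since $X'$ is supported on $L$, $H(X') \le \log|L| = n - \ell$, and an elementary calculation gives $D(X' \,\|\, X) = -\log P \le \epsilon \ell$. For each $i$, I set $q'_i = \Pr[X'_i = 1]$ and define $Z_i \sim \mathrm{Ber}(q'_i)$ independently.

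The first step is to compute the cross-entropy deficit $\beta := n - H(Z, X)$. Because $X$ is a product distribution, $-\log \pi(y) = \sum_i [-y_i \log p_i - (1-y_i)\log(1-p_i)]$ splits across coordinates, so $\mathrm{E}_Z[-\log \pi(Z)]$ depends on $Z$ only through its marginals and equals $\mathrm{E}_{X'}[-\log \pi(X')]$. Rewriting the latter via the definition of KL divergence gives $H(Z, X) = H(X') + D(X' \,\|\, X) \le (n - \ell) + \epsilon \ell = n - (1-\epsilon)\ell$, hence $\beta \ge (1-\epsilon)\ell$.

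The second step is to control the entropy deficit $\alpha := n - H(Z)$. Expanding the cross-entropy along coordinates yields $H(Z, X) = H(Z) + \sum_i D(q'_i \,\|\, p_i)$, so $\alpha - \beta = \sum_i D(q'_i \,\|\, p_i)$. Applying the chain rule for KL divergence (using that $X$ is a product) together with the convexity of $D(\cdot \,\|\, p_i)$ shows $\sum_i D(q'_i \,\|\, p_i) \le D(X' \,\|\, X) \le \epsilon \ell$; equivalently, one may observe that $D(X' \,\|\, X) = D(X' \,\|\, Z) + \sum_i D(q'_i \,\|\, p_i)$ with both terms non-negative. Therefore $\beta \le \alpha \le \beta + \epsilon\ell$, so \lem{HX-low-cross-entropy} applies and gives $n - H(X) \ge \beta^2/(16\alpha) \ge \beta^2/(16(\beta + \epsilon \ell))$.

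To finish, I observe that the function $\beta \mapsto \beta^2/(\beta + \epsilon \ell)$ is monotonically increasing in $\beta \ge 0$, so plugging in the lower bound $\beta \ge (1-\epsilon)\ell$ yields $\beta^2/(\beta + \epsilon \ell) \ge (1-\epsilon)^2 \ell^2 / \ell = (1-\epsilon)^2 \ell$, which gives $n - H(X) \ge (1-\epsilon)^2 \ell / 16$, comfortably implying the claimed bound $(1-\epsilon)^2 \ell / 320$. The main conceptual step, and the one I expect to require the most care, is the coupling $\alpha - \beta \le \epsilon \ell$: although neither $\alpha$ nor $\beta$ can be bounded by $O(\ell)$ on its own (for correlated $X'$, $H(Z)$ can be much smaller than $n$), their difference is controlled by the product structure of $X$, and this is precisely what makes the ratio $\beta^2/\alpha$ large enough to yield a bound of the right order.
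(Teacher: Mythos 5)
Your proof is correct, and it takes a genuinely different---and cleaner---route than the paper's. The paper does not condition on the whole list $L$: it buckets $L$ by point mass $p(x)$, extracts a heavy bucket $B_i$ on which the conditional distribution is nearly uniform, and takes $D$ uniform on $B_i$. That flattening step exists precisely to guarantee a lower bound on $H(D)$ (hence on $H(Z)$), out of concern that the conditional distribution on $L$ could concentrate and make $\alpha = n - H(Z)$ uncontrollably large. Your observation that this is unnecessary is the real content of your argument: the decomposition $\alpha - \beta = \sum_i D(q_i'\,\|\,p_i) \le D(X'\,\|\,X) = \log(1/P) \le \epsilon\ell$ couples $\alpha$ to $\beta$, so even when $\alpha$ is large the ratio $\beta^2/(16\alpha) \ge \beta^2/(16(\beta+\epsilon\ell))$ stays large, and monotonicity of $t \mapsto t^2/(t+\epsilon\ell)$ together with $\beta \ge (1-\epsilon)\ell$ finishes the job. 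The individual steps all check out: $H(Z,X) = H(X',X)$ because $-\log \pi$ is additive over coordinates and $Z$ matches the marginals of $X'$; $H(X') \le \log|L| = n-\ell$; $D(X'\,\|\,X) = \log(1/P)$ for conditioning on an event of probability $P$; and $\beta \le \alpha$ holds since $\sum_i D(q_i'\,\|\,p_i) \ge 0$, so \cref{lem:HX-low-cross-entropy} applies exactly as you use it (with the same implicit conventions about degenerate biases as in the paper's own application). What your approach buys: it skips the entire bucketing computation, yields the stronger constant $(1-\epsilon)^2\ell/16$ in place of $(1-\epsilon)^2\ell/320$, and never uses the hypothesis $\epsilon \le \tfrac14$, which the paper needs only to control its bucket threshold. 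The paper's version gives, as a byproduct, an explicit near-uniform sub-list witnessing the entropy drop, but for the purposes of this lemma your argument is simpler and strictly stronger.
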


\begin{proof}
Fix $\alpha = 2^{-\frac{1-\epsilon}2}$ and $d = \alpha 2^{(1-\epsilon)\ell-n}$.
Consider the disjoint subsets $B_0,B_1,B_2,\ldots$ of $L$ where for each $i \ge 0$,
\[
	B_i = \{ x \in L : 2^i d < p(x) \le 2^{i+1} d \}.
\]
By the hypothesis of the lemma, $\sum_{x \in L} p(x) \ge 2^{-\epsilon \ell}$.
And by construction, the only elements in $L$ that are not in one of the bins $B_i$, $i \ge 0$ are the ones for
which $p(x) \le d = \alpha 2^{(1-\epsilon)\ell -n}$, so the total mass of the excluded elements is at most $2^{n-\ell} \cdot \alpha  2^{(1-\epsilon)\ell - n} = \alpha 2^{-\epsilon \ell}$.
This means that the mass of the points in the union of the $B_i$ sets is bounded below by
\[
\sum_{i \ge 0} \sum_{x \in B_i} p(x) \ge (1-\alpha) 2^{-\epsilon \ell}.
\]
This lower bound implies that there must be an index $i \ge 0$ for which the bin $B_i$ has total mass
\begin{equation}
\label{eq:large-bin}
\sum_{x \in B_i} p(x) \ge (1-\alpha) 2^{-\epsilon \ell} \cdot 2^{-(i+1)}
\end{equation}
otherwise we would have $\sum_{i \ge 0} \sum_{x \in B_i} p(x) < (1-\alpha)2^{-\epsilon \ell} \cdot 2^{-(i+1)} = (1-\alpha) 2^{-\epsilon \ell}$, contradicting the lower bound on the mass of points in the union of the bins.
Fix an index $i$ that satisfies \cref{eq:large-bin} and let $D$ be the uniform distribution on the bin $B_i$.

The upper bound on the mass of each element in $B_i$ and the lower bound on the total mass of this bin imply that the number of points in $B_i$ is bounded below by
\[
|B_i| \ge \frac{(1-\alpha) 2^{-\epsilon \ell} \cdot 2^{-(i+1)}}{2^{i+1}d}
= \frac{1-\alpha}{\alpha} \cdot 2^{n-\ell} \cdot 2^{-2(i+1)}.
\]
So the entropy of $D$ is bounded below by
\[
	H(D) = \log |B_i| \ge n - \big(\ell + 2(i+1) - \log \tfrac{1-\alpha}{\alpha}\big).
\]
When $\epsilon \le \frac14$, then $\log \tfrac{1-\alpha}{\alpha} = \log(2^{-\frac{1-\epsilon}2}-1) \ge -2$, so in this case the lower bound simplifies to $H(D) \ge n - \big( \ell + 4 + 2i \big)$.
Meanwhile, the lower bound on the mass of the elements in $B_i$ implies that
\if\conf1
\begin{align*}
    H(D,X) 
    &= - \mathrm{E}_{x \sim D}[ \log p(x) ] \\
    &\le 
	-\log( 2^i d) \\
    &= -\log( \alpha 2^{(1-\epsilon)\ell - n} 2^i) \\
    &= n - \big((1-\epsilon)\ell - \log \tfrac1\alpha + i \big).
\end{align*}
\else
\[
	H(D,X) = - \mathrm{E}_{x \sim D}[ \log p(x) ] \le 
	-\log( 2^i d) = -\log( \alpha 2^{(1-\epsilon)\ell - n} 2^i) = 
	n - \big((1-\epsilon)\ell - \log \tfrac1\alpha + i \big).
\]
\fi
Since $\log \frac1\alpha = \frac{1-\epsilon}2$, we can simplify this bound to $H(D,X) \le n - \big( (1-\epsilon) \ell - \frac{1-\epsilon}2 + i \big)$.

Consider now the Bernoulli random variables $Z_i \sim \mathrm{Ber}(q_i)$ with parameters $q_i = \Pr_{x \in D}[ x_i = 1]$ for $i=1,\ldots,n$. 
Each $Z_i$ is equal to the marginal $D_i$ of $D$ on the $i$th coordinate.
The random variable $Z$ satisfies
\if\conf1
\begin{align*}
H(Z) = \sum_{i=1}^n H(Z_i) &= \sum_{i=1}^n H(D_i) \\ 
&\ge \sum_{i=1}^n H(D_i \mid D_{< i}) = H(D)
\end{align*}
\else
\[
	H(Z) = \sum_{i=1}^n H(Z_i) = \sum_{i=1}^n H(D_i) \ge \sum_{i=1}^n H(D_i \mid D_{< i}) = H(D)
\]
\fi
and, since $p$ is a product distribution,
\if\conf1
\begin{align*}
H(Z,X) &= - \sum_{i=1}^n \mathrm{E}_{x_i \sim Z_i}[ \log p_i(x_i) ] \\
&= -\sum_{i=1}^n \mathrm{E}_{x_i \sim D_i}[ \log p_i(x_i) ] \\
&= H(D,X).
\end{align*}
\else
\[
	H(Z,X) = - \sum_{i=1}^n \mathrm{E}_{x_i \sim Z_i}[ \log p_i(x_i) ] =
	-\sum_{i=1}^n \mathrm{E}_{x_i \sim D_i}[ \log p_i(x_i) ] = H(D,X).
\]
\fi

Applying \cref{lem:HX-low-cross-entropy} to $X$ and $Z$, we obtain 
\[
	H(X) \le n - \frac{((1-\epsilon)\ell - \frac{1-\epsilon}2 + i)^2}{16(\ell + 4 + 2i)}.
\]
Since the ratio $\frac{((1-\epsilon)\ell - \frac{1-\epsilon}2 + i)^2}{16(\ell + 4 + 2i)}$ is monotone in $i$ when $\ell \ge 1$ and $i \ge 0$, the entropy of $X$ is bounded by
\if\conf1
\begin{align*}
H(X) &\le 
    n - \frac{((1-\epsilon)\ell - \frac{1-\epsilon}2)^2}{16(\ell + 4)} \\
    &= n - \frac{(1-\epsilon)^2}{16} \cdot \frac{(\ell - \frac{1}2)^2}{\ell + 4}.    
\end{align*}
\else
\[
H(X) \le 
    n - \frac{((1-\epsilon)\ell - \frac{1-\epsilon}2)^2}{16(\ell + 4)} = 
    n - \frac{(1-\epsilon)^2}{16} \cdot \frac{(\ell - \frac{1}2)^2}{\ell + 4}.
\]
\fi
Note that when $\ell \ge 1$, then $\frac{(\ell - \frac12)^2}{\ell + 4} \ge \frac{\ell}{20}$. Therefore,
\[
H(X)
    \le n - \frac{(1-\epsilon)^2\ell}{320}. \qedhere
\]
\end{proof}

\subsection{Proof of the List-Decoding Direct Product Theorem}

The proof of \cref{thm:LD-DPT} relies on one more technical observation.

\begin{proposition}
\label{prop:LDDPT-technicality}
For any constant $\lambda > 0$, if $k \in \mathbb{N}$ satisfies
\[
\frac{2^{-\frac{1}{2k}} - 2^{-\frac12}}{1 - 2^{-\frac12}} \ge \lambda
\]
and $\epsilon \in (0,\frac1{2k}]$, then for every $\ell \in \mathbb{N}$,
\[
\frac{2^{-\epsilon \ell} - 2^{-k\epsilon \ell}}{1 - 2^{-k\epsilon \ell}} \ge \lambda^{\ell}.
\]
\end{proposition}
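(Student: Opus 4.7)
The plan is to substitute $c = 2^{-1/(2k)}$ and $b = 2^{-\epsilon}$, rewriting the hypothesis as $g_k(c) \ge \lambda$ and the target as $g_k(b^\ell) \ge \lambda^\ell$, where I define $g_k(x) := \frac{x - x^k}{1 - x^k}$. Since $\epsilon \in (0, \frac{1}{2k}]$, we have $b \in [c, 1)$. The strategy is to first strip away $b$ by monotonicity, and then reduce to a clean $\ell$-th power inequality for $g_k$.

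For the first step, I would establish that $g_k$ is monotone non-decreasing on $(0,1)$. This follows immediately from the identity
\[
g_k(x) = 1 - \frac{1-x}{1-x^k} = 1 - \frac{1}{1 + x + x^2 + \cdots + x^{k-1}},
\]
whose denominator is strictly increasing in $x$. Since $b \ge c$ implies $b^\ell \ge c^\ell$, monotonicity reduces the target to proving $g_k(c^\ell) \ge \lambda^\ell$. Combined with the hypothesis $g_k(c) \ge \lambda \ge 0$, this in turn reduces to the power-style inequality $g_k(c^\ell) \ge g_k(c)^\ell$.

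The heart of the argument is then the following general claim, which I would prove by induction on $\ell$: for all $0 < v < u < 1$ and $\ell \ge 1$,
\[
\frac{u^\ell - v^\ell}{1 - v^\ell} \ge \left(\frac{u-v}{1-v}\right)^\ell.
\]
Applying this with $u = c$ and $v = c^k$ yields $g_k(c^\ell) \ge g_k(c)^\ell$ and closes the argument. The base case $\ell = 1$ is an equality.

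The inductive step is the main technical obstacle. Combining the inductive hypothesis with one extra factor of $\frac{u-v}{1-v}$, it suffices to prove
\[
\frac{u^{\ell+1} - v^{\ell+1}}{1 - v^{\ell+1}} \ge \frac{u-v}{1-v} \cdot \frac{u^\ell - v^\ell}{1 - v^\ell}.
\]
Writing $P := \sum_{i=0}^\ell u^{\ell-i}v^i$, $Q := \sum_{i=0}^{\ell-1} u^{\ell-1-i}v^i$, and $A_j := \sum_{i=0}^{j-1} v^i$ (so $u^{\ell+1}-v^{\ell+1} = (u-v)P$, $u^\ell - v^\ell = (u-v)Q$, and $1-v^j = (1-v)A_j$), cross-multiplying and dividing by the positive factor $(u-v)(1-v)$ turns the inequality into
\[
(1-v)\, P\, A_\ell \ge (u-v)\, Q\, A_{\ell+1}.
\]
Substituting the identities $P = uQ + v^\ell$ and $A_{\ell+1} = A_\ell + v^\ell$, and using $(1-v)A_\ell = 1-v^\ell$ and $(u-v)Q = u^\ell - v^\ell$, the difference of the two sides collapses to
\[
v(1-u)\, Q A_\ell + v^\ell(1 - u^\ell),
\]
which is manifestly non-negative since $0 < v < u < 1$. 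This completes the induction and proves the proposition.
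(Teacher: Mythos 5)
Your proposal is correct, and it takes a genuinely more elementary route than the paper. The paper handles both halves of the argument by calculus: it shows $\epsilon \mapsto \frac{2^{-\epsilon}-2^{-k\epsilon}}{1-2^{-k\epsilon}}$ is non-increasing by differentiating and reducing to the non-negativity of $g(y)=(k-1)-ky+y^k$, and it proves the power inequality $\frac{a^\ell-a^{k\ell}}{1-a^{k\ell}} \ge \bigl(\frac{a-a^k}{1-a^k}\bigr)^\ell$ by treating $\ell$ as a real variable, differentiating the log-ratio $\phi(\ell)$, and again invoking $g \ge 0$. You replace the first step with the geometric-series identity $g_k(x) = 1 - \frac{1}{1+x+\cdots+x^{k-1}}$, which makes monotonicity immediate, and you replace the second with an induction on $\ell$ whose inductive step reduces, after the substitutions $P = uQ + v^\ell$ and $A_{\ell+1}=A_\ell+v^\ell$, to the manifestly non-negative quantity $v(1-u)QA_\ell + v^\ell(1-u^\ell)$ (I verified this algebra). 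Your key lemma is also strictly more general than the paper's, holding for arbitrary $0<v<u<1$ rather than only for the pair $(a, a^k)$; the only thing the paper's calculus argument buys in exchange is validity for real $\ell \ge 1$, which the proposition does not need since $\ell \in \mathbb{N}$. The one cosmetic difference in the reduction is order of operations: you apply monotonicity to pass from $b^\ell$ to $c^\ell$ and then the power inequality at $c$, whereas the paper applies the power inequality at $b=2^{-\epsilon}$ and then monotonicity to bound $g_k(b)\ge\lambda$; both are valid.
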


\begin{proof}
As a first step, 
define the function 
\[
g \colon y \mapsto (k-1) - k y + y^k.
\]
This function satisfies $g(1) = 0$.
Its derivative is $g'(y) = -k + k y^{k-1}$, which is negative on all $y \in [0,1)$ and positive on all $y > 1$.
So the function $g$ is non-negative in the entire range $[0,\infty)$. 

Consider now the function $h \colon \epsilon \mapsto \frac{2^{-\epsilon} - 2^{-k\epsilon}}{1 - 2^{-k \epsilon}}$.
The derivative of this function is
\if\conf1
\begin{align*}
    h'(\epsilon) 
    &= \frac{2^{-(k+1)\epsilon} \ln 2}{(1-2^{-k\epsilon})^2} \cdot \big( -(k-1) + k 2^{\epsilon} - (2^\epsilon)^k \big) \\
    &= - C_{k,\epsilon} \cdot g(2^\epsilon)
\end{align*}
\else
\[
h'(\epsilon) = \frac{2^{-(k+1)\epsilon} \ln 2}{(1-2^{-k\epsilon})^2} \cdot \big( -(k-1) + k 2^{\epsilon} - (2^\epsilon)^k \big) = - C_{k,\epsilon} \cdot g(2^\epsilon)
\]
\fi
with $C_{k,\epsilon} = \frac{2^{-(k+1)\epsilon} \ln 2}{(1-2^{-k\epsilon})^2} > 0$ for all $\epsilon > 0$.
By the non-negativity of $g$, we then have that $h$ is non-increasing on $[0,1]$ and so every $\epsilon \in (0, \frac1{2k}]$ satisfies
\[
\frac{2^{-\epsilon} - 2^{-k\epsilon}}{1 - 2^{-k \epsilon}} \ge \frac{2^{-\frac{1}{2k}} - 2^{-\frac12}}{1 - 2^{-\frac12}} \ge \lambda.
\]

To complete the proof, we now want to show that for every $a \in (0,1)$ and every $\ell \ge 1$,
\begin{equation}
\label{eq:ratio-lb}
\frac{a^\ell - a^{k\ell}}{1-a^{k \ell}} \ge \left( \frac{a - a^k}{1-a^k} \right)^\ell.
\end{equation}
Establishing this inequality completes the proof of the proposition by taking $a = 2^{-\epsilon}$ and using the lower bound $\frac{2^{-\epsilon} - 2^{-k \epsilon}}{1-2^{-k \epsilon}} \ge \lambda$ established above.

Fix any $a \in (0,1)$ and $k \in \mathbb{N}$.
Consider the function
\[
\phi \colon \ell \mapsto \ln\left( \frac{a^\ell - a^{k\ell}}{1-a^{k \ell}} \right) - \ell \ln\left( \frac{a - a^k}{1-a^k} \right).
\]
To complete the proof of \cref{eq:ratio-lb}, we want to show that $\phi(\ell) \ge 0$ for all $\ell \in \mathbb{N}$. 
We already have $\phi(1) = 0$.
So we want to show that $\phi'(\ell) \ge 0$ for all $\ell \ge 1$.
The derivative of $\phi$ is
\[
\phi'(\ell) = \ln(a) \left( \frac{a^\ell - k a^{k\ell}}{a^\ell - a^{k\ell}} + \frac{k a^\ell}{1-a^{k\ell}} \right) - \ln \left( \frac{a - a^k}{1 - a^k} \right).
\]

Using the bound $\frac{a - a^k}{1-a^k} < a$ that holds for all $a \in (0,1)$ and $k > 1$, we have that 
\[
\phi'(\ell) \ge  \ln(a) \left( \frac{a^\ell - k a^{k\ell}}{a^\ell - a^{k\ell}} + \frac{k a^\ell}{1-a^{k\ell}} - 1\right).
\]
Since $\ln(a) < 0$, we obtain our proof that $\phi'(\ell) \ge 0$ if we can show that
\[
\frac{a^\ell - k a^{k\ell}}{a^\ell - a^{k\ell}} + \frac{k a^\ell}{1-a^{k\ell}} \le 1.
\]
Expanding, re-arranging the terms, and simplifying, this inequality is equivalent to showing that
\[
0 \le a^k \big( (k-1) - k a + a^k \big) = a^k \cdot g(a).
\]
As we have already seen, $g$ is non-negative and so we also have that $\phi'(\ell) \ge 0$, and as a result \cref{eq:ratio-lb} holds for all $\ell \in \mathbb{N}$, completing the proof of the proposition.
\end{proof}

We are now ready to complete the proof of \cref{thm:LD-DPT}.

\LDDPT*

\begin{proof}
Let us first introduce and fix a few technical parameters.
Set $\lambda = 2^{-\frac1{2560}}$.
Choose $k$ to be the least positive integer that satisfies the inequality $\frac{2^{-1/2k}-2^{-1/2}}{1 - 2^{-1/2}} \ge \lambda$.
Let $\epsilon$ be defined so that $\gamma = 2^{-\epsilon}$.
Fix the universal constant $c$ in the lemma statement to be $c = 2^{-\frac{1}{2k}}$ so that $\epsilon \le \frac{1}{2k}$ when $\gamma \ge c$.

Fix any distribution $\mu$ on the domain of $f$.
Let $A$ be a deterministic algorithm that computes $\LD_{\ell}^n \circ f$ with a success probability of at least $2^{-\epsilon \ell}$ on inputs drawn from $\mu^n$.
Let us define $\beta = 2^{-k\epsilon \ell}$ and $\alpha = \frac{2^{-\epsilon \ell}-\beta}{1-\beta}$.
The algorithm $A$ has probability at least $\alpha$ of landing in a leaf with success probability at least $\beta$ since our choice of $\alpha$ and $\beta$ satisfies
\[
\alpha \cdot 1 + (1-\alpha )\beta  = 2^{-\epsilon \ell}.
\]

By \cref{lem:HX-list}, the entropy loss score of each leaf with success probability at least $\beta = 2^{-k\epsilon \ell}$ is at least $2^{(1-k\epsilon)^2 \ell/320 - n}$.
Therefore, the overall entropy loss score of $A$ is at least
$\alpha \cdot 2^{(1-k\epsilon)^2 \ell/320 - n}$.
By the choice of $k$ and \cref{prop:LDDPT-technicality}, we have $\alpha \ge \lambda^\ell = 2^{-\ell/2560} \ge 2^{-(1-k\epsilon)^2 \ell/640}$.
So the overall entropy score of $A$ is at least $2^{(1-k\epsilon)^2 \ell/640 - n}$.

The argument above implies that
\[
	\R_{2^{-\epsilon \ell}}(\LD_\ell^n \circ f) \ge 
	\R_{\Ent\, 2^{(1-k\epsilon)^2 \ell/640 - n}}^{\mu^n}(f^n).
\]
By the Direct Product Theorem with the entropy loss score, we also have the bound
\if\conf1
\begin{align*}
\R_{\Ent\, 2^{(1-k\epsilon)^2 \ell/640 - n}}^{\mu^n}&(f^n) \ge \\
	&\Omega\left( n \cdot \sR_{\Ent\, 2^{(1-k\epsilon)^2\ell n/640 - 1}}^\mu(f) \right).
\end{align*}
\else
\[
	\R_{\Ent\, 2^{(1-k\epsilon)^2 \ell/640 - n}}^{\mu^n}(f^n) \ge 
	\Omega\left( n \cdot \sR_{\Ent\, 2^{(1-k\epsilon)^2\ell n/640 - 1}}^\mu(f) \right).
\]
\fi
Since the success of any leaf of a randomized algorithm for $f$ is at least $\frac12$,
\[
	\sR_{\Ent\, 2^{(1-k\epsilon)^2\ell n/640 - 1}}^\mu(f) \ge
	\avgR_{\Ent\, 2^{(1-k\epsilon)^2\ell n / 640 - 1}}^\mu(f).
\]

Putting together the inequalities above and choosing $\mu$ that maximizes $\avgR_{\Ent\, 2^{\alpha-1}}^\mu(f)$,
we can then apply \cref{lem:entropy-lin-amplification} to obtain
\if\conf1
\begin{align*}
    \R_{2^{-\epsilon \ell}}(\LD_\ell^n \circ f) 
    &= \Omega \left( (1-k\epsilon)^2 \ell \cdot \R_{\frac23}(f) \right) \\
    &= \Omega \big( \ell \, \R(f) \big). \qedhere
\end{align*}
\else
\[
	\R_{2^{-\epsilon \ell}}(\LD_\ell^n \circ f) = 
	\Omega \left( (1-k\epsilon)^2 \ell \cdot \R_{\frac23}(f) \right) = \Omega \big( \ell \, \R(f) \big). \qedhere
\]
\fi
\end{proof}

\subsection{Proof of the Threshold Direct Product Theorem}

\begin{proof}[Proof of \cref{cor:ThrDPT}]
Let $A$ be an algorithm that computes $\Thr_k^n \circ f$ with success probability at least $\gamma^{\tau_{n,k}}$.
Let $y \in \{0,1\}^n$ denote the output of $A$ on some input $x \in \{0,1\}^{m \times n}$.
Then the set
\[
S_y = \{ z \in \{0,1\}^n : \sum_{i=1}^n \mathbf{1}[z_i \neq y_i] \le n-k \}
\]
includes the true value $z^* = (f(x^{(1)}),\ldots,f(x^{(n)})$ if and only if $A$ correctly computed $\Thr_k^n \circ f$ on its execution. 

Let $B$ be the algorithm that simulates $A$ and outputs the set $S_y$ corresponding to $A$'s output $y$.
The size of the list that $B$ outputs is $|S_y| = 2^{n-\tau_{n,k}}$.
And it computes $\LD_{\tau_{n,k}}^n \circ f$ correctly with  success probabilty $\gamma^{\tau_{n,k}}$.
Therefore, by \cref{thm:LD-DPT}, the algorithm $A$ simulated by $B$ must have query complexity at least $\Omega( \tau_{n,k} \, \R(f))$.
\end{proof}

\subsection{Proof of the Labelled Threshold DPT}

\begin{proof}[Proof of \cref{cor:LabelDPT}]
Let $A$ compute $\mathrm{Label}_k^n \circ f$ with success probability $\gamma^k$. Consider the algorithm $B$ that simulates $A$ and that on $A$'s output $y \in \{0,1,*\}^n$ returns the set
\[
S_y = \big\{ z \in \{0,1\}^n : \forall i \in [n], y_i \in \{z_i,*\}\big\}.
\]
This set has cardinality $|S_y| = 2^{n-k}$ and it includes the true value $(f(x^{(1)}),\ldots,f(x^{(n)}))$ if and only if $y \in \mathrm{Label}_k^n \circ f$.
Therefore, $B$ succeeds in computing $\LD_k^n \circ f$ with probability at least $\gamma^k$ and so by \cref{thm:LD-DPT}, the simulated algorithm $A$ must have query complexity $\Omega( k \, \R(f))$.
\end{proof}


\section{Score Amplification}
\label{sec:amplification}

\subsection{Success Amplification by Boosting}

Efficient success amplification is one of the key tools in the study of worst-case and average-case randomized complexity measures.
It is not a property shared by distributional complexity when we fix a single distribution.
(See~\cite{Sha03,BKST24} for example.)
But, interestingly and critically for the proof of the upper bound in \cref{thm:DPT}, the maximum distributional complexity of functions \emph{does} admit efficient success amplification.

\begin{lemma}
For every success probability $\gamma$ in the range $\frac12 \le \gamma \le 1$ and every Boolean-valued function $f \colon \calX \to \{0,1\}$, 
\[
\avgR_{3\gamma^2 - 2\gamma^3}(f) \le 4 \cdot \avgR_{\gamma}(f).
\]
\end{lemma}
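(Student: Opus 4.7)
The plan is to adapt Schapire's original boost-by-majority algorithm to the maximum distributional setting. Fix any distribution $\mu$ on the domain of $f$; by the definition of $\avgR$, it suffices to construct a randomized algorithm $H$ with $\barsuccess_f^\mu(H) \ge 3\gamma^2 - 2\gamma^3$ and $\barcost^\mu(H) \le 4\,\avgR_\gamma(f)$. Write $C := \avgR_\gamma(f)$ throughout.

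I will sequentially construct three weak learners, each optimal for a different distribution. Start with $h_1$, an algorithm witnessing $\avgR_\gamma^\mu(f) \le C$, and let $p_1(x) := \Pr[h_1(x) = f(x)]$ (probability over $h_1$'s internal randomness) and $\gamma_1 := \E_\mu[p_1(x)] \ge \gamma$. Assuming the nontrivial case $\gamma_1 < 1$, define
\[
\mu_2(x) := \mu(x)\left[\frac{p_1(x)}{2\gamma_1} + \frac{1-p_1(x)}{2(1-\gamma_1)}\right],
\]
the equal-weight mixture of the $\mu$-conditionals where $h_1$ is right and wrong. Minimizing the bracketed expression over $p_1(x) \in [0,1]$ and using $\gamma_1 \ge \gamma \ge 1/2$ yields $\mu_2(x) \ge \mu(x)/(2\gamma_1) \ge \mu(x)/2$ pointwise. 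Let $h_2$ witness $\avgR_\gamma^{\mu_2}(f) \le C$; this pointwise bound immediately gives $\barcost^\mu(h_2) \le 2\,\barcost^{\mu_2}(h_2) \le 2C$. Finally, let $q(x) := \Pr[h_1(x) \ne h_2(x)]$ (with independent randomness) and $Z := \E_\mu[q(x)]$, define the ``disagreement distribution'' $\mu_3(x) := q(x)\mu(x)/Z$, and let $h_3$ witness $\avgR_\gamma^{\mu_3}(f) \le C$.

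The combined algorithm $H$ runs $h_1$ and $h_2$ with independent randomness; if they agree it outputs their common value, otherwise it runs $h_3$ and outputs its answer. The cost analysis is immediate:
\[
\barcost^\mu(H) = \barcost^\mu(h_1) + \barcost^\mu(h_2) + \E_{x \sim \mu}\bigl[q(x)\,\E[\cost(h_3(x))]\bigr] \le C + 2C + Z\,\barcost^{\mu_3}(h_3) \le 4C,
\]
where the last inequality uses $Z \le 1$ and $\barcost^{\mu_3}(h_3) \le C$.

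The main obstacle is the success analysis. Writing $p_i(x) := \Pr[h_i(x) = f(x)]$, the algorithm $H$ succeeds on $x$ iff either (i) $h_1(x) = h_2(x) = f(x)$, or (ii) $h_1(x) \ne h_2(x)$ and $h_3(x) = f(x)$, so the success probability equals $\E_\mu[p_1 p_2 + (p_1 + p_2 - 2p_1 p_2)\,p_3]$. Subject to the three constraints $\E_\mu[p_1] \ge \gamma$, $\E_{\mu_2}[p_2] \ge \gamma$, and $\E_{\mu_3}[p_3] \ge \gamma$, Schapire's analysis shows that the minimum of this expression over all admissible $p_1,p_2,p_3$ is exactly $3\gamma^2 - 2\gamma^3$. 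The key technical step is to reduce the optimization to its extreme points, where the $p_i$ become $\{0,1\}$-valued indicators (so for each $x$ the triple of outcomes falls in one of eight cells), turning the minimization into a finite LP that admits a closed-form solution matching Schapire's bound.
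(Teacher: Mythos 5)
Your construction is exactly the paper's: the same three weak learners against the same three reweighted distributions (the equal mixture of the correct/incorrect conditionals, then the disagreement distribution), the same agree-or-defer combination rule, and the same $C + 2C + C$ cost accounting. The one place where you diverge -- and where your argument has a real gap -- is the correctness analysis. You reduce it to the claim that
\[
\min \; \E_\mu\bigl[p_1p_2 + (p_1+p_2-2p_1p_2)\,p_3\bigr] \;=\; 3\gamma^2-2\gamma^3
\]
subject to $\E_\mu[p_1]\ge\gamma$, $\E_{\mu_2}[p_2]\ge\gamma$, $\E_{\mu_3}[p_3]\ge\gamma$, and propose to solve it by passing to extreme points of "a finite LP." But this feasible set is not a polytope: $\mu_2$ is a (nonlinear) function of $p_1$ and $\mu_3$ of $(p_1,p_2)$, so the constraints are coupled and the region is non-convex in $(p_1,p_2,p_3)$ jointly. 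The extreme-point reduction is therefore not justified as stated, and with $\gamma_1:=\E_\mu[p_1]>\gamma$ the algebra genuinely picks up an extra cross term $(\gamma_3-\gamma_1)\,\E_\mu[p_2(1-2p_1)]$ of uncontrolled sign, so the bound does not just fall out of the identity.

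The paper sidesteps all of this by normalizing: since $f$ is Boolean and $\gamma\ge\frac12$, each weak learner may be mixed with the zero-cost random-guess algorithm so that its success against its own distribution is \emph{exactly} $\gamma$ (this only decreases cost). With all three successes pinned to $\gamma$, the success probability of the combined algorithm is computed as an exact algebraic identity -- $\Pr[B=f] = \gamma^2 + 2\gamma(1-\gamma)\sum_x\nu(x)\alpha_2(x) = 3\gamma^2-2\gamma^3$ -- and no minimization is needed. Alternatively, one can rescue your version by a sequential argument (minimize over $p_3$ first, then $p_2$, then $p_1$, using that the objective is linear in each block with nonnegative coefficients), but that has to be carried out; as written, "Schapire's analysis shows" is doing the work of the proof.
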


\begin{proof}
The proof is a direct extension of Schapire's original boosting algorithm~\cite{Sch90}.

Fix any distribution $\mu$ on $\calX$.
We want to show that $\avgR^\mu_{3 \gamma^2 - 2 \gamma^3} \le 4 \cdot \avgR_\gamma(f)$. 
We will select three randomized algorithms $A_1$, $A_2$, and $A_3$.
For each $i=1,2,3$ and $x \in \calX$, let $\alpha_i(x)$ denote the probability that $A_i$ outputs the corrrect value $f(x)$.

Let $A_1$ be a randomized algorithm with average success probability $\sum_{x \in \calX} \mu(x) \alpha_1(x) = \gamma$ and average cost $\barcost^\mu(A_1) = \avgR^\mu_\gamma(f)$ against $\mu$.

Define $\nu$ to be the distribution on $\calX$ obtained by setting
\[
	\nu(x) = \mu(x) \left( \frac{(1-\gamma)\alpha_1(x) + \gamma(1-\alpha_1(x))}{2\gamma(1-\gamma)} \right)
\]
for every $x \in \calX$. We can verify that $\nu$ is indeed a valid distribution since $\nu(x) \ge 0$ for all $x$ and
\[
\sum_x \nu(x) = \frac{\sum_x \mu(x) \alpha_1(x)}{2\gamma} + \frac{\sum_x \mu(x) (1-\alpha_1(x))}{2(1-\gamma)} = 1.
\]
We select $A_2$ to be a randomized algorithm with average success probability $\sum_{x \in \calX} \nu(x) \alpha_2(x) = \gamma$ and average cost $\barcost^\nu(A_2) = \avgR^\nu_\gamma(f)$ against $\nu$.

Define $\pi$ to be the distribution on $\calX$ obtained by setting
\[
	\pi(x) = \mu(x) \left( \frac{\alpha_1(x)(1-\alpha_2(x)) + (1-\alpha_1(x))\alpha_2(x)}{\Pr_{x \sim \mu}[ A_1(x) \neq A_2(x)]} \right).
\]
We select $A_3$ to be a randomized algorithm with average success probability $\sum_{x \in \calX} \pi(x) \alpha_3(x) = \gamma$ and average cost $\barcost^\pi(A_3) = \avgR^\pi_\gamma(f)$ against $\pi$.

We combine the algorithms $A_1$, $A_2$, and $A_3$ to obtain an algorithm $B$ for $f$ that proceeds as follows. 
On input $x \in \calX$, the algorithm $B$:
\begin{enumerate}
\item Runs $A_1$ on $x$, obtains the value $y_1$.
\item Runs $A_2$ on $x$, obtains the value $y_2$.
\item If $y_1 = y_2$, $B$ outputs the same value.
\item Otherwise, $B$ runs $A_3$ on $x$ and outputs its value.
\end{enumerate}
We now analyze the correctness and the cost of the algorithm $B$.

\paragraph{Correctness analysis.}
We want to show that $B$ has average success probability at least $3\gamma^2 - 2\gamma^3$ when $x$ is drawn from $\mu$.

The probability that $A_1(x) \neq A_2(x)$ and that $A_3(x)= f(x)$ is
\if\conf1
\begin{align*}
\sum_{x \in \calX} &\mu(x) \left( \alpha_1(x)(1-\alpha_2(x)) + (1-\alpha_1(x))\alpha_2(x)\right) \alpha_3(x) \\ 
&= \Pr_{x \sim \mu}[ A_1(x) \neq A_2(x) ] \cdot \sum_{x} \pi(x) \alpha_3(x). 
\end{align*}
\else
\[
	\sum_{x \in \calX} \mu(x) \left( \alpha_1(x)(1-\alpha_2(x)) + (1-\alpha_1(x))\alpha_2(x)\right) \alpha_3(x) = \Pr_{x \sim \mu}[ A_1(x) \neq A_2(x) ] \cdot \sum_{x} \pi(x) \alpha_3(x). 
\]
\fi
By the correctness guarantee on $A_3$, $\sum_x \pi(x) \alpha_3(x) = \gamma$.
And
\if\conf1
\begin{align*}
\Pr_{x \sim \mu}&[ A_1(x) \neq A_2(x) ] \\
&= \sum_{x \in \calX} \mu(x) \left( \alpha_1(x) (1-\alpha_2(x)) + (1-\alpha_1(x))\alpha_2(x) \right).
\end{align*}
\else
\[
	\Pr_{x \sim \mu}[ A_1(x) \neq A_2(x) ] = \sum_{x \in \calX} \mu(x) \left( \alpha_1(x) (1-\alpha_2(x)) + (1-\alpha_1(x))\alpha_2(x) \right).
\]
\fi
The guarantee on $A_1$ implies that $\sum_x \mu(x)\alpha_1(x) = \gamma$ so that 
\if\conf1
\begin{align*}
\Pr_{x \sim \mu}&[ A_1(x) \neq A_2(x) \wedge A_3(x) = f(x)] \\
&= \gamma^2 + \gamma \sum_x \mu(x)\left( (1-\alpha_1(x))\alpha_2(x) - \alpha_1(x) \alpha_2(x)\right).    
\end{align*}
\else
\[
\Pr_{x \sim \mu}[ A_1(x) \neq A_2(x) \wedge A_3(x) = f(x)] = 
	\gamma^2 + \gamma \sum_x \mu(x)\left( (1-\alpha_1(x))\alpha_2(x) - \alpha_1(x) \alpha_2(x)\right).
\]
\fi
Also,
\[
	\Pr_{x \sim \mu}[ A_1(x) = A_2(x) = f(x)] = \sum_{x \in \calX} \mu(x) \alpha_1(x) \alpha_2(x).
\]
So the overall probability that $B$ outputs the correct value $f(x)$ is
\if\conf1
\begin{align*}
\Pr_{x \sim \mu}&[ B(x) = f(x) ] \\
&= \Pr_{x \sim \mu}[ A_1(x) = A_2(x) = f(x)] \\
&\qquad + 
   \Pr_{x \sim \mu}[ A_1(x) \neq A_2(x) \wedge A_3(x) = f(x)] \\
&= \gamma^2 + \sum_x \mu(x) \left( (1-\gamma)\alpha_1(x) + \gamma(1-\alpha_1(x)) \right) \alpha_2(x) \\
&= \gamma^2 + 2\gamma(1-\gamma) \sum_x \nu(x) \alpha_2(x)
\end{align*}
\else
\begin{align*}
\Pr_{x \sim \mu}[ B(x) = f(x) ]
&= \Pr_{x \sim \mu}[ A_1(x) = A_2(x) = f(x)] + 
   \Pr_{x \sim \mu}[ A_1(x) \neq A_2(x) \wedge A_3(x) = f(x)] \\
&= \gamma^2 + \sum_x \mu(x) \left( (1-\gamma)\alpha_1(x) + \gamma(1-\alpha_1(x)) \right) \alpha_2(x) \\
&= \gamma^2 + 2\gamma(1-\gamma) \sum_x \nu(x) \alpha_2(x)
\end{align*}
\fi
and the correctness guarantee on $A_2$ implies that $\sum_x \nu(x) \alpha_2(x) = \gamma$ which means that the overall success probability is $\gamma^2 + 2\gamma^2(1-\gamma) = 3 \gamma^2 - 2 \gamma^3$, as we wanted to show.

\paragraph{Cost analysis.}
The average cost of $A_1$ against $\mu$ is bounded by
\[
\barcost^\mu(A_1) = \avgR^\mu_\gamma(f) \le \avgR_\gamma(f).
\]
Let $c_2(x)$ denote the average cost of $A_2$ on input $x$.
The average cost of $A_2$ with respect to $\mu$ is
\if\conf1
\begin{align*}
\barcost^\mu(A_2) &= \sum_x \mu(x) c_2(x) \\
&= \sum_x \frac{2}{\alpha_1(x)/\gamma + (1-\alpha_1(x))/(1-\gamma)} \nu(x) c_2(x).
\end{align*}
\else
\[
\barcost^\mu(A_2) = \sum_x \mu(x) c_2(x) = \sum_x \frac{2}{\alpha_1(x)/\gamma + (1-\alpha_1(x))/(1-\gamma)} \nu(x) c_2(x).
\]
\fi
For any input $x$, $0 \le \alpha_1(x) \le 1$ and so
\[
\frac{2}{\alpha_1(x)/\gamma + (1-\alpha_1(x))/(1-\gamma)}
\le 2 \max\{\gamma, 1-\gamma\} \le 2.
\]
Also, the cost guarantee on $A_2$ implies that $\sum_x \nu(x) c_2(x) \le \avgR_\gamma(f)$ so
\[
\barcost^\mu(A_2) \le 2 \sum_x \nu(x) c_2(x) \le 2 \avgR_\gamma(f).
\]

Finally, for every $x \in \calX$, the algorithm $B$ runs $A_3$ with probability $\alpha_1(x)(1-\alpha_2(x)) + (1-\alpha_1(x))\alpha_2(x)$.
So if we let $c_3(x)$ denote the expected cost of $A_3$ on input $x$, the average contribution of $A_3$ to the overall cost of $B$ is
\if\conf1
\begin{align*}
\sum_x &\mu(x) \big( \alpha_1(x)(1-\alpha_2(x)) + (1-\alpha_1(x))\alpha_2(x) \big) c_3(x)
\\ &= \Pr_{x \sim \mu}[ A_1(x) \neq A_2(x)] \sum_x \pi(x) c_3(x).
\end{align*}
\else
\[
\sum_x \mu(x) \big( \alpha_1(x)(1-\alpha_2(x)) + (1-\alpha_1(x))\alpha_2(x) \big) c_3(x)
= \Pr_{x \sim \mu}[ A_1(x) \neq A_2(x)] \sum_x \pi(x) c_3(x).
\]
\fi
The cost guarantee on $A_3$ implies that $\sum_x \pi(x) c_3(x) \le \avgR_\gamma(f)$.
So its overall contribution to the cost of $B$ is at most 
$\Pr_{x \sim \mu}[ A_1(x) \neq A_2(x)] \cdot \avgR_\gamma(f) \le \avgR_\gamma(f)$.

In total, the average cost of $B$ over $\mu$ is thus bounded above by
\[
\barcost^\mu(B) \le \avgR_\gamma(f) + 2 \avgR_\gamma(f) + \avgR_\gamma(f) = 4 \avgR_\gamma(f). \qedhere
\]
\end{proof}

The boosting lemma immediately yields the score amplification result stated in the preliminaries, which we restate here for convenience.

\boosting*

\begin{proof}
This lemma follows immediately from the last one by observing that the difference in the success probability $3\gamma^2 - 2\gamma^3$ after boosting and the original success probability $\gamma$ is
\[
	3\gamma^2 - 2\gamma^3 - 1 = (2\gamma - 1)(1-\gamma).
\]
When $\frac{1+\delta}2 \le \gamma \le \frac23$, this difference is bounded below by $(2\gamma - 1)(1-\gamma) \ge \big(2(\tfrac{1+\delta}2) - 1\big)\big(1 - \tfrac23\big) = \frac{\delta}3$.
And when $\frac23 \le \gamma \le 1-\delta$, we have $(2\gamma - 1)(1-\gamma) \ge \big(2(\tfrac23)-1\big)\big(1-(1-\delta)\big) = \frac{\delta}3$.
\end{proof}

\subsection{Hellinger Score Amplification}

The success amplification result from the last section also shows that for every small $\delta$, $\R_{\frac23}(f) = \Theta(\avgR_{\frac23}(f)) = 
\Theta( \frac1{\delta^2} \avgR_{\frac{1+\delta}2}(f))$.
When we use Hellinger score, we can get more efficient success amplification where the multiplicative factor is linear instead of quadratic in $\frac1\delta$.

The technical lemma at the heart of the proof is stated more naturally in terms of \emph{Hellinger loss} instead of Hellinger score.
Define the Hellinger loss of a leaf $\ell$ with respect to $f$ and $\mu$ to be $\HelLoss_f^\mu(\ell) = 2\Hel_f^\mu(\ell) - 1$.
The Hellinger loss of a randomized algorithm is the expected Hellinger loss of its leaves:
$\barHelLoss_f^\mu(R) = \E_{D \sim R} \E_{\ell \sim D(\mu)} \HelLoss_f^\mu(\ell)$.
We write $\avgR_{\HelLoss\, \epsilon}(f)$ to denote the maximum distributional query complexity of $f$ for the set of algorithms with Hellinger loss at most $\epsilon$.

\begin{lemma}
\label{lem:hellinger-boosting}
For any $\alpha \in [\frac12,1]$, $\beta \in [0,1]$ and function $f \colon \{0,1\}^m \to \{0,1\}$,
\[
	\avgR_{\HelLoss\,\alpha(1 - \frac{1-\beta}6)}(f) \le \avgR_{\HelLoss\,\alpha}(f) + 2 \,\avgR_{\HelLoss\,\beta}(f).
\]
\end{lemma}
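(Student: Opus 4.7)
I would prove this lemma by a boosting-style construction: given an algorithm $A$ meeting the $\avgR_{\HelLoss\,\alpha}(f)$ guarantee and an algorithm $B$ meeting the $\avgR_{\HelLoss\,\beta}(f)$ guarantee, combine them into a single algorithm $C$ and analyse its Hellinger loss and cost. Fix an arbitrary distribution $\mu$ on the domain of $f$. Invoking the definition of $\avgR_{\HelLoss\,\alpha}(f)$ as a maximum over distributions yields $A$ with $\barHelLoss_f^\mu(A) \ge \alpha$ and expected $\mu$-cost at most $\avgR_{\HelLoss\,\alpha}(f)$. Writing $\psi(p) = 1 - 2\sqrt{p(1-p)}$ for the leaf-level Hellinger loss function and $p_\ell = \mu(\ell_1)/\mu(\ell)$ for the conditional at leaf $\ell$ of $A$, this gives $\E_\ell[\psi(p_\ell)] \ge \alpha$.

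Next, I refine selected leaves of $A$. For each leaf $\ell$, the maximum-distributional definition of $\avgR_{\HelLoss\,\beta}(f)$ applied to the conditional distribution $\mu|_\ell$ yields a sub-algorithm $B_\ell$ with $\barHelLoss_f^{\mu|_\ell}(B_\ell) \ge \beta$ and expected conditional cost at most $\avgR_{\HelLoss\,\beta}(f)$. The combined algorithm $C$ runs $A$, reads off the reached leaf $\ell$, and then runs $B_\ell$ at that leaf whenever the leaf meets a refinement criterion based on $p_\ell$. Two structural properties of the Hellinger loss are used here: concavity of $\rho(p) = 2\sqrt{p(1-p)}$, which ensures that refining any leaf can only weakly increase its expected Hellinger loss; and the $\beta$-guarantee from $B$, which gives a floor of $\beta$ on the average Hellinger loss at the refined sub-leaves. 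Together these yield an average post-refinement loss of at least $\max(\psi(p_\ell),\beta)$ conditioned on landing at a refined leaf.

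The factor of $2$ in front of $\avgR_{\HelLoss\,\beta}(f)$ should come from restricting refinement to a subset of leaves of total $\mu$-mass at most $\tfrac12$, so that the amortized per-input refinement cost is at most twice the expected conditional cost. The target Hellinger loss $\alpha(1-\tfrac{1-\beta}{6})$ would then be obtained by plugging a reverse-Markov bound on $\psi(p_\ell)$ (using $\E_\ell[\psi(p_\ell)] \ge \alpha$ and $\psi \in [0,1]$) into the combined-loss identity $\barHelLoss_f^\mu(C) = \E_\ell\!\left[\text{loss at }\ell\text{ after possible refinement}\right]$ and optimising over the refinement threshold.

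The main obstacle is nailing down the specific constant $\tfrac{1}{6}$ and the exact refinement criterion. This reduces to a one-dimensional optimisation in which the threshold separating refined from unrefined leaves must simultaneously (i) keep the refined mass below $\tfrac12$, giving the factor $2$, and (ii) maximise the average-loss improvement, giving the factor $1-\tfrac{1-\beta}{6}$. Verifying that the optimisation collapses to exactly $\tfrac{1}{6}$, rather than a close but different constant, is the most delicate part of the argument and determines precisely how aggressively a leaf's conditional bias $p_\ell$ must lie near $\tfrac12$ before $C$ chooses to call $B_\ell$.
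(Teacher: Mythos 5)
There is a genuine gap. Your construction runs $B_\ell$ on the raw conditional distribution $\mu|_\ell$, and the only guarantees you extract are (i) an \emph{absolute} bound of $\beta$ on the Hellinger quantity of $B_\ell$'s leaves under $\mu|_\ell$, and (ii) concavity of $\rho(p)=2\sqrt{p(1-p)}$, which says refinement cannot make a leaf worse. Together these give a post-refinement contribution of $\mu(\ell)\cdot\min\{\rho(p_\ell),\beta\}$ per refined leaf, and that is provably too weak: if every leaf of $A$ already satisfies $\rho(p_\ell)\le\beta$ (take $\alpha=\tfrac12$, $\beta=0.9$, and all leaves with $\rho(p_\ell)=\tfrac12$), then the trivial zero-query algorithm is a legitimate choice of $B_\ell$ for $\mu|_\ell$, your refinement yields \emph{no} improvement, and the total stays at $\alpha$ while the lemma demands the strictly smaller value $\alpha(1-\tfrac{1-\beta}{6})$. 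No refinement threshold fixes this. The paper instead applies $B_\ell$ to the \emph{rebalanced} conditional $\nu_\ell$ putting mass $\tfrac12$ on each of $\ell_0$ and $\ell_1$; the identity $2\sqrt{\mu(\ell'_0)\mu(\ell'_1)}=2\sqrt{\mu(\ell_0)\mu(\ell_1)}\cdot 2\sqrt{\nu_\ell(\ell'_0)\nu_\ell(\ell'_1)}$ turns $B_\ell$'s guarantee into a \emph{multiplicative} factor of $\beta$ on every leaf's contribution, giving total $\alpha\beta\le\alpha(1-\tfrac{1-\beta}{6})$ with every leaf refined and no threshold optimization; the constant $\tfrac16$ is pure slack, not the solution of an optimization problem.

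Your account of the factor $2$ is also incorrect: restricting refinement to leaves of total mass at most $\tfrac12$ would make the amortized refinement cost \emph{smaller}, not larger, and produces no factor of $2$. In the actual proof the factor $2$ is a change-of-measure cost: $B_\ell$'s expected-cost guarantee is with respect to $\nu_\ell$, but $C$ feeds it inputs from $\mu|_\ell$, and the likelihood ratio $\frac{\mu(x)/\mu(\ell)}{\nu_\ell(x)}=\frac{2\mu(\ell_b)}{\mu(\ell)}\le 2$. (A minor separate point: for the statement to be non-vacuous, the subscript in $\avgR_{\HelLoss\,\epsilon}$ must be read as an upper bound on the expected Hellinger affinity $\E_\ell\bigl[2\sqrt{p_\ell(1-p_\ell)}\bigr]$; this is the convention the paper's proof actually uses, and your sign conventions drift between the two readings.)
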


\begin{proof}
Fix any distribution $\mu$ on the inputs to $f$.
Take $A$ to be a randomized decision tree with Hellinger loss at most $\alpha$ and average cost at most $\barR_{\HelLoss\,\alpha}(f)$ with respect to $\mu$.

For any set leaf $\ell$ of a decision tree in the support of $A$, define the distribution
\[
\nu_\ell(x) = \begin{cases}
	\frac{\mu(x)}{2\mu(\ell_0)} & \mbox{if } x \in \ell_0 \\
	\frac{\mu(x)}{2\mu(\ell_1)} & \mbox{if } x \in \ell_1 \\
	0 & \mbox{if } x \notin \ell.
\end{cases}
\]
Let $B_\ell$ denote a randomized algorithm with Hellinger loss at most $\beta$ and cost at most $\avgR_{\HelLoss\,\beta}(f)$ against the distribution $\nu_\ell$.

Consider now the randomized algorithm $C$ defined by the following procedure:
First, we run $A$ on the input $x$.
Let $\ell$ denote the leaf of the tree reached by input $x$ and the subset of inputs $\{0,1\}^m$ that lead to this leaf.
Then we run $B_\ell$ on $x$.

The Hellinger loss of the algorithm $C$ is bounded by
\if\conf1
\begin{align*}
\barHelLoss_f^\mu(C) 
&= \E_{D \sim A} \sum_{\ell \in D} \E_{D' \sim B_\ell} \sum_{\ell' \in D'} 2\sqrt{\mu(\ell'_0) \mu(\ell'_1)} \\
&= \E_{D \sim A} \sum_{\ell \in D} \E_{D' \sim B_\ell} \sum_{\ell' \in D'} 2\sqrt{\mu(\ell_0) \mu(\ell_1)} \\
&\qquad \qquad \qquad \qquad \qquad \cdot 2\sqrt{\nu_\ell(\ell'_0) \nu_\ell(\ell'_1)} \\
&= \E_{D \sim A} \sum_{\ell \in D} 2\sqrt{\mu(\ell_0) \mu(\ell_1)} \\
&\qquad \qquad \cdot \left( \E_{D' \sim B_\ell} \sum_{\ell' \in D'} 2\sqrt{\nu_\ell(\ell'_0) \nu_\ell(\ell'_1)} \right) \\
&\le \alpha \cdot \beta.
\end{align*}
\else
\begin{align*}
\barHelLoss_f^\mu(C) 
&= \E_{D \sim A} \sum_{\ell \in D} \E_{D' \sim B_\ell} \sum_{\ell' \in D'} 2\sqrt{\mu(\ell'_0) \mu(\ell'_1)} \\
&= \E_{D \sim A} \sum_{\ell \in D} \E_{D' \sim B_\ell} \sum_{\ell' \in D'} 2\sqrt{\mu(\ell_0) \mu(\ell_1)} \cdot 2\sqrt{\nu_\ell(\ell'_0) \nu_\ell(\ell'_1)} \\
&= \E_{D \sim A} \sum_{\ell \in D} 2\sqrt{\mu(\ell_0) \mu(\ell_1)} \cdot \left( \E_{D' \sim B_\ell} \sum_{\ell' \in D'} 2\sqrt{\nu_\ell(\ell'_0) \nu_\ell(\ell'_1)} \right) \\
&\le \alpha \cdot \beta.
\end{align*}
\fi

By construction, the expected cost of $A$ with respect to $\mu$ is at most $\avgR_{\HelLoss\,\alpha}(f)$.
And for any set $\ell$, the expected cost of $B_\ell$ on inputs drawn from $\mu$ conditioned on being in $\ell$ is
\if\conf1
\begin{align*}
\sum_{x \in \ell} &\frac{\mu(x)}{\mu(\ell)} \barcost(B_\ell,x) \\
&= \sum_{x \in \ell_0} \frac{2\mu(\ell_0)}{\mu(\ell)} \nu_\ell(x) \barcost(B_\ell,x) \\
&\quad + 
  \sum_{x \in \ell_1} \frac{2\mu(\ell_1)}{\mu(\ell)} \nu_\ell(x) \barcost(B_\ell,x) \\
&\le 2 \frac{\max\{\mu(\ell_0),\mu(\ell_1)\}}{\mu(\ell)}  \sum_x \nu_\ell(x) \barcost(B_\ell,x).
\end{align*}
\else
\begin{align*}
\sum_{x \in \ell} \frac{\mu(x)}{\mu(\ell)} \barcost(B_\ell,x) 
&= \sum_{x \in \ell_0} \frac{2\mu(\ell_0)}{\mu(\ell)} \nu_\ell(x) \barcost(B_\ell,x) + 
  \sum_{x \in \ell_1} \frac{2\mu(\ell_1)}{\mu(\ell)} \nu_\ell(x) \barcost(B_\ell,x) \\
&\le 2 \frac{\max\{\mu(\ell_0),\mu(\ell_1)\}}{\mu(\ell)} \sum_x \nu_\ell(x) \barcost(B_\ell,x).
\end{align*}
\fi
By the trivial upper bound $\max\{\mu(\ell_0),\mu(\ell_1)\} \le \mu(\ell)$ and the guarantee on the expected cost of $B_\ell$ under the distribution $\nu_\ell$, we then have that 
$\overline{\cost}(B_\ell,\mu|_\ell) \le 2\, \avgR_\beta(f)$.
Therefore, the overall expected cost of $B$ is at most $\avgR_{\HelLoss\,\alpha}(f) + 2 \, \avgR_{\HelLoss\,\beta}(f)$, as claimed.
\end{proof}

We then get the linear amplification lemma stated in the preliminaries as a corollary.

\linamplification*

\begin{proof}
By repeated application of \cref{lem:hellinger-boosting} with the initial parameter $\alpha=1$ and the observation that $\avgR_{\HelLoss\,1}(f) = 0$ for all functions $f$, we have that for any $k \ge  1$ and $\beta \in [0,1]$,
\[
	\avgR_{\HelLoss\,\beta^k}(f) \le 2k \cdot \avgR_{\HelLoss\,\beta}(f).
\]
Therefore, taking $k = \lceil \frac{2}{1-\beta} \rceil$, we have that 
\[
\avgR_{\HelLoss\,e^{-1}}(f) = O \left( \frac{1}{1-\beta} \avgR_{\HelLoss\,\beta}(f) \right).
\]
Since Hellinger score $\frac{1+\delta}2$ is equivalent to Hellinger loss $1-\delta$, the above identity implies that
\[
\avgR_{\HelLoss\,e^{-1}}(f) = O \left( \frac{1}{\delta} \cdot \avgR_{\Hel\,\frac{1+\delta}2}(f) \right).
\]
The lemma then follows from observing that $\R(f) = O(\avgR_{\HelLoss\,c}(f))$ for any constant $c$ bounded away from $0$ and $1$.
\end{proof}

The linear amplification lemma for Hellinger score also immediately implies an analogous lemma for exponential entropy score.

\begin{lemma}
\label{lem:entropy-lin-amplification}
For any $\alpha$, $0 \le \alpha \le 1$,
\[
	\R(f) = O \left( \frac{1}{\alpha} \cdot \avgR_{\Ent\,2^{\alpha-1}}(f) \right).
\]
\end{lemma}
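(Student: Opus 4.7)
The plan is to reduce this amplification lemma to the Hellinger linear amplification lemma \lem{hellinger-lin-amp} via a pointwise comparison between the exponential entropy and Hellinger score functions. The bridge is the elementary inequality
\[
2^{-h(p)} + 2\sqrt{p(1-p)} \;\le\; \tfrac{3}{2} \qquad \text{for all } p \in [0,1],
\]
which says that at the level of a single leaf, any gain in entropy score above $\tfrac12$ translates into a gain in Hellinger score above $\tfrac12$ (up to a constant factor).

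To justify the pointwise inequality, I would observe that both $2^{-h(p)}$ and $2\sqrt{p(1-p)}$ are symmetric about $p=1/2$, that equality holds at $p=1/2$, and then compute the derivative on $[0,1/2]$: the positive $(1-2p)/\sqrt{p(1-p)}$ contribution dominates the negative $-\ln((1-p)/p)\cdot 2^{-h(p)}$ contribution throughout, so the LHS is monotonically increasing on $[0,1/2]$ and attains its maximum $3/2$ at the midpoint. Rearranging this pointwise bound as $2^{-h(p)} - \tfrac12 \le 2\bigl(\tfrac12 - \sqrt{p(1-p)}\bigr)$ and taking expectations over the leaves of any randomized algorithm $R$ yields the scorewise comparison
\[
\barEnt_f^\mu(R) - \tfrac12 \;\le\; 2\bigl(\barHel_f^\mu(R) - \tfrac12\bigr).
\]

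Given this comparison, whenever $\barEnt_f^\mu(R) \ge 2^{\alpha-1}$ we also have $\barHel_f^\mu(R) \ge (1+\delta)/2$ with $\delta = (2^\alpha - 1)/2 = \Omega(\alpha)$ (using $2^\alpha \ge 1 + \alpha \ln 2$). Since every algorithm meeting the entropy-score requirement meets the Hellinger-score requirement, $\avgR_{\Hel\,(1+\delta)/2}(f) \le \avgR_{\Ent\,2^{\alpha-1}}(f)$, and combining with \lem{hellinger-lin-amp} gives
\[
\R(f) \;=\; O\!\left(\tfrac{1}{\delta}\,\avgR_{\Hel\,(1+\delta)/2}(f)\right) \;=\; O\!\left(\tfrac{1}{\alpha}\,\avgR_{\Ent\,2^{\alpha-1}}(f)\right),
\]
as required. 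The only real obstacle is establishing the pointwise inequality: it is a single-variable calculus exercise, but the behaviour near $p \to 0, 1$ (where $-\ln((1-p)/p)$ blows up while $2^{-h(p)}$ tends to $1$) warrants a touch of care when running the monotonicity argument.
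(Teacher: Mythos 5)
Your proposal is correct and follows essentially the same route as the paper: both reduce to \cref{lem:hellinger-lin-amp} by showing that any algorithm with exponential-entropy score at least $2^{\alpha-1}$ has Hellinger score at least $\frac{1+\Omega(\alpha)}{2}$, via a pointwise (leaf-wise) comparison of the two score functions followed by linearity of expectation. The only difference is how the elementary inequality is justified: the paper chains the two convexity bounds $2^{-x}\le 1-\frac{x}{2}$ (for $x\in[0,1]$) and $1-h(q)\le\frac{1}{\ln 2}\bigl(1-2\sqrt{q(1-q)}\bigr)$ instead of your direct monotonicity argument, which does go through --- the dominance you need follows cleanly from the logarithmic-mean inequality $\ln\frac{1-p}{p}\le\frac{1-2p}{\sqrt{p(1-p)}}$ together with $2^{-h(p)}\le 1$.
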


\begin{proof}
For any $x \in [0,1]$, the convexity of the function $2^{-x}$ implies that $2^{-x} \le 1 - \frac x2$.
As a result, the exponential entropy score of a leaf with density $q$ is
\[
q^q(1-q)^q = 2^{-h(q)} \le 1 - \frac12 h(q) = \frac{1 + (1-h(q))}{2}.
\]
The function $1-h(q)$ is in turn bounded above by 
\[
1-h(q) \le \frac1{\ln 2} (1 - 2 \sqrt{q(1-q)}).
\]
Consider any randomized algorithm $R$ with exponential entropy score at least $2^{\alpha - 1}$ with respect to $f$ and a fixed distribution $\mu$.
Then with the two inequalities above,
\if\conf1
\begin{align*}
2^{\alpha - 1} &\le \barEnt^\mu_f(R) \\
&= \E_{\ell} 2^{-h(\mu(\ell_1)/\mu(\ell))} \\
&\le \frac{1 + (\ln 2)^{-1}\E_\ell\big[1 - 2\sqrt{\frac{\mu(\ell_1)}{\mu(\ell)}\cdot \frac{\mu(\ell_0)}{\mu(\ell)}}\big]}{2}.
\end{align*}
\else
\[
2^{\alpha - 1} \le \barEnt^\mu_f(R) = \E_{\ell} 2^{-h(\mu(\ell_1)/\mu(\ell))}
\le \frac{1 + (\ln 2)^{-1}\E_\ell\big[1 - 2\sqrt{\frac{\mu(\ell_1)}{\mu(\ell)}\cdot \frac{\mu(\ell_0)}{\mu(\ell)}}\big]}{2}.
\]
\fi
This inequality and the basic inequality $1 + \ln(2) \alpha \le 2^\alpha$ imply
\[
\ln(2)^2 \alpha \le \E_\ell\left[1 - 2\sqrt{\frac{\mu(\ell_1)}{\mu(\ell)}\cdot \frac{\mu(\ell_0)}{\mu(\ell)}}\right].
\]
And the Hellinger score of $R$ is therefore bounded below by
\if\conf1
\begin{align*}
\barHel^\mu_f(R) &= \E_\ell\left[1 - \sqrt{\frac{\mu(\ell_1)}{\mu(\ell)}\cdot \frac{\mu(\ell_0)}{\mu(\ell)}}\right] \\
&= \frac{1 + \E_\ell\left[1 - 2\sqrt{\frac{\mu(\ell_1)}{\mu(\ell)}\cdot \frac{\mu(\ell_0)}{\mu(\ell)}}\right]}{2} \\
&\ge \frac{1 + \ln(2)^2 \alpha}{2}.
\end{align*}
\else
\[
\barHel^\mu_f(R) = \E_\ell\left[1 - \sqrt{\frac{\mu(\ell_1)}{\mu(\ell)}\cdot \frac{\mu(\ell_0)}{\mu(\ell)}}\right]
= \frac{1 + \E_\ell\left[1 - 2\sqrt{\frac{\mu(\ell_1)}{\mu(\ell)}\cdot \frac{\mu(\ell_0)}{\mu(\ell)}}\right]}{2} \ge \frac{1 + \ln(2)^2 \alpha}{2}.
\]
\fi

We have just shown that for every distribution $\mu$, every randomized algorithm $R$ with entropy score $\barEnt^\mu_f(R) \ge 2^{\alpha - 1}$ also has Hellinger score $\barHel^\mu_f(R) \ge \frac{1 + \ln(2)^2 \alpha}{2}$. As a result, $\avgR_{\Ent\,2^{\alpha-1}}(f) \le
 \avgR_{\Hel\,\frac{1+ \ln(2)^2 \alpha}2}(f)$ and the lemma is an immediate consequence of \cref{lem:hellinger-lin-amp}.
\end{proof}


\section{Relations Between the Different Complexity Measures}

We prove various relations between different complexity measures in this section.

\subsection{Worst-Case and Average-Case Complexity}
\label{sec:wc-ac}

Relations between worst-case and average-case complexity measures are well known. We include them here for the convenience of the reader.

As a starting point, by definition, the worst-case and average-case randomized query complexity measures always satisfy the inequality
\[
	\barR_\gamma(f) \le \R_\gamma(f).
\]
In the other direction, we have the following inequalities.

\begin{proposition}
\label{prop:wc-ac}
For every $\frac12 < \gamma < 1$ and every function $f$,
\[
	\R_{\gamma}(f) \le O\left( \frac{1}{\min\{\gamma - \frac12, 1 - \gamma\}} \cdot \barR_\gamma(f) \right).
\]
\end{proposition}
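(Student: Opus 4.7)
Let $R$ be an algorithm that attains the average-case measure: worst-case success probability at least $\gamma$ and worst-case expected cost at most $C := \barR_\gamma(f)$. Write $\delta = \min\{\gamma - \tfrac12, 1 - \gamma\}$. The plan is to convert $R$ into a worst-case algorithm through a truncate-then-amplify construction whose parameters are tuned to the regime.

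First, for a parameter $T \ge 1$, define $R_T$ to be the algorithm that simulates $R$ but aborts after $TC$ queries and outputs a uniformly random bit on abort. Markov's inequality bounds the abort probability on any fixed input by $1/T$, and the identity $\Pr[R_T=f(x)] = \Pr[R=f(x)\text{ and no abort}] + \Pr[\text{abort}]/2$ then gives that $R_T$ has worst-case success at least $\gamma - 1/(2T)$ with deterministic worst-case cost $TC$. With $R_T$ in hand, I take the majority vote of $k$ independent copies of $R_T$ for a suitable $k$.

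The parameters $T$ and $k$ are chosen according to the regime. When $\gamma \ge 3/4$ (so $\delta = 1 - \gamma$), set $T = 4$, which leaves $R_T$ with bias at least $1/8$; a Chernoff bound then shows that $k = O(\log(1/\delta))$ copies amplify this to success at least $1 - \delta = \gamma$, for total worst-case cost $O(C \log(1/\delta)) \le O(C/\delta)$. When $\gamma < 3/4$ (so $\delta = \gamma - \tfrac12 \le 1/4$), set $T = 4/\delta$, which leaves $R_T$ with bias at least $7\delta/8$, and check directly that $k = 3$ suffices: the majority of three coins of bias $\epsilon$ has bias $\tfrac32 \epsilon - 2\epsilon^3$, and substituting $\epsilon = 7\delta/8$ yields bias at least $\delta$ for every $\delta \le 1/4$. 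The total worst-case cost in this case is $12\, C/\delta$, and combining the two regimes gives the claimed bound.

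The main obstacle is the small-advantage regime: a naive Chernoff estimate there would force $k = \Omega(1/\delta^2)$ and yield only the insufficient bound $O(C/\delta^3)$. The resolution is that we only need to amplify by a small constant multiplicative factor (roughly $8/7$) rather than to a value bounded away from $1/2$, and the explicit cubic identity for the three-fold majority is flexible enough to deliver this constant-factor amplification using only a constant number of repetitions.
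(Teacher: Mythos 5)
Your proof is correct and follows essentially the same truncate-via-Markov-then-amplify-by-majority strategy as the paper, which in the small-advantage regime likewise cuts off the algorithm at $\Theta(1/\delta)$ times its expected cost and then invokes a constant-size majority to recover the lost constant fraction of the bias. The only differences are that you make explicit the constant-factor bias amplification (the cubic identity for majority-of-three) that the paper dismisses as ``standard success amplification,'' and that in the small-error regime you truncate at a constant multiple and repeat $O(\log\frac{1}{1-\gamma})$ times rather than truncating at $\Theta(\frac{1}{1-\gamma})$ and repeating $O(1)$ times, which in fact yields the slightly stronger bound $O\bigl(\log\frac{1}{1-\gamma}\bigr)\cdot\barR_\gamma(f)$ there.
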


\begin{proof}
Consider first the small-advantage regime where $\gamma = \frac{1+\delta}{2}$. 
Let $A$ be a randomized algorithm with success probability $\gamma$ and average cost $\barR_\gamma(f)$.
Let $A'$ be the algorithm obtained by truncating $A$ and forcing it to guess the value of $f$ whenever $A$ exceeds its expected cost by a multiplicative factor of $4/\delta$.
By Markov's inequality, the probability that this truncation event occurs is at most $\delta/4$, so that the success probability of $A'$ is at least $\frac{1 + \delta/2}{2}$ and we obtain the desired bound on the worst-case complexity for success probability $\frac{1+\delta}{2}$ by applying standard success amplification.

For the small-error regime where $\gamma = 1 - \epsilon$, we use the same argument with truncation factor $1/\epsilon$.
\end{proof}

\Cref{prop:wc-ac} implies that the average-case and worst-case randomized query complexity measures are asymptotically equivalent in the bounded-error regime, when $\gamma$ is bounded away from both $\frac12$ and $1$ by a constant.
In the small-bias and small-error regimes, however, there are functions whose average-case randomized query complexity is asymptotically smaller than their worst-case randomized query complexity.
For the small-bias regime, the bound in the last proposition is optimal.

\begin{proposition}
For every $n \in \mathbb{N}$, there exists a function $f \colon \{0,1\}^n \to \{0,1\}$ such that for every $\delta = \delta(n) > 0$
\[
	\barR_{\frac{1+\delta}2}(f) \le \delta \cdot \R_{\frac{1+\delta}2}(f).
\]
\end{proposition}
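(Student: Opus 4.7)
The plan is to take $f = \PARITY_n$, the parity function on $n$ bits. The argument then splits into a short upper bound on $\barR_{(1+\delta)/2}(\PARITY_n)$ via an explicit randomized algorithm and a matching worst-case lower bound on $\R_{(1+\delta)/2}(\PARITY_n)$ via Yao's minimax against the uniform distribution. Combining the two bounds yields the proposition immediately.

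For the upper bound I would exhibit the following two-outcome randomized algorithm $R$: with probability $\delta$ it queries all $n$ input bits and outputs their exact parity, and with probability $1-\delta$ it outputs a uniformly random bit from $\{0,1\}$ without making any query. On every input $x$ its success probability is $\delta \cdot 1 + (1-\delta)\cdot \frac12 = \frac{1+\delta}{2}$, and its worst-case expected cost is $\delta \cdot n + (1-\delta)\cdot 0 = \delta n$, giving $\barR_{(1+\delta)/2}(\PARITY_n) \le \delta n$. For the worst-case lower bound, I would argue that any randomized algorithm of worst-case depth strictly less than $n$ achieves success exactly $\frac12$ on the uniform distribution $\mu$ on $\{0,1\}^n$: for any deterministic tree $T$ of depth $d < n$ and any leaf $\ell$ of $T$, at least one input coordinate is unqueried on the root-to-$\ell$ path, and flipping that coordinate toggles the parity while leaving $T$'s label at $\ell$ unchanged, so exactly half of the inputs consistent with $\ell$ agree with $T$. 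Summing over leaves and taking a convex combination over the support of any randomized algorithm, we conclude that no algorithm of worst-case depth $< n$ can succeed with probability exceeding $\frac12$ on $\mu$, and Yao's minimax then gives $\R_{(1+\delta)/2}(\PARITY_n) \ge n$.

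Since trivially $\R_{(1+\delta)/2}(\PARITY_n) \le n$, this establishes $\barR_{(1+\delta)/2}(\PARITY_n) \le \delta n = \delta \cdot \R_{(1+\delta)/2}(\PARITY_n)$, exactly as claimed. There is no real obstacle here: the key conceptual point is that $\PARITY_n$ concentrates all of its difficulty into the \emph{last} unqueried bit, so a randomized mixture between ``compute the answer exactly'' and ``guess blindly'' realizes an optimal linear tradeoff between advantage and expected cost, while the same tradeoff is impossible in the worst case because any single tree that ever errs must leave some input bit unqueried on some branch.
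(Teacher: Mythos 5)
Your proposal is correct and matches the paper's own proof: both take $f$ to be parity and use the lazy algorithm that queries everything with probability $\delta$ and guesses otherwise, giving average cost $\delta n$ and success $\frac{1+\delta}{2}$. The only difference is that you spell out the worst-case lower bound $\R_{\frac{1+\delta}{2}}(\oplus_n)=n$ (via the unqueried-coordinate/Yao argument), which the paper simply asserts as standard.
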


\begin{proof}
Consider the parity function $\oplus_n \colon x \mapsto \sum_{i=1}^n x_i \pmod{2}$.
For any $\delta > 0$, $\R_{\frac{1+\delta}2}(\oplus_n) = n$.
But in the average-case complexity setting, the (lazy) algorithm that with probability $\delta$ queries everything to compute the value of $\oplus_n(x)$ and otherwise guesses the value of the function has average cost $\delta n = \delta \, \R(\oplus_n)$ and succeeds with probability $\delta + (1-\delta)\frac12 = \frac{1+\delta}2$ on every input. 
\end{proof}

Note that by contrast there are also other functions for which $\barR_{\frac{1+\delta}2}(f) = \Theta(\R_{\frac{1+\delta}2}(f))$. 
One example of such a function is the Majority function when $\delta = 1/n$.

There can also be asymptotic separations between the average-case and worst-case randomized query complexities in the small-error regime.

\begin{proposition}
For infinitely many values of $n \in \mathbb{N}$, there exists a partial function $f \colon X \to \{0,1\}$, $X \subseteq \{0,1\}^n$, such that for every $\epsilon = \epsilon(n) > 0$,
\[
	\barR_{1-\epsilon}(f) \le O \left( \frac{1}{\log(1/\epsilon)} \cdot \R_{1-\epsilon}(f) \right).
\]
\end{proposition}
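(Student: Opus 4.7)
The plan is to construct a concrete partial function $f \colon X \to \{0,1\}$ with $X \subseteq \{0,1\}^n$ and establish both a worst-case lower bound $\R_{1-\epsilon}(f) = \Omega(n)$ and an average-case upper bound $\barR_{1-\epsilon}(f) = O(n/\log(1/\epsilon))$ that together yield the stated ratio. The natural approach is to build $f$ as a composition that separates the two complexity measures via a logarithmic amplification gap. A specific candidate to try is $f = \AND_k \circ g$ where $k = \lceil \log(1/\epsilon) \rceil$, the input is partitioned into $k$ blocks of $m = n/k$ bits each, and $g$ is a promise-OR primitive on each block (e.g., defined on $\{0^m\} \cup \{e_i : i \in [m]\}$ with $g(0^m) = 0$ and $g(e_i) = 1$).

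First, I would establish the worst-case lower bound $\R_{1-\epsilon}(f) = \Omega(n)$ by Yao's minimax principle applied to a hard product distribution on $f^{-1}(1)$, in which each block independently places its single ``1-bit'' at a uniformly random position. A union-bound argument forces any randomized algorithm with overall error at most $\epsilon$ to verify each of the $k$ blocks with per-block error at most $\epsilon/k$, and by the standard lower bound for partial-$\OR$ verification, each such per-block task requires $\Omega((1-\epsilon/k) m)$ queries. Summing over blocks gives the claimed linear lower bound.

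Second, I would construct a randomized algorithm witnessing the average-case upper bound. The algorithm processes blocks sequentially: within each block it draws a random permutation of positions and queries in that order, stopping at the first $1$ encountered or terminating the whole computation as soon as a block has been exhausted without finding a $1$. The key step is to allocate the $\epsilon$ error budget across blocks and to introduce a two-stage structure (a ``cheap'' sampling phase with per-block truncation plus a low-probability ``expensive'' verification fallback). The expected per-block cost is made to be $O(m/k)$ by exploiting the fact that finding a $1$ via random permutation has expected cost only $O(m/|x|_1)$ on success, amortizing the total to $O(n/\log(1/\epsilon))$.

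The main obstacle will be engineering the primitive $g$ and the two-stage algorithm so that the ratio between $\R_{1-\epsilon}(f)$ and $\barR_{1-\epsilon}(f)$ is genuinely $\Theta(\log(1/\epsilon))$ rather than a constant factor. My preliminary analysis for the naive $\AND_k \circ \OR_m$ construction described above gives a ratio of only $\Theta(1)$, because the average cost on both $f^{-1}(1)$ and $f^{-1}(0)$ inputs is already $\Theta(n)$ for that primitive. Thus the correct construction likely requires a more intricate primitive $g$ whose expected cost on accepting inputs is smaller than its worst-case verification cost by a logarithmic factor, so that this gap gets multiplied across the $k$ blocks to produce the required $\log(1/\epsilon)$-factor separation at the top level. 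Identifying the right primitive (and verifying the corresponding Yao-style lower bound still gives $\Omega(n)$) is the technical heart of the proof.
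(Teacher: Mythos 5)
Your proposal has a genuine gap, and you have in fact identified it yourself: the candidate construction $\AND_k \circ g$ with a promise-$\OR$ primitive does not achieve the required separation, and you leave ``identifying the right primitive'' as an unresolved technical step, so the proof is incomplete. Beyond that, the strategy aims at a harder target than the statement demands: you try to force $\R_{1-\epsilon}(f) = \Omega(n)$ and $\barR_{1-\epsilon}(f) = O(n/\log(1/\epsilon))$, but the proposition only requires the \emph{ratio} $\R_{1-\epsilon}(f)/\barR_{1-\epsilon}(f)$ to be $\Omega(\log(1/\epsilon))$; neither quantity needs to grow with $n$, and insisting on a linear lower bound is what makes your construction hard to engineer.

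The paper's construction exploits exactly this freedom and is far simpler. The input is split into $n/2$ two-bit bins; the first bit of a bin marks it as ``useful'' or ``useless,'' the promise guarantees that half of the bins are useful, and all useful bins carry the same second bit $b$, which is the function value. An algorithm that probes uniformly random bins finds a useful one after a geometrically distributed number of trials with success parameter $\frac12$, so $\barR_{1-\epsilon}(f) = O(1)$. Conversely, against the uniform distribution over placements of the useful bins, an algorithm making $q$ queries fails to locate any useful bin with probability about $2^{-q}$, and conditioned on that failure it can do no better than guess $b$; hence worst-case error $\epsilon$ forces $q = \Omega(\log(1/\epsilon))$, i.e.\ $\R_{1-\epsilon}(f) = \Omega(\log(1/\epsilon))$. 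The $\log(1/\epsilon)$ gap thus comes from a single ``find one of many good coordinates'' gadget rather than from composing $\log(1/\epsilon)$ hard blocks; if you wish to salvage your approach, the missing primitive $g$ is essentially this gadget, and the $\AND_k$ outer layer is unnecessary.
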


\begin{proof}
Fix any even value of $n$ and define the set
\if\conf1
\begin{align*}
X = \Big\{x& \in \{0,1\}^n : \sum_{i=1}^{n/2} x_{2i-1} = \tfrac{n}2 \ \wedge \\
&\exists b \in \{0,1\} \forall i \in [\tfrac n2] \ x_{2i-1} = 1 \Rightarrow x_{2i} = b\Big\}. 
\end{align*}
\else
\[
X = \left\{x \in \{0,1\}^n : \sum_{i=1}^{n/2} x_{2i-1} = \tfrac{n}2 \wedge \exists b \in \{0,1\} \forall i \in [\tfrac n2] \ x_{2i-1} = 1 \Rightarrow x_{2i} = b\right\}.
\]
\fi
For each $x \in X$, we define $f(x)$ to be the value of $b$ for which the last condition is satisfied.

Informally, we can think of $f$ as containing $\frac n2$ bins. Half of the bins are useful in that the second bit of the bin determines the value of the function and the other half are useless. Furthermore, the first bit of each bit indicates whether the bin is useful or not.

An algorithm that computes $f$ with error at most $\epsilon$ must find a useful bin with probability at least $1-O(\epsilon)$.
This can be guaranteed only when the worst-case query complexity of the algorithm is $\Omega( \log \frac1{\epsilon})$.
So $\R_{1-\epsilon}(f) = \Omega( \log \frac1{\epsilon})$.

But the average-case cost of the algorithm that examines bins uniformly at random follows a geometric distribution with success parameter $\frac12$ and so $\avgR_{1-\epsilon}(f) = O(1)$.
\end{proof}

There are also total functions that exhibit an analogous gap between the worst-case and average-case query complexity; see~\cite{BB19}.

\subsection{Maximum Distributional Complexity}
\label{sec:max-dist-relations}

Any algorithm with score $\gamma$ and average cost at most $c$ on every input $x$ also has the same score and cost guarantees over all distributions $\mu$ on the inputs.
So for every $f$ and $\gamma$,
\[
	\avgR_\gamma(f) \le \barR_\gamma(f).
\]

This inequality is asymptotically optimal in the bounded-error regime.
It is also asymptotically optimal in the small-error regime, where Vereshchagin~\cite{Ver98} showed that 
$\barR_{1-\epsilon}(f) \le 2 \avgR_{1-\epsilon/2}(f)$. 

In the small-advantage regime, we have the following relation between the maximum distributional and average-case complexity.

\begin{proposition}
\label{prop:avgR-barR}
For every function $f \colon \{0,1\}^n \to \{0,1\}$ and $\delta = \delta(n) > 0$, the average-case and maximum distributional complexity measures for success probability $\frac{1+\delta}2$ are related by
\[
	\barR_{\frac{1+\delta}2}(f) \le O \left( \frac1{\delta} \cdot \avgR_{\frac{1+\delta}2}(f) \right).
\]
\end{proposition}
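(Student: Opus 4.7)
My plan is to deduce the stronger bound $\R_{(1+\delta)/2}(f) = O(1/\delta) \cdot \avgR_{(1+\delta)/2}(f)$, from which the claim will follow because $\barR_\gamma(f) \le \R_\gamma(f)$ for every $\gamma$ and $f$. I may assume $\delta \le 1/2$, since when $\delta > 1/2$ the success parameter $(1+\delta)/2 > 3/4$ lies in the bounded-error regime and the inequality follows from \cref{prop:wc-ac}. The approach will combine \cref{prop:bounded-costs} with Yao's minimax principle: the former gives, for each fixed distribution $\mu$, a randomized algorithm with bounded worst-case cost and good expected success on $\mu$, and the latter will convert this distribution-by-distribution guarantee into a single algorithm with good per-input success.

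First I will apply \cref{prop:bounded-costs} with $\gamma = (1+\delta)/2$ and its internal parameter set to $\delta' = \delta/2$ (which satisfies $\tfrac12 + \delta' \le \gamma \le 1 - \delta'$ when $\delta \le 1/2$). This produces, for every distribution $\mu$ on the domain of $f$, a randomized algorithm $A_\mu$ whose support consists only of deterministic decision trees of depth at most $D^* := (48/\delta) \cdot \avgR_{(1+\delta)/2}(f)$ and which satisfies $\barsuccess_f^\mu(A_\mu) \ge (1+\delta)/2 + \delta/12$. By a standard averaging argument, the support of $A_\mu$ must then contain at least one deterministic decision tree $T_\mu$ of depth at most $D^*$ with $\Pr_{x \sim \mu}[T_\mu(x) \ne f(x)] \le (1-\delta)/2 - \delta/12$.

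Next I will invoke Yao's minimax principle for the finite two-player zero-sum game in which player~1 plays a deterministic decision tree of depth at most $D^*$, player~2 plays an input $x \in \dom(f)$, and the payoff to player~2 is $\mathbf{1}[T(x) \ne f(x)]$. The previous step shows that for every mixed strategy $\mu$ of player~2, player~1 has the pure response $T_\mu$ with expected payoff at most $(1-\delta)/2 - \delta/12$. Yao's minimax therefore furnishes player~1 with a single mixed strategy $R^*$, i.e., a randomized algorithm of worst-case cost at most $D^*$, whose per-input error is at most $(1-\delta)/2 - \delta/12 < (1-\delta)/2$, so that its per-input success is at least $(1+\delta)/2$. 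This gives $\R_{(1+\delta)/2}(f) \le D^* = O(1/\delta) \avgR_{(1+\delta)/2}(f)$ and hence the desired bound on $\barR_{(1+\delta)/2}(f)$. The only delicate point will be the parameter bookkeeping between the $\delta$ in the proposition statement and the $\delta'$ required in the hypothesis of \cref{prop:bounded-costs}; beyond that, the argument is a routine application of Yao's principle to the finite class of depth-bounded decision trees.
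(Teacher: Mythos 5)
Your main argument (for $\delta \le 1/2$) is correct, and it takes a genuinely different route from the paper's. The paper adapts Vereshchagin's argument: for an arbitrary pair of distributions $(\mu,\nu)$ it runs an optimal algorithm for the mixture $\lambda\mu+(1-\lambda)\nu$ with $\lambda=\delta/4$, reads off a success guarantee against $\nu$ and a cost guarantee against $\mu$, and finishes with success amplification for $\avgR$. You instead combine \cref{prop:bounded-costs} (which already packages the boosting lemma and a Markov truncation into a single depth-bounded algorithm per distribution) with a von Neumann minimax over the finite class of labelled decision trees of depth at most $D^*$. This is clean, and it buys something extra: you obtain the stronger conclusion $\R_{\frac{1+\delta}2}(f)=O\bigl(\frac1\delta\,\avgR_{\frac{1+\delta}2}(f)\bigr)$, i.e.\ a \emph{worst-case} cost bound, rather than only a bound on $\barR$. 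The parameter bookkeeping checks out: with $\delta'=\delta/2$ the hypothesis $\frac12+\delta'\le\gamma\le 1-\delta'$ of \cref{prop:bounded-costs} holds exactly when $\delta\le\frac12$, the per-distribution success it returns is $\gamma+\delta'/6=\gamma+\delta/12$, and the advantage $\delta/12$ that survives the averaging and minimax steps is all you need.

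The one genuine gap is your treatment of $\delta>1/2$. \cref{prop:wc-ac} compares $\R_\gamma$ with $\barR_\gamma$; it says nothing about $\avgR_\gamma$, so it cannot by itself yield $\barR_{\frac{1+\delta}2}(f)\le O(\avgR_{\frac{1+\delta}2}(f))$. Worse, for $\delta$ close to $1$ the parameter $\gamma=\frac{1+\delta}2$ is in the small-error regime rather than the bounded-error regime, and there your strengthened statement about worst-case cost is actually false: the paper's ``useful bins'' partial function has $\R_{1-\epsilon}(f)=\Omega(\log\frac1\epsilon)$ while $\avgR_{1-\epsilon}(f)=O(1)$ and $\frac1\delta=O(1)$. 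So no variant of your approach, which ultimately bounds $\R$, can cover that regime; one needs a genuinely average-case argument there, such as the Vereshchagin-style mixing the paper uses. Since the proposition lives in the small-advantage discussion this is a boundary issue rather than a fatal one, but the deferral to \cref{prop:wc-ac} as written is not valid and should either be removed (restricting the claim to $\delta\le\frac12$) or replaced by a separate argument.
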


\begin{proof}
We adapt Vereshchagin's argument~\cite{Ver98} for the result mentioned above to the small-advantage regime.

Fix any pair of distributions $\mu$ and $\nu$.
For every $0 \le \lambda \le 1$, let $\pi_\lambda = \lambda \mu + (1-\lambda) \nu$.
And let $A_\lambda$ be a randomized algorithm with success at leat $\frac{1+\delta}2$ and average cost at most $\avgR_{\frac{1+\delta}2}(f)$ when both are measured against $\mu$.
The success probability of $A_\lambda$ satisfies
\[
	\barsuccess_f^{\pi_\lambda}(A_\lambda) = \lambda \cdot \barsuccess_f^{\mu}(A_\lambda) + (1-\lambda) \barsuccess_f^{\nu}(A_\lambda).
\]
Since $\barsuccess_f^\mu(A_\lambda) \le 1$, the success of $A_\lambda$ against $\nu$ is bounded below by
\[
	\barsuccess_f^\nu(A_\lambda) \ge \frac{(1+\delta)/2 - \lambda}{1-\lambda}.
\]
Similarly, the average cost of $A_\lambda$ satisfies
\[
	\barcost^{\pi_\lambda}(A_\lambda) = \lambda \barcost^\mu(A_\lambda) + (1-\lambda) \barcost^\nu(A_\lambda)
\]
so that
\[
	\barcost^\mu(A_\lambda) \le \frac1\lambda \cdot \avgR_{\frac{1+\delta}2}(f).
\]
By choosing $\lambda = \delta/4$, we obtain an algorithm $A$ with success probability $\barsuccess_f^\nu(A) \ge \frac{1+\delta/2}{2}$ and average cost $\barcost^\mu(A) \le \frac4\delta \avgR_{\frac{1+\delta}2}(f)$.
This means that 
\[
	\barR_{\frac{1+\delta/2}2}(f) \le \frac4\delta \cdot \avgR_{\frac{1+\delta}2}(f)
\]
and the result now follows from success amplification for average-case complexity.
\end{proof}

The bound in \cref{prop:avgR-barR} is qualitatively tight in that there can be a polynomial factor of $\delta$ that separates the average-case and maximum distributional complexity of functions.

\begin{proposition}
For infinitely many $n \in \mathbb{N}$, there exists a partial function $f \colon X \to \{0,1\}$, $X \subseteq \{0,1\}^n$ such that for all $\delta = \delta(n) > 0$,
\[
	\avgR_{\frac{1+\delta}2}(f) = O\left( \delta^{\frac16} \,\barR_{\frac{1+\delta}2}(f) \right).
\]
\end{proposition}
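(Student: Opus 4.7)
The strategy is to construct a specific partial function $f$ witnessing a gap between $\avgR_{(1+\delta)/2}(f)$ and $\barR_{(1+\delta)/2}(f)$ in the small-advantage regime. The key difficulty is that the ``lazy algorithm'' trick (with probability $\delta$ run a deterministic procedure, otherwise guess) already shows $\barR_{(1+\delta)/2}(f) \le \delta\cdot \D(f)$ for every $f$, so the symmetric lazy trick gives no separation on its own. A separation must exploit the fact that distribution-aware algorithms (which $\avgR$ allows) can beat the lazy bound by tailoring to $\mu$, while per-input algorithms (which $\barR$ requires) cannot. By \cref{prop:avgR-barR} we know $\avgR_{(1+\delta)/2}(f) \ge \Omega(\delta \cdot \barR_{(1+\delta)/2}(f))$, so any candidate must land between $\delta$ and $\delta^{1/6}$, and the target of $\delta^{1/6}$ shows the mixing argument in that proposition is not tight for all functions.

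The most natural construction is a ``hidden pointer'' partial function. Fix parameters $n$ and $k$ to be chosen later, and consider $f$ defined on a promise domain $X \subseteq \{0,1\}^{nk}$: each input is organized into $n$ blocks of $k$ bits; each block has a ``signature'' bit and $k-1$ ``data'' bits; the promise restricts inputs so that exactly one block is ``active,'' with the answer $f(x)$ determined by a specific coordinate of the active block. The worst-case deterministic cost is $\Theta(n + k)$ (you must find the active block and then read the answer). The crucial freedom is that under different distributions $\mu$, the marginal over the location of the active block can be highly non-uniform, allowing distribution-tailored algorithms to locate it quickly.

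For the upper bound, given $\mu$ one constructs an algorithm that samples a small set of block signatures (size chosen to be $\Theta(n^{a})$ for a carefully chosen exponent $a$), uses the conditional distribution of the active-block location under $\mu$ to prioritize the search, and uses the lazy trick on top to convert an ``amplified'' advantage into the target success $(1+\delta)/2$. Calibrating $n$, $k$, and the sampling parameter in terms of $\delta$ gives $\avgR^\mu_{(1+\delta)/2}(f) = O(\delta^{1-c})\cdot T$ for a constant $c$ that arises from the trade-off. For the lower bound on $\barR$, one argues that any randomized algorithm with per-input success $(1+\delta)/2$ on every input (in particular on inputs where $\mu$-structure cannot help) must spend $\Omega(T)$ expected queries to locate the active block — essentially a Markov-style argument combined with a reduction to a search problem on uniform distribution over active-block positions.

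The main obstacle is achieving the precise exponent $\frac16$: the lazy trick plus a direct ``sampling'' improvement naturally gives gaps of $\delta$ or $\sqrt{\delta}$, and hitting exactly $\delta^{1/6}$ requires balancing three competing parameters — the signature-sampling budget, the cost of reading the data inside a located block, and the amplification required to lift the conditional advantage to the target $(1+\delta)/2$. Thus the bulk of the work is the parameter calibration for this specific construction; once the right $n, k, a$ are chosen so that the upper bound matches $O(\delta^{1/6} T)$ and the lower bound gives $\Omega(T)$, the proposition follows by dividing.
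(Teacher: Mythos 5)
There is a genuine gap here, on two levels. First, the construction you propose is unlikely to support the claimed separation. For a ``hidden pointer'' function where a single active block determines the answer, the quantity $\avgR_{\frac{1+\delta}2}(f)$ is a \emph{maximum} over distributions $\mu$, and that maximum includes the uniform distribution over active-block locations. Under that distribution, knowing $\mu$ gives the algorithm no information about where the pointer is, so the distribution-tailored algorithm faces essentially the same search problem as the per-input algorithm; the only remaining slack is that $\avgR^\mu$ asks for average rather than per-input success, and that slack is exactly what the $O(1/\delta)$ mixing argument already accounts for. Your proposal never explains what structural feature would let the $\mu$-aware algorithm beat the per-input one on \emph{every} $\mu$ simultaneously, which is the whole content of the proposition. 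Second, even granting the construction, the proposal defers all of the substance (``calibrating $n$, $k$, and the sampling parameter gives $O(\delta^{1-c})$''; ``the bulk of the work is the parameter calibration'') without exhibiting either the upper-bound algorithm's cost analysis or the lower-bound argument, so there is no verifiable derivation of the exponent $\frac16$.

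The mechanism the paper actually uses is different and worth internalizing: the domain is split into two \emph{input types} (``many hard blocks'' vs.\ ``few hard blocks'') that demand incompatible strategies --- one type is solved cheaply by reading biased easy blocks, the other only by paying $1/\delta$ queries to evaluate a hard block. A per-input algorithm must be prepared for both types and provably pays $\Omega(\delta^{-1/2})$ in expectation on some input, whereas an algorithm tailored to a fixed $\mu$ knows the mixture weight $p$ of the two types and can commit to whichever single strategy is cheaper for that $p$; optimizing over the threshold on $p$ yields cost $O(\delta^{-1/3})$ for every $\mu$, and the ratio $\delta^{-1/3}/\delta^{-1/2}=\delta^{1/6}$ is where the exponent comes from. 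If you want to salvage your approach, you would need to build this kind of ``hedging between two strategies'' tension into the promise, not just a hidden location.
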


\begin{proof}
Consider the following construction in which we split the $n$ coordinates into $k$ blocks $B_1,\ldots,B_k$ of $n/k$ variables each. 
We construct the function so that there are two types of blocks: hard blocks and easy blocks.
The first variable in each block determines the type of a block, so that the algorithm can tell if a block is hard or easy with a single query.
We also associate a value with each block.
For each easy block, the value is determined by the second variable in the block, so that the value of an easy block is determined with only one additional query.
The value of a hard block, by contrast, is the parity of $1/\delta$ variables, so that this value can only be determined by making $1/\delta$ queries to the block.

Formally, writing $x \in \{0,1\}^n$ as $x = (x^{(1)},\ldots,x^{(k)})$ with each $x^{(i)} \in \{0,1\}^{n/k}$, we define the function $g \colon \{0,1\}^{n/k} \to \{0,1\}$ by
\[
g(y) = \begin{cases}
    y_2 & \mbox{if } y_1 = 0 \\
    y_2 \oplus \cdots \oplus y_{1/\delta+1} & \mbox{if } y_1 = 1.
\end{cases}
\]
Then the function $f \colon \{0,1\}^n \to \{0,1\}$ is
\[
f(x) = \mathrm{Maj}(g(x^{(1)}),\ldots,g(x^{(k)})).
\]
Let us consider the partial function obtained by restricting $f$'s domain $X$ to one of two possible types of input:
\begin{description}
\item[Many hard blocks:] 
\if\conf1 \phantom{Hack} \\ \fi
For inputs of this type, there are exactly $\sqrt{\delta} k$ blocks that are hard. And their values are balanced: exactly half of them have value 0 and the other half have value 1. The values of the easy blocks have $\delta$ bias towards the value of the function. (I.e., for inputs of this type, the function $f$ is the Gap Majority function on the values of the easy blocks.)

\item[Few hard blocks:] 
\if\conf1 \phantom{Hack} \\ \fi
For inputs of this type, exactly $\delta k$ blocks are hard. Each of these blocks has the same value, whereas the values of the easy blocks are balanced. As a result, the value of the function is the value of each of the hard blocks.
\end{description}

We now prove the two complexity bounds separately.
First, we establish the upper bound 
\[
\avgR_{\frac{1+\delta}2}(f) = O(1/\delta^3)
\]
on the maximum distributional complexity of $f$.
We do so by considering two natural algorithms for computing $f$.
Algorithm $A$ picks a random block. If the block is easy, it outputs its value; otherwise it just guesses.
Algorithm $B$ picks a random block. If the block is hard, it computes its value and outputs it; otherwise it just guesses.

Fix now any distribution $\mu$ on the domain of the partial function.
Let $p$ denote the probability that an input drawn from $\mu$ has many hard blocks.

Consider first the case where $p \ge \delta^{\frac16}$. 
Algorithm $A$ has success probability at least $\frac{1+ p \delta}2$ on inputs drawn from $\mu$. By extending the algorithm so that it queries the value of $O(1/p^2)$ easy blocks instead of a single one (ignoring hard blocks, as before), we obtain an algorithm $A'$ with success probability $\frac{1+\delta}2$ and average cost $O(1/p^2) \le O(\delta^{-1/3})$ with respect to $\mu$.

Consider now the case where $p < \delta^{\frac16}$. 
Algorithm $B$ has success probability at least $\frac{1+(1-p)\delta}2$ on inputs drawn from $\mu$. By considering the extension of this algorithm that queries two blocks to see if either of them is hard, and outputting the value of a hard block if one is found, we obtain algorithm $B'$ with success probability at least $\frac{1+(1-p)(2\delta - \delta^2)}{2}$ against $\mu$, which is at least $\frac{1+\delta}2$ when $p \le \delta^{\frac16}$. And the average cost of $B'$ over $\mu$ is 
\if\conf1
\begin{align*}
2 + p (2\sqrt{\delta}-\delta) (1/\delta) + &(1-p)(2\delta - \delta^2)(1/\delta) \\
&\le 2 + 2p/\sqrt{\delta} + 2 
= O(p/\sqrt{\delta}),
\end{align*}
\else
\[
2 + p (2\sqrt{\delta}-\delta) (1/\delta) + (1-p)(2\delta - \delta^2)(1/\delta) \le 2 + 2p/\sqrt{\delta} + 2 = O(p/\sqrt{\delta}),
\]
\fi
which is $O(\delta^{-1/3})$ when $p < \delta^{\frac16}$.

Combining the two observations above, we obtain that for any distribution $\mu$ on domain $X$ of $f$, there is an algorithm (either $A'$ or $B'$) with success probability $\frac{1+\delta}2$ against $\mu$ and average cost $O(\delta^{-1/3})$ with respect to the same distribution.
Therefore, by definition $O(\delta^{-1/3})$ is an upper bound on $\avgR_{\frac{1+\delta}2}(f)$.

To complete the proof of the proposition, we now show that the average-case complexity of $f$ is bounded below by
\[
\barR_{\frac{1+\delta}2}(f) = \Omega( \delta^{-1/2} ).
\]
To do so, assume on the contrary that there is an algorithm that computes $f$ with success probability at least $\frac{1+\delta}2$ and average cost $q = o(\sqrt{1/\delta})$.
Any algorithm with this cost can only check $o(\sqrt{1/\delta})$ blocks to see if they are hard or easy and it can compute the value of at most that number of easy blocks.
The goal of an algorithm seeking to minimize its average cost is to avoid computing the value of any hard block when there are many of them.

Without loss of generality, we can consider algorithms that decide whether to compute a hard block based on (i) the number of hard blocks observed; and (ii) the bias of the easy blocks that is computed.
Note that when there are many hard blocks, the easy blocks have bias $\delta$ towards the function value, whereas when there are few hard blocks, the easy blocks's values are balanced. However, with at most $o(1/\sqrt{\delta})$ easy block values, an algorithm cannot distinguish between these two cases with bias more than $o(1)$. Therefore, without loss of generality we can consider algorithms that use only the number of hard blocks observed to determine whether they should compute the value of one of those blocks.

When we run an algorithm with query complexity $o(\sqrt{1/\delta})$ on inputs that have few hard blocks,
the probability that the algorithm finds at least 1 block is at most $q \cdot \delta = o(\sqrt{\delta})$. 
And the probability that it finds more than one hard block is at most $\binom{q}{2} \delta^2 = o(\delta)$. 
So to have success probability at least $\frac{1+\delta}2$, its probability of computing a hard block (over all possible biases of the easy blocks) when there is exactly one such block must be at least $\sqrt{\delta}$.
But then the probability that the algorithm sees exactly 1 hard block on inputs that have many hard blocks is constant. Which means that its average cost on the uniform distribution over those inputs is at least $\sqrt{\delta} \cdot 1/\delta = 1/\delta^2$.
\end{proof}

We remark that the last proposition also implies a polynomial separation between average-case and maximum distributional complexity: there exists a partial function $f$ and success parameter $\gamma$ for which $\avgR_\gamma(f) = O \left( \barR_\gamma(f)^{2/3} \right)$.
We have not tried to determine if the same is true for total function (though we suspect that it is) and what is the largest polynomial separation that we could achieve between the two complexity measures.

\subsection{Score-Weighted Complexity}
\label{sec:relations-sR}

\subsubsection{Relation to Maximum Distributional Complexity}

We can give an upper bound as well as a lower bound of the score-weighted query complexity of randomized algorithms in terms of their maximum distributional complexity. 

\sRandAvgR*

\begin{proof}
For any distribution $\mu$, we can express the score-weighted cost of $f$ as 
\[
	\sR_\gamma^\mu(f) = \min_{R \,:\, \barsfmu(R) \ge \gamma} \frac{\E_{\ell \sim L(\mu,R)}[ \barsfmu(\ell) \cdot \cost(\ell) ]}{\barsfmu(R)}
\]
where $L(\mu,R)$ the distribution on the leaves $\ell = D_l(x)$ obtained by drawing $D \sim R$ and $x \sim \mu$.

Since scoring measures are bounded above by $1$,
\if\conf1
\begin{align*}
\sR_\gamma(f) 
 &= \max_\mu \sR_\gamma^\mu(f) \\
 &\le \max_\mu \min_{R \,:\, \barsfmu(R) \ge \gamma} \frac{\E_{\ell}[ 1 \cdot \cost(\ell) ]}{\gamma} \\
 &= \frac1\gamma \cdot \avgR_\gamma(f).
\end{align*}
\else
\[
	\sR_\gamma(f) 
 = \max_\mu \sR_\gamma^\mu(f)
 \le \max_\mu \min_{R \,:\, \barsfmu(R) \ge \gamma} \frac{\E_{\ell}[ 1 \cdot \cost(\ell) ]}{\gamma} = \frac1\gamma \cdot \avgR_\gamma(f).
\]
\fi
And since they are also bounded below by $\frac12$,
\if\conf1
\begin{align*}
\avgR_\gamma(f)
&= \max_\mu \avgR_\gamma^\mu(f) \\
&= 2 \cdot \max_\mu \min_{R \,:\, \barsfmu(R) \ge \gamma} \frac{\E_{\ell}[ \frac12 \cdot \cost(\ell) ]}{1} \\
&\le 2 \cdot \sR_\gamma(f). \qedhere
\end{align*}
\else
\[
\avgR_\gamma(f)
= \max_\mu \avgR_\gamma^\mu(f)
= 2 \cdot \max_\mu \min_{R \,:\, \barsfmu(R) \ge \gamma} \frac{\E_{\ell}[ \frac12 \cdot \cost(\ell) ]}{1}
\le 2 \cdot \sR_\gamma(f). \qedhere
\]
\fi
\end{proof}

\subsubsection{Relation to Worst-Case Query Complexity}

We can also upper bound score-weighted complexity with worst-case complexity for every function and for their direct products.

\sRandR*

\begin{proof}
Fix any distribution $\mu$ on the domain of $f^n$. 
Let $\calR_\mu$ denote the set of randomized algorithms with score at least $\gamma^n$ on every input of $f^n$, maximum cost $\R_{\gamma^n}(f^n)$.
The set $\calR_\mu$ is non-empty since by definition there is an algorithm that satisfies these two properties. 
And every leaf $\ell$ of the decision trees in the support of the randomized algorithms in this set have score $\sss_{f^n}^\mu(\ell) \ge (\frac12)^n > 0$ by the boundedness of scoring measures.

For any $R \in \calR_\mu$ and leaf $\ell$ of $R$,
\[
	\sss_{f^n}^\mu(\ell) \cdot \cost(\ell) 
 \le \sss_{f^n}^\mu(\ell) \cdot \R_{\gamma^n}(f^n).
\]
So
\if\conf1
\begin{align*}
\scost^\mu(R) 
&= \frac{\E_{\ell}[ \sss_{f^n}(\ell) \cdot \cost(\ell) ]}{\bars_{f^n}^\mu(R)} \\
&\le \frac{\E_{\ell}[ \sss_{f^n}(\ell) \cdot \R_{\gamma^n}(f^n) ]}{\bars_{f^n}^\mu(R)} \\
&= \R_{\gamma^n}(f^n)
\end{align*}
\else
\[
	\scost^\mu(R) 
	= \frac{\E_{\ell}[ \sss_{f^n}(\ell) \cdot \cost(\ell) ]}{\bars_{f^n}^\mu(R)} 
    \le \frac{\E_{\ell}[ \sss_{f^n}(\ell) \cdot \R_{\gamma^n}(f^n) ]}{\bars_{f^n}^\mu(R)} 
	= \R_{\gamma^n}(f^n)
\]
\fi
and $\sR_{\gamma^n}(f^n) = \max_\mu \sR_{\gamma^n}^\mu(f^n) \le \max_\mu \min_{R \in \calR_\mu} \scost^\mu(R) \le \R_{\gamma^n}(f^n)$.
\end{proof}

\subsection{Score and Discounted Score}

One of the basic properties of the discounted score measure that we will require in the proof of \cref{lem:maxDS-sR} is that for any score parameter $\gamma$ in the non-trivial range between the score of the guessing algorithm and $1$, there exists some discount rate $\alpha^*$ for which the maximum discounted score $\maxDS_{\alpha^*}^\mu(f)$ is achieved by a randomized algorithm $R$ with score exactly equal to $\gamma$.

\begin{proposition}
\label{prop:success-discount}
Fix any function $f$ and distribution $\mu$.
For every success parameter $\gamma$ that is at least as large as the success probability of the trivial guessing algorithm, there exists a discount rate $\alpha^*$ and a randomized algorithm $R^*$ that satisfies
\[
	\ds_{f,\alpha^*}^\mu(R^*) = \maxDS_{\alpha^*}^\mu(f)
	\qquad \mbox{and} \qquad
	\barsfmu(R^*) = \gamma.
\]
\end{proposition}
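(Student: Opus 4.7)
The plan is to prove \cref{prop:success-discount} via a compactness and intermediate-value argument on the set of maximizers $S(\alpha) := \{R : \ds_{f,\alpha}^\mu(R) = \maxDS_\alpha^\mu(f)\}$.

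First I would restrict attention to randomized algorithms whose decision trees have depth at most $m$ (which is without loss of generality, since repeating queries never helps), so that the space $\mathcal{R}$ of such algorithms is a compact convex finite-dimensional set on which $(R,\alpha) \mapsto \ds_{f,\alpha}^\mu(R)$ is jointly continuous and linear in $R$. For each $\alpha \ge 0$, $S(\alpha)$ is then a nonempty compact convex subset of $\mathcal{R}$, and since $R \mapsto \barsfmu(R)$ is also linear, the image $\{\barsfmu(R) : R \in S(\alpha)\}$ is a closed interval $[\underline{s}(\alpha), \bar{s}(\alpha)]$. A standard compactness argument then shows that whenever $\alpha_n \to \alpha_0$ and $R_n \in S(\alpha_n)$ with $R_n \to R_0$, we have $R_0 \in S(\alpha_0)$; consequently $\bar{s}$ is upper-semicontinuous and $\underline{s}$ is lower-semicontinuous on $[0,\infty)$.

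Next I would pin down the endpoint behavior. At $\alpha = 0$, the discounted score coincides with $\barsfmu$, so $S(0)$ consists exactly of the score-maximizing algorithms and $\underline{s}(0) = \bar{s}(0) = \max_R \barsfmu(R)$. Writing $\gamma_g$ for the score of the trivial ``guess at the root'' algorithm, for $\alpha$ large enough that $e^{-\alpha} < \gamma_g$, any algorithm with at least one query has all leaves at cost $\ge 1$ and thus discounted score at most $e^{-\alpha} < \gamma_g$, while the root-only algorithm attains discounted score exactly $\gamma_g$; so $S(\alpha) = \{\text{root-only}\}$ and $\underline{s}(\alpha) = \bar{s}(\alpha) = \gamma_g$. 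Given any target $\gamma \in [\gamma_g,\, \bar{s}(0)]$, define $\alpha^* := \sup\{\alpha \ge 0 : \bar{s}(\alpha) \ge \gamma\}$; by the endpoint analysis, $0 \le \alpha^* < \infty$. Upper-semicontinuity of $\bar{s}$ applied to a sequence approaching $\alpha^*$ from within the set yields $\bar{s}(\alpha^*) \ge \gamma$, and if we had $\underline{s}(\alpha^*) > \gamma$, lower-semicontinuity of $\underline{s}$ would force $\bar{s}(\alpha) \ge \underline{s}(\alpha) > \gamma$ on an open neighborhood of $\alpha^*$, contradicting the definition of $\alpha^*$. Hence $\underline{s}(\alpha^*) \le \gamma \le \bar{s}(\alpha^*)$, and by convexity of $S(\alpha^*)$ together with linearity of $\barsfmu$, a suitable convex combination of two optimizers realizing these endpoints produces the desired $R^* \in S(\alpha^*)$ with $\barsfmu(R^*) = \gamma$.

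The main obstacle I anticipate is that $\bar{s}$ and $\underline{s}$ can genuinely jump in $\alpha$, so a naive intermediate value theorem does not apply; the role of the semicontinuity-plus-convexity package above is precisely to patch those jumps by showing that at any critical $\alpha^*$, the optimizer set $S(\alpha^*)$ is convex and simultaneously contains witnesses of both the lower and upper endpoints of its score interval, so that mixing realizes every intermediate score.
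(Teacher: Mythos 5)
Your proof is correct and is essentially the paper's argument in different packaging: the sets $\{\alpha : \bar{s}(\alpha)\ge\gamma\}$ and $\{\alpha : \underline{s}(\alpha)\le\gamma\}$ are exactly the paper's closed sets $S_\ge$ and $S_\le$, your semicontinuity claims are its closedness claims, your supremum construction re-derives the fact that two nonempty closed sets covering the connected set $[0,\infty)$ must intersect, and the concluding convex combination of two optimizers realizing scores on either side of $\gamma$ is identical. The only point to patch is the boundary case where $\gamma$ equals the guessing score $\gamma_g$, in which your $\alpha^* = \sup\{\alpha : \bar{s}(\alpha)\ge\gamma\}$ would be $+\infty$; there the root-only algorithm already witnesses the claim for any sufficiently large discount rate.
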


\begin{proof}
Note first that since the discounted score of a randomized decision tree $R$ is by definition
\[
	\ds_{f,\alpha}^\mu(R) = \E_{D \sim R} \E_{x \sim \mu} [ \sfmu(D(x)) \cdot e^{-\alpha \, \cost(D(x))} ],
\]
there is always a deterministic decision tree $D$ in the support of $R$ with discounted score at least as large as that of $R$.
As a result, the maximum discounted score $\maxDS_\alpha^\mu(f)$ is always attained by a deterministic decision tree.

For each deterministic decision tree $T$, define the function $\phi_T \colon \alpha \mapsto \ds_{f,\alpha}^\mu(T)$.
The function $\phi_T$ is continuous.
Then for any two trees $T_1$ and $T_2$, the function $\phi_{T_1} - \phi_{T_2}$ is also continuous.
As a result, the set $\{\alpha : \ds_{f,\alpha}^\mu(T_1) \ge \ds_{f,\alpha}^\mu(T_2)\}$, which is the preimage of $\phi_{T_1} - \phi_{T_2}$ on the closed set $[0,\infty)$, is a closed set.

Let us write $S_\ge, S_\le \subseteq [0,\infty)$ to represent the sets of discount rates for which the maximum discount score is achieved by a deterministic algorithm with score at least and at most $\gamma$, respectively. 
The set $S_\ge$ is closed, since we can write it as the finite union of finite intersections of closed sets
\[
	S_\ge = \bigcup_{T_1 : \barsfmu(T_1) \ge \gamma} \bigcap_{T_2} \{ \alpha : \ds_{f,\alpha}^\mu(T_1) \ge \ds_{f,\alpha}^\mu(T_2)\}.
\]
Similarly, $S_\le$ is also closed.
And these two closed sets partition their universe, i.e., $S_\ge \cup S_\le = [0,\infty)$.
As a result, their intersection $S_\ge \cap S_\le$ is not empty.
Let $\alpha^*$ be any element in this intersection.

By construction, for our choice of $\alpha^*$ there exist two deterministic algorithms $T_1$ and $T_2$ that satisfy
\[
	\ds_{f,\alpha^*}^\mu(T_1) = \ds_{f,\alpha^*}^\mu(T_2) = \maxDS_{\alpha^*}^\mu(f)
\]
and
\[
	\barsfmu(T_1) = \gamma - \delta_1, \quad \barsfmu(T_2) = \gamma + \delta_2 
\]
for some parameters $\delta_1,\delta_2 \ge 0$.
Define the randomized algorithm $A$ to be the convex combination
\[
	A = \frac{\delta_2}{\delta_1+\delta_2} T_1 + \frac{\delta_1}{\delta_1 + \delta_2} T_2.
\]
The discounted score of $A$ is
\[
	\ds_{f,\alpha^*}^\mu(A) = \maxDS_{\alpha^*}^\mu(f)
\]
and its score is
\if\conf1
\begin{align*}
    \barsfmu(A) 
 &= \frac{\delta_2}{\delta_1+\delta_2} \barsfmu(T_1) + \frac{\delta_1}{\delta_1+\delta_2} \barsfmu(T_2) \\
  &= \frac{\delta_2(\gamma - \delta_1) + \delta_1(\gamma+\delta_2)}{\delta_1+\delta_2} = \gamma. \qedhere
\end{align*}
\else
\[
	\barsfmu(A) 
 = \frac{\delta_2}{\delta_1+\delta_2} \barsfmu(T_1) + \frac{\delta_1}{\delta_1+\delta_2} \barsfmu(T_2)
  = \frac{\delta_2(\gamma - \delta_1) + \delta_1(\gamma+\delta_2)}{\delta_1+\delta_2} = \gamma. \qedhere
\]
\fi
\end{proof}

\if\anon1
\else
\phantomsection\addcontentsline{toc}{section}{Acknowledgements}
\section*{Acknowledgements}
This research is supported in part by the Natural Sciences and
Engineering Research Council of Canada (NSERC), DGECR-2019-00027 and
RGPIN-2019-04804.\footnote{Cette recherche a été financée par le
Conseil de recherches en sciences naturelles et en génie du Canada
(CRSNG), DGECR-2019-00027 et RGPIN-2019-04804.}
\fi

\renewcommand{\UrlFont}{\ttfamily\small}
\let\oldpath\path
\renewcommand{\path}[1]{\small\oldpath{#1}}
\emergencystretch=2em 

\if\bibnonempty1{ 
\phantomsection\addcontentsline{toc}{section}{References} 
\printbibliography 
}\else{
\nocite{AA05} 
}
\fi

\end{document}